\documentclass[11pt]{article}

\usepackage[T1]{fontenc}
\usepackage[utf8]{inputenc}
\usepackage{lmodern}
\usepackage{graphicx}
\usepackage{booktabs}
\usepackage{natbib}
\usepackage{booktabs}
\usepackage{makecell}
\usepackage{adjustbox}
\usepackage{subcaption}
\usepackage{float}
\usepackage{thmtools}
\usepackage[colorlinks=true,
linkcolor=blue,
citecolor=blue,
urlcolor=blue]{hyperref}

\RequirePackage{amsthm,amsmath,amsfonts,amssymb,enumitem}
\usepackage[a4paper,margin=1in]{geometry}

\theoremstyle{plain}

\newtheorem{theorem}{Theorem}[section]
\newtheorem{lemma}[theorem]{Lemma}
\theoremstyle{definition}
\newtheorem{definition}{Definition}[section]
\newtheorem{example}{Example}[section]

\newtheorem{corollary}[theorem]{Corollary}
\newtheorem{proposition}[theorem]{Proposition}
\newtheorem{remark}[theorem]{Remark}

\title{On Response-Adaptive Targeting Strategies for Multi-Treatment Experiments}
\author{
	\textbf{Redouane Yagouti}\\
	Univ. Lille, Inria,  CNRS, Centrale Lille\\ 
	UMR 9189-CRIStAL, France\\
	\texttt{redouane.yagouti@inria.fr}
	\and
	\textbf{Rémy Degenne}\\
	Univ. Lille, Inria,  CNRS, Centrale Lille\\ 
	UMR 9189-CRIStAL, France\\
	\texttt{remy.degenne@inria.fr}
	\and
	\textbf{Emilie Kaufmann}\\
	Univ. Lille, CNRS, Inria,  Centrale Lille\\
	UMR 9189-CRIStAL, France\\
	\texttt{emilie.kaufmann@univ-lille.fr}
}

\date{\today}

\begin{document}
	
	\maketitle
	
	\begin{abstract}		

Response-adaptive randomization (RAR) in clinical trials aims to improve ethical and statistical efficiency by dynamically allocating patients to treatments based on observed outcomes. While RAR based on a target optimal allocation have been extensively studied for two-arms settings, their extension to multi-treatment experiments ($K \geq 2$) remains theoretically fragmented, with most existing methods focusing on specific algorithms or restricted target allocations.
In this paper, we introduce a unified framework for response-adaptive targeting, the $\alpha$-Rebalancing Targeting Strategies ($\alpha$RTS), which generalize the ERADE two-armed strategy of \cite{hu2009efficient}. We prove that all designs in this family share fundamental asymptotic properties: strong consistency, asymptotic normality of allocation proportions and treatment effect estimators, and asymptotic efficiency. To address sparse target regimes (where some treatments are asymptotically eliminated), we further propose $\alpha$RTS with Forced Exploration, a variant that guarantees infinite sampling for all treatments while preserving the asymptotic guarantees.
Extensive simulations illustrate the finite-sample behavior of $\alpha$RTS variants in a 3-armed context, highlighting in particular the critical role of forced exploration in sparse settings.

%

	\end{abstract}
	
	\section{Introduction}
	Equal randomization is the gold standard in confirmatory (phase III) clinical trials: each treatment, also called arm, is allocated to approximately the same number of patients, after which some statistical test is performed to identify promising treatments. Using a Response Adaptive Randomization (RAR) procedure, i.e., randomizing patients to treatments based on the observed outcomes for previous patients, has a great potential to make clinical trials more effective: fewer patients may be needed for inferring treatment effects or patient benefit could be increased. Still, the actual use of RAR in clinical trials has been debated for years, as explained  by \cite{Villar23Review}.

There exists many RAR procedures. The first is attributed to \citep{Thompson33}, and variants of this Bayesian strategy have been explored in clinical trials \citep{Thall07}. In the 1950s multi-armed bandits models were introduced to model adaptive selection between different populations \citep{Robbins52Freq} . Since then, bandit algorithms have become a research field of their own \citep{BanditBook}. However, many of these adaptive algorithms are actually \emph{not} randomized hence do not qualify as RAR procedures. Moreover, the type of performance measures considered in that literature are not the typical testing metrics common in clinical trials. Another class of RAR procedures are urn-based designs, such the Randomized Play the Winner (RPW) \citep{WeiDurham78RPW} that was actually used in a real trial \citep{Bartlett85ECMO} but with mixed results as the resulting allocation was extremely unbalanced. In the 2000s, a family of RAR procedures based on a so-called \emph{optimal allocation proportion} emerged \citep{Rosenberger01Opt}. Given some assumptions on the arms distributions and some statistical test performed at the end of the trial, these strategies first derive the allocation optimizing some criterion for a fixed power of the test. For example, minimizing the total number of patients needed for a fixed power of a Wald test in the case of two-armed binary outcomes leads to the so-called Neyman allocation, but many other target allocations have been considered under different assumptions, see \citep{pin2024response} for a recent survey. These \emph{target allocations} all have in common that they depend on the unknown parameters of the arms distributions, and thus a crucial component of the resulting strategy is a RAR procedure that, given a target allocation, ensures that the empirical allocation proportions converge to that target.

%
Our paper is about such \emph{adaptive targeting mechanisms}. Many such mechanisms were originally studied for two arms, such as the celebrated Biased Coin Design of \cite{Efron71BCD}, which targets a $(1/2,1/2)$ allocation. The work of \cite{hu2004asymptotic} studies the Doubly adaptive Biased Coin Design for general (smooth) targets allocations and further extend this design to the multi-armed case ($K \geq 2$). As argued in several works (e.g. \cite{HuRosenberger03VariabilityPower}), empirical allocations that are less variable lead to a better power of the statistical test using the adaptively collected data. This motivates the quest for \textit{asymptotic efficiency}, defined as achieving the minimal asymptotic variance for the allocation proportions subject to a given allocation target. For adaptive targeting mechanisms, this notion was formalized by \cite{hu2006asymptotically}, who derived lower bounds for the asymptotic covariance of allocation proportions. 
While DCBD is actually not asymptotically efficient, \cite{hu2009efficient} introduce the Efficient Randomized Adaptive Design (ERADE), in the two arms setting. Their procedure achieves the asymptotic efficiency bound while preserving randomization and allowing for general target allocation functions, including targets derived from optimality criteria. Under mild regularity conditions, ERADE yields strong consistency and asymptotic normality for both allocation proportions and estimators of treatment effects. However, its extension to more than $2$ arms was left as an open question.

\paragraph*{Contributions} In this paper, we propose an extension of ERADE for $K\geq 2$ arms. More precisely, we introduce a family of algorithms keeping the same spirit as ERADE, which we call $\alpha$-Rebalancing Targeting Strategies ($\alpha$RTS). 
Interestingly, the $\alpha$RTS framework recovers the extension of ERADE recently proposed by \cite{alkhnefr2025efficient} in a parallel work, but also allows for different algorithmic choices which we explore in the paper. In particular, we show that a Tracking strategy proposed in a different context in the multi-armed bandit literature \citep{garivier2016optimal} is also an $\alpha$RTS. We propose a unified asymptotic analysis of $\alpha$RTS, establishing consistency and asymptotic normality for both allocation proportions and treatment effect estimators, and we show that the resulting allocation proportions achieve the asymptotic efficiency lower bound.

Prior work only apply to target allocations that are strictly positive, and do not explicitly address sparse target regimes in which some treatments are asymptotically eliminated. Yet, in multi-armed setting, such sparse targets have been considered \citep{tymofyeyev2007implementing}. To overcome this limitation, we propose a variant of the $\alpha$RTS family that incorporates an adaptive forced exploration mechanism into the allocation rule. Despite this modification, we show that our proposed $\alpha$RTS-FE inherits the same asymptotic properties  under the same structural conditions and its asymptotic analysis can be extended to sparse target allocations. In this setting, the adaptive exploration mechanism guarantees infinite sampling of all treatments, ensuring the validity of estimator asymptotics even when certain treatments are asymptotically eliminated.

On the technical side, we emphasize that we propose a unified analysis of the $\alpha$RTS and $\alpha$RTS-FE families, that identifies some sufficient conditions for a RAR design to be asymptotically efficient (Lemma~\ref{lem:generic-conditions-main}). These conditions generalize some conditions that can be  extracted from the original ERADE analysis for $K=2$ arms, and we show that they apply to both type of strategies by introducing some appropriate notion of hitting time $\ell_{n,k}$.

In addition to our theoretical contributions, we complement the asymptotic analysis with a comprehensive simulation study designed to assess the finite-sample behavior of the proposed targeting strategies in a trial with $K=3$ arms with binary outcomes, where an homogeneity tests is performed. These experiments investigate both regular and sparse allocation regimes, comparing different algorithmic instantiations within the \(\alpha RTS\) family and their forced-exploration counterparts. In particular, we evaluate convergence properties of allocation proportions and plug-in estimates, as well as the impact on hypothesis testing performance and statistical power in multi-arm settings. The simulations highlight subtle yet practically important differences between designs that are asymptotically equivalent, and illustrate the critical role of forced exploration in ensuring reliable adaptation in sparse regimes.

\paragraph*{Outline}
The remainder of the paper is organized as follows. Section~\ref{sec:notations} introduces the model, notation, and standing assumptions. Section~\ref{sec:alpha-rts} presents the $\alpha$-Rebalancing Targeting Strategies and discusses particular instantiations. Section~\ref{sec:proof_vanilla} present our main theoretical results and sketch their proof. Section~\ref{sec:alpha-rts-fe} introduces the forced-exploration variant and shows that the asymptotic properties are preserved, including in sparse target allocation regimes. Finally in Section~\ref{sec:simulation} we propose some experiments for different algorithms of the two proposed families. Technical lemmas and detailed proofs are given in the appendices.

	\section{Model, Notation, and Assumptions}\label{sec:notations}
	We consider a sequential clinical trial with $K$ treatments, also called arms. Patients arrive sequentially and are assigned to treatments adaptively. In the following we denote \([K]:= \left\{1, \dots, K\right\}\).

\paragraph*{Arm's distributions}
For each patient $m$, let \(X_m = (X_{m,1}, \ldots, X_{m,K})\) denote the assignment vector, where $X_{m,k}=1$ indicates that the patient \(m\) is assigned to arm \(k \in [K]\) and $X_{m,k}=0$ otherwise, thus we have $\sum_{k=1}^{K} X_{m,k} = 1$.

Consider the sequence of \textit{i.i.d.} random variables \(\left\{\xi_{m,k}, m=1, \dots\right\}\)  where \(\xi_{m,k}\) is the response of the \(m^{th}\) patient on treatment \(k\). For a treatment $k \in [K]$, the parameter of interest in this paper is the expectation of these responses, we denote it as \(\theta_k := \mathbb{E}\left[\xi_{1,k}\right]\). We also define the vector of these parameters, \(\Theta = (\theta_1, \ldots, \theta_K) \in \mathbb{R}^K\).

The total number of patients assigned to each treatment up to time $n$ is denoted by \(N_n = (N_{n,1}, \ldots, N_{n,K})\) where \(N_{n,k} = \sum_{j=1}^n X_{j,k}\).
The normalized vector $N_n/n$ represents the observed allocation proportions at time $n$. Controlling the long-term behavior of these proportions is the central objective of adaptive targeting mechanisms.

\begin{remark}
Prior work often consider a more general setting in which both the observation $\xi_{m,k}$ 
and the parameter of the distribution $\theta_k$ can be multi-dimensional (e.g., $\theta_k$ is the mean and covariance of a Gaussian arm). 
For the sake of clarity, we focus our investigation on the (doubly) uni-dimensional case, for distribution parameterized by their means. This encompasses the example of arms that belong to a one-dimensional exponential family, that has received particular attention in the multi-armed bandit literature \citep{KLUCBJournal,garivier2016optimal}. However,  our analysis and results extend to these more complex settings without substantial modification.
\end{remark}

\paragraph*{Response-Adaptive Randomization (RAR) procedures}

Let $\mathcal{F}_m = \sigma\!\left((X_i,\xi_i)_{1 \le i \le m}\right)$ 
denote the filtration generated by the treatment assignments and observed 
outcomes of the first $m$ patients. A \textit{RAR} procedure is a sequence
$(p_m)_{m\ge1}$ of $\mathcal{F}_m$-measurable random vectors taking values 
in the probability simplex
\[
\Delta_K := \left\{ p \in [0,1]^K : \sum_{k=1}^K p_k = 1 \right\}.
\]
At step $m+1$, the treatment assignment $X_{m+1}$ is drawn conditionally 
on $\mathcal{F}_m$ according to $p_m$, that is,
\[
\mathbb{P}(X_{m+1} = k \mid \mathcal{F}_m) = p_{m,k},
\qquad k = 1,\dots,K.
\]

\paragraph*{Adaptive Targeting Mechanisms} Given some target function $\rho : \mathcal{H} \rightarrow \Delta_K$, i.e. a function such that, for any $z \in \mathcal{H}$,
\[
\rho(z) = (\rho_1(z), \ldots, \rho_K(z))
\]
satisfies $\sum_{k=1}^{K}\rho_{k}(z)=1$, we propose in this work RAR procedures under which the empirical allocation $N_n/n$ converges to the target allocation $v := \rho(\Theta)$. We refer to these procedures as adaptive targeting mechanisms.

Different choices of $\rho$ correspond to different design objectives, such as equal allocation, Neyman allocation for power, or ethically motivated rules \citep{pin2024response}.
Existing adaptive targeting mechanisms include DBCD \citep{hu2004asymptotic} and ERADE \citep{hu2006theory}, which we extend in the next section to $K \geq 2$ arms.

\paragraph*{Notation}
In the paper we will use the notations $o_p(\cdot)$ and $O_p(\cdot)$, which are analogues of the deterministic little-$o$ and big-$O$ notations for convergence in probability.
More formally, for a sequence $\{X_n\}$ of random variables and $\left(a_n\right)_{n \geq 1}$ a sequence of positive real numbers.
We say that $X_n = o_p(a_n)$ if for all $\varepsilon > 0$, \(\lim_{n \to \infty} \mathbb{P}\!\left( \frac{|X_n|}{a_n} > \varepsilon \right) = 0.\)
We say that $X_n = O_p(a_n)$ if for all $\varepsilon > 0$, there exists $M > 0$ such that \(\limsup_{n \to \infty} \mathbb{P}\!\left( \frac{|X_n|}{a_n} > M \right) \leq \varepsilon.\)

	\section{$\alpha$-Rebalancing Targeting Strategy (\(\alpha\)RTS)}\label{sec:alpha-rts}
	We now introduce our main algorithmic contribution: a general family of adaptive targeting mechanisms that extend the ERADE design to $K> 2$ arms. Its first ingredient is an estimator of the unknown parameters and target allocation.

\paragraph*{Sequential estimation} 
Since $\Theta$ is unknown, the parameters $\theta_k$ are estimated online. Given some initial value $\theta_{0,k}$, the sequential estimator for treatment $k$ at stage $m$ is
\begin{equation}\label{eq:dbcd-estimator}
	\hat{\theta}_{m,k} := \frac{\sum_{j=1}^m X_{j,k}\xi_{j,k} + \theta_{0,k}}{N_{m,k}+1}.
\end{equation}
The role of the initial values $\theta_{0,k}$ is to regularize the estimator when few or no patients have yet been assigned to treatment $k$. In practice, the initial values $\theta_{0,k}$ can be chosen either as a common constant across all treatments, i.e., $\theta_{0,k} = \theta_0$, or in a treatment-specific manner.

Letting \(\hat{\Theta}_m := (\hat{\theta}_{m,1}, \ldots, \hat{\theta}_{m,K}),\) our estimate of the target allocation $v=\rho(\Theta)$ is the plug-in estimator $\hat{\rho}_m := \rho(\hat{\Theta}_m).$


\begin{remark}
For our analysis to go through, an alternative estimator of $\theta_k$ can be used, as long as it has the Bahadur-type representation, i.e., it can be written
\begin{equation}\label{eq:bahadur-estimator}
	\hat{\theta}_{m,k} = \frac{1}{N_{m,k}}\sum_{j=1}^{m}X_{j,k}\xi_{j,k} + o(N_{m,k}^{-1/2}),\quad \text{as } m\to\infty.
\end{equation}
\end{remark}

The ERADE design, introduced by \cite{hu2009efficient} for $K=2$ uses a parameter $\alpha \in [0,1)$ and works as follows. It compares the current empirical allocation to arm $1$, $N_{m,1}/m$ to its estimated target $\hat{\rho}_{m,1}$ and whenever $\frac{N_{m,1}}{m}>\hat\rho_{m,1}$ it reduces the probability to select arm $1$ in the next round by setting $p_{m+1,1}=\alpha \hat\rho_{m,1}$.
More precisely, it sets
	\begin{equation*} \label{eq:erade}
		p_{m+1,1} =
		\begin{cases}
			\alpha \hat{\rho}_{m,1}, & \text{if } N_{m,1}/m > \hat{\rho}_m, \\[6pt]
			\hat{\rho}_{m,1}, & \text{if } N_{m,1}/m = \hat{\rho}_m, \\[6pt]
			1 - \alpha(1 - \hat{\rho}_{m,1}), & \text{if } N_{m,1}/m < \hat{\rho}_m,
		\end{cases}
	\end{equation*}
and $p_{m+1,2}=1-p_{m+1,1}$. Hence, the philosophy of ERADE is to use for $p_{m+1}$ a rebalanced version of the allocation vector $\hat{\rho}_m$ that reduces the probability to select an arm that has already be selected more than its current target estimate. This idea can in fact be generalized to more than $2$ arms, leading to our proposed $\alpha$RTS family.

\begin{definition} Let $K\geq 2$ and $\alpha\in[0,1)$. A design is an $\alpha$-Rebalancing Targeting Strategy ($\alpha$RTS) if it a RAR procedure whose allocation probabilities $p_{m+1}=(p_{m+1,1},\dots,p_{m+1,K}) \in \Delta_K$ satisfy the following: for all $m \geq Km_0$,
	\begin{align}
		&\forall k \in [K], \  \left(\frac{N_{m,k}}{m}>\hat\rho_{m,k}\quad \Longrightarrow\quad p_{m+1,k}\le \alpha\,\hat\rho_{m,k}\right) \label{eq:throttle}
	\end{align}
\end{definition}
\begin{remark}
	The parameter $m_0$ (sometimes called burn-in in a clinical trial context) is the number of initial samples from each arm that may be taken before starting adaptive randomization. It is included for practical purposes only and doesn't influence the proofs. We can safely take $m_0 = 0$ if $\theta_0$, the regularization parameter of the estimator, is chosen in an appropriate way.
\end{remark}
We now present a few examples of $\alpha$RTS with $K > 2$ arms. $\alpha$RTS imposes how the probability of over-sampled arms has to decrease (as in \eqref{eq:throttle}), but gives some flexibility in how to correspondingly increase the probability of some under-sampled arms. We first propose a design that increases these probabilities in proportion to their distance to the target.

\begin{example}[Distance Based Allocation]
	Given $\alpha \in [0,1)$, for all $m\geq Km_0$, for all $k \in [K]$,
	\begin{equation*}\label{eq:example}
		p_{m+1,k} \;=\; \alpha \hat{\rho}_{m,k} \;+\; (1-\alpha)\,
		\frac{\delta_{m,k}}{\sum_{i=1}^K \delta_{m,i}},
		\ \text{ where } \
		\delta_{m,k} \;=\; \max\!\left(0,\ \hat{\rho}_{m,k} - \frac{N_{m,k}}{m}\right)\;.
	\end{equation*}
	If for all $j$, $\delta_{m,j}=0$, then for every $j \in \{1,\ldots,K\}$, we choose \(p_{m+1,j} \;=\; \hat{\rho}_{m,j}.\) We use a distance-based design to ensure that arms that deviates the most from their target allocation are corrected more aggressively.
\end{example}

\smallskip

Instead of relying on the distance to the target, an independent work by \citep{alkhnefr2025efficient} proposed an extension of ERADE to $K>2$ that boosts the probability of under-sampled arms by adding the same quantity to all of them.

\begin{example}[ERADE 2025 \citep{alkhnefr2025efficient}]
	Given $\alpha \in [0,1)$\footnote{The paper actually set $\alpha \in (0,1)$ but there is no need for that}, for all $m\geq Km_0$, $k \in [K]$,
	
	\[
	p_{m+1,k} =
	\begin{cases}
		\alpha \hat{\rho}_{m,k} &
		\text{ if } \ \dfrac{N_{m,k}}{m} > \hat{\rho}_{m,k}, \\[6pt]
		\hat{\rho}_{m,k} &
		\text{ if } \ \dfrac{N_{m,k}}{m} = \hat{\rho}_{m,k}, \\[6pt]
		\hat{\rho}_{m,k} + (1-\alpha)\displaystyle\frac{\sum_{j \in S}\hat{\rho}_{m,j}}{|T|} &
		\text{ if } \ \dfrac{N_{m,k}}{m} < \hat{\rho}_{m,k}.
	\end{cases}
	\]
	where $S = \left\{ j \; \middle| \; \dfrac{N_{m,j}}{m} > \rho_{m,j} \right\}$ is the set of over-sampled arms,  $T = \left\{ j \; \middle| \; \dfrac{N_{m,j}}{m} < \rho_{m,j} \right\}$ is the set of under-sampled arms and $|T|$ denotes the cardinal of set $T$.
\end{example}

\smallskip

While \citet{alkhnefr2025efficient} proposed a specific extension of ERADE to $K>2$ arms, our $\alpha$RTS family provides more flexibility in how the sampling budget can be re-allocated to under-sampled arms. This flexibility also reveals an interesting connection with the
	multi-armed bandit literature. Indeed, by concentrating the rebalancing
	effort on a single under-sampled arm, one obtains a family of targeting
	rules that contains the D-Tracking sampling rule of \cite{garivier2016optimal} as a particular case.
	
	\begin{example}[Interpolated D-Tracking]
		Inspired by the D-Tracking rule of \cite{garivier2016optimal}, we define for a parameter $\alpha \geq 0$ the \textit{Interpolated D-tracking} rule by
		\begin{equation}\label{eq:alpha-d-tracking}
			p_{m+1,k}
			=
			\alpha \hat \rho_{m,k}
			+
			(1-\alpha)\mathbf 1_{\{k=k^\star\}},
		\end{equation}
		where \(k^\star
		=
		\arg\max_{a\in[K]}
		\left\{
		m\hat \rho_{m,a}-N_{m,a}
		\right\}
		\). The quantity $m\hat \rho_{m,a}-N_{m,a}$ measures the deficit of arm $a$
		relative to its target allocation. Hence, $k^\star$ is the arm whose
		current number of draws is furthest below its prescribed proportion.
		
		For all $\alpha$, we can easily verify that this design belongs to the $\alpha$RTS family. It clearly satisfies $\sum_{k=1}^{K} p_{m+1,k}=1$. Then, for any arm $k$ that is over-sampled, i.e. for which $N_{m,k}/m > \hat \rho_{m,k}$, we have $m\hat \rho_{m,k}-N_{m,k} < 0$, which implies that $k\neq k^\star$ and \(p_{m+1,k}
		=
		\alpha \hat \rho_{m,k}\).

		In the special case where $\alpha=0$, \eqref{eq:alpha-d-tracking} reduces to \(p_{m+1,k}
		=
		\mathbf 1_{\{k=k^\star\}},\)
		which coincides with the original D-tracking rule of
		\citet{garivier2016optimal} without including its forced exploration component (which will be discussed again in Section~\ref{sec:alpha-rts-fe}).
\end{example}

\section{Asymptotic analysis of $\alpha$RTS} \label{sec:proof_vanilla}

We now present our theoretical guarantees for $\alpha$RTS.

\subsection{Main Asymptotic Results}

In this section, we establish that any design belonging to the $\alpha$RTS family satisfies certain asymptotic properties under mild regularity conditions. More precisely, we show that the allocation proportions converge almost surely to the target allocation, while the parameter estimators and plug-in targets converge at explicit rates. We further characterize the asymptotic fluctuations of both the estimators and the allocation proportions through central limit theorems, and we prove that the resulting allocation proportions achieve the asymptotic efficiency lower bound.

Our analysis relies on two assumptions: one on the response distributions (Condition \hyperlink{CA}{\textbf{A}}) and one on the target allocation (Condition \hyperlink{CB}{\textbf{B}}). The latter explicitly excludes sparse targets, i.e. target allocations for which $\rho_k(z)=0$ for some coordinate $k$, for any value value $z$ that is a candidate estimated value for the parameter $\Theta$ under any allocation strategy.

\begin{description}
	\item[\hypertarget{CA}{\textbf{Condition A}}] For treatment \(k \in [K]\) the corresponding response \(\xi_{1,k}\) satisfies  $
	E|\xi_{1,k}|^{2} < \infty $. \\
	
	\item[\hypertarget{CB}{\textbf{Condition B}}] The target function $\rho$ and the regularization parameter $\theta_0$ of the estimator satisfy the following. Let $I_k = \{\theta_k\} \bigcup V_k$ where $V_k$ is the set of all possible values that can be observed for the estimator $\hat\theta_{n,k}$, for all $n$ and all realizations of the responses $\{\xi_{j,k}\}$. The domain $\mathcal{H} \subseteq \mathbb{R}^K$ of the target function is open and contains $I_1\times \dots \times I_K$. Moreover:
	\begin{itemize}
		\item \(\rho\) is twice differentiable on $\mathcal{H}$,
		\item for all $z \in I_1\times \dots \times I_k$, \ \(\rho(z) \in (0,1)^K\).
	\end{itemize}
\end{description}

In the following we present asymptotic results that are valid for any design belonging to the \(\alpha\)RTS family for \(\alpha \in [0,1)\). We first establish the fundamental convergence properties of the procedure. We show that the allocation proportions converge to the target allocation almost surely, while the parameter estimators and plug-in target estimator converge at explicit rates.

\begin{theorem}[Strong consistency and rates]\label{thm:LLN}
	Under conditions \hyperlink{CA}{\textbf{A}}--\hyperlink{CB}{\textbf{B}}, for \(n \to \infty\),
	\[
	\frac{N_{n,k}}{n}\xrightarrow{\text{a.s.}}v_k,\qquad
	\hat\Theta_n-\Theta = O\!\Big(\sqrt{\frac{\log\log n}{n}}\Big)\text{ a.s.},\qquad
	n(\hat\rho_n-v)=O\!\big(\sqrt{n\log\log n}\big)\text{ a.s.}
	\]
\end{theorem}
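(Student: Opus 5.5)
The plan has three steps. First, every arm is sampled infinitely often almost surely. Second, $N_{n,k}/n \to v_k$ almost surely. Third, the rates for $\hat\Theta_n$ and $\hat\rho_n$ follow from a martingale law of the iterated logarithm combined with a Taylor expansion of $\rho$.

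\textbf{Infinite sampling.} Suppose, for contradiction, that on an event of positive probability some set $A\neq\emptyset$ of arms is sampled only finitely often. For $k\in A$, $\hat\theta_{n,k}$ is eventually constant and equal to some value in $V_k$. The arms outside $A$ have estimators converging to $\theta_j$ by the strong law, since they are sampled infinitely often; a subsequence argument along their sampling times handles this. Hence $\hat\Theta_n$ converges to a point of $I_1\times\dots\times I_K$. By Condition B and continuity of $\rho$, $\hat\rho_{n,k}\to c_k>0$ for $k\in A$. But $N_{n,k}/n\to 0$, so eventually $N_{n,k}/n<\hat\rho_{n,k}$ for every $k\in A$.

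I still need the probability of the under-sampled arms in $A$ to be bounded below. The $\alpha$RTS condition only constrains over-sampled arms, so I argue through the complement. Arms $j\notin A$ that are over-sampled get $p_{n+1,j}\le\alpha\hat\rho_{n,j}$. The delicate case is when some arms outside $A$ are under-sampled as well; the mass then need not go to $A$. I handle it with an aggregate argument. The total deficit $\sum_k(\hat\rho_{n,k}-N_{n,k}/n)$ is zero, so the positive deficit of $A$, which is of order $\sum_{k\in A}c_k$, forces some arms to be over-sampled by a comparable amount. Those arms are throttled by a factor $\alpha<1$.

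To close the argument I use the conditional Borel–Cantelli lemma (Lévy's extension): $\sum_n p_{n+1,k}<\infty$ a.s. on the event $\{N_{n,k}\text{ bounded}\}$. Combining this with the throttling of the over-sampled arms should contradict the fact that $N_n$ grows by one each step. Precisely, the frequency of the over-sampled arms would have to keep growing while their sampling probability stays below $\alpha\hat\rho$. I expect this step to be the main obstacle, because the definition leaves the redistribution among under-sampled arms free.

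\textbf{Consistency.} Once every arm is sampled infinitely often, $\hat\Theta_n\to\Theta$ a.s. and so $\hat\rho_n\to v$. Fix $\epsilon>0$ and consider $D_{n,k}=N_{n,k}-n v_k$. Write $N_{n,k}=\sum_m p_{m,k}+M_{n,k}$, where $M_{n,k}$ is a martingale with bounded increments, so that $M_{n,k}=O(\sqrt{n\log\log n})$ a.s. On the steps where $N_{m,k}/m>v_k+\epsilon$, we have eventually $N_{m,k}/m>\hat\rho_{m,k}$, so the drift satisfies $p_{m+1,k}\le\alpha\hat\rho_{m,k}<v_k$. An upcrossing argument then shows that $\limsup N_{n,k}/n\le v_k+\epsilon$. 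The drift is strictly below $v_k$ during any excursion above the level, and the martingale fluctuation is $o(n)$, so the excursion cannot be sustained.

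The lower bound comes from summing over arms. Since $\sum_k N_{n,k}/n=1=\sum_k v_k$, the upper limits for all $k$ give $\liminf N_{n,k}/n\ge v_k-(K-1)\epsilon$. Letting $\epsilon\to0$ gives $N_{n,k}/n\to v_k$.

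\textbf{Rates.} The numerator of $\hat\theta_{n,k}-\theta_k$ is $\sum_j X_{j,k}(\xi_{j,k}-\theta_k)$ plus a bounded term. This sum is a martingale whose quadratic variation is $\mathrm{Var}(\xi_{1,k})N_{n,k}$, finite by Condition A. The LIL for martingales (or the Skorokhod embedding / time change to $\sum_{i\le N_{n,k}}(\xi_{i,k}-\theta_k)$) gives $O(\sqrt{N_{n,k}\log\log N_{n,k}})$. Dividing by $N_{n,k}+1\sim nv_k$, with $v_k>0$, yields $\hat\Theta_n-\Theta=O(\sqrt{\log\log n/n})$ a.s.

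Finally, $\rho$ is differentiable, hence locally Lipschitz near $\Theta\in\mathcal H$, which is open. Therefore $\|\hat\rho_n-v\|\le C\|\hat\Theta_n-\Theta\|$ eventually, and multiplying by $n$ gives $n(\hat\rho_n-v)=O(\sqrt{n\log\log n})$ a.s.
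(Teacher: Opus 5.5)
Your consistency and rate steps are sound. The fixed-level excursion bound is a clean variant of the paper's last-hitting-time argument with $\ell_{n,k}$ and $U_{n,k}$. But the first step, infinite sampling, is not proved, and in your ordering the consistency step depends on it. The aggregate sketch does not close. The $\alpha$RTS condition gives no lower bound on the probability of any particular under-sampled arm: a design may send all the non-throttled mass to one under-sampled arm outside $A$, as interpolated D-tracking with $\alpha=0$ does with its single chosen arm. Moreover, the set of over-sampled arms changes with $n$. So Lévy's conditional Borel--Cantelli lemma, combined with ``the arms outside $A$ are over-sampled in aggregate'', yields no contradiction by itself. To conclude you would need, for each individual $j\notin A$, that $\limsup_n N_{n,j}/n\le\lim_n\hat\rho_{n,j}$. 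That is exactly the per-arm bound your consistency step proves.

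The missing idea is that infinite sampling does not have to come first. The observation you make only under the contradiction hypothesis holds unconditionally. Pathwise, each $\hat\theta_{n,k}$ either converges to $\theta_k$ (arm sampled infinitely often) or is eventually constant with value in $V_k$. Hence $\hat\Theta_n$ converges a.s. to some $z\in I_1\times\cdots\times I_K$, and $\hat\rho_n\to u:=\rho(z)\in(0,1)^K$. This is why Condition B requires positivity of $\rho$ on all of $I_1\times\cdots\times I_K$, not only at $\Theta$.

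Now run your excursion argument with $u_k$ in place of $v_k$. It uses only three facts: $\hat\rho_n\to u$, the throttling $p_{m+1,k}\le\alpha\hat\rho_{m,k}$ on over-sampled steps, and $M_{n,k}=o(n)$. It gives $\limsup_n N_{n,k}/n\le u_k$ for every $k$. Summing over arms then gives $N_{n,k}/n\to u_k>0$, hence $N_{n,k}\to\infty$, hence $\hat\Theta_n\to\Theta$ and $u=v$. This is the paper's order of argument. With this reordering your first step is unnecessary, and the rest of your proof (excursion bound, time-changed LIL, first-order expansion of $\rho$ at $\Theta$) goes through as written.
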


%


We now refine these results by characterizing the fluctuations around the limit. The following theorem establishes asymptotic normality for both the estimators and the allocation proportions, as well as their joint distribution.

\begin{theorem}\label{thm:normality-gen-erade}
	We introduce the following notation
	\begin{itemize}
		\item \(G := \nabla (\rho)\big|_{\Theta}\) denotes the gradient of the target allocation $\rho$ with respect to $\Theta$
		\item \(V_k := \mathbb{V}ar[\xi_{1,k}]\)  is the variance for treatment \(k\) response
		\item \(V := diag\left(\frac{1}{v_1}V_1, \dots, \frac{1}{v_K}V_K\right)\) is a diagonal matrix collecting the scaled variances
		\item \(\Omega := 
		\begin{pmatrix}
			GVG^{\top } & GVG^{\top } \\
			GVG^{\top } & GVG^{\top }
		\end{pmatrix}\) is a block matrix built from \(GVG^{\top }\)
	\end{itemize}
	Under conditions \hyperlink{CA}{\textbf{A}}--\hyperlink{CB}{\textbf{B}}, we have as \(n \to \infty\)
	
	\begin{enumerate}[label=(\roman*)]
		\item  For every \(k \in [K]\),
		\begin{align}
			|N_{n,k} - n\hat{\rho}_{n,k}| &= o_P(\sqrt{n})\label{eq:thm-res-1}\\
			|N_{n,k} - n\hat{\rho}_{n,k}| &= O(\sqrt{n \log\log n})\quad a.s.\label{eq:thm-res-2}\\
			N_{n,k}-n v_k&=O(\sqrt{n\log\log n})\quad a.s.\label{eq:thm-res-3}
		\end{align}
		\item The asymptotic normality results:
		\begin{align}\label{eq:thm-res-4}
			\sqrt{n}(\hat{\Theta}_n - \Theta) &\xrightarrow{\mathcal{D}} N(0,V),\\
			\label{eq:thm-res-5}
			\begin{pmatrix}
				\sqrt{n} \left( \frac{N_n}{n} - v \right)\\
				\sqrt{n} (\hat{\rho}_n - v)
			\end{pmatrix}
			&\xrightarrow{\mathcal{D}} \mathcal{N}(0,\Omega).
		\end{align}		
	\end{enumerate}
\end{theorem}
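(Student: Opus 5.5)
The core of the argument is the tracking statement (i): once the deviation $D_{n,k}:=N_{n,k}-n\hat\rho_{n,k}$ is shown to be small, everything else comes from martingale limit theorems and the delta method. I would organise the proof as follows.

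\emph{Step 1: the drift.} Write
\[
D_{n+1,k}-D_{n,k}=\bigl(X_{n+1,k}-p_{n+1,k}\bigr)+\bigl(p_{n+1,k}-\hat\rho_{n,k}\bigr)-(n+1)\bigl(\hat\rho_{n+1,k}-\hat\rho_{n,k}\bigr).
\]
\begin{itemize}
\item The first term is a bounded martingale difference.
\item The third term is small. Condition B gives that $\rho$ is Lipschitz near $\Theta$, and $\hat\theta_{n,k}$ changes by $O(|\xi_{n+1,k}-\hat\theta_{n,k}|/N_{n,k})$. By Theorem~\ref{thm:LLN}, $N_{n,k}\ge c n$ eventually, so this term is $O(|\xi_{n+1,k}|+1)$ with conditional mean $O(\sqrt{\log\log n/n})$.
\item The second term is the restoring drift. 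On $\{D_{n,k}>0\}$, the $\alpha$RTS property \eqref{eq:throttle} gives $p_{n+1,k}-\hat\rho_{n,k}\le-(1-\alpha)\hat\rho_{n,k}\le -c<0$, using that $\hat\rho_{n,k}$ is bounded away from $0$ by Condition B and consistency.
\item On $\{D_{n,k}<0\}$ there is no direct control on arm $k$. However, $\sum_k D_{n,k}=0$ and $\sum_k(p_{n+1,k}-\hat\rho_{n,k})=0$, so some arm is oversampled. I would therefore work with the nonnegative potential $M_n:=\max_k D_{n,k}$ (or $\sum_k D_{n,k}^+$). This quantity has drift at most $-c$ whenever it is positive, up to the martingale and perturbation terms.
\end{itemize}

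\emph{Step 2: bounds from the drift.} This is the main obstacle, and the place where I would follow the ERADE argument. For the almost sure bound \eqref{eq:thm-res-2}, I would introduce the last time $\ell_{n}$ before $n$ at which $M$ was nonpositive. Between $\ell_n$ and $n$ the negative drift dominates, so
\[
M_n\le 1+\sup_{\ell_n\le j\le n}\Bigl(\sum_{i=\ell_n}^{j}(\text{martingale}+\text{perturbation})-c(j-\ell_n)\Bigr).
\]
The perturbation part is a further martingale plus an $O(\sqrt{\log\log n/n})$-per-step bias; over $n$ steps this bias totals $O(\sqrt{n\log\log n})$. The LIL for martingales (Condition A gives the second moments) then yields $M_n=O(\sqrt{n\log\log n})$.

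For \eqref{eq:thm-res-1} the bias is the difficulty, since it sums to $\sqrt{n\log\log n}$ and not $o(\sqrt n)$. I would handle it in two ways:
\begin{itemize}
\item Compare with \emph{centered} increments. The changes of $\hat\rho$ split into a martingale part $\nabla\rho\cdot(\xi-\hat\theta)X/N$ plus a second-order remainder $O(1/n)$ per step, which uses twice differentiability. The martingale part has a negligible conditional mean.
\item Exploit the negative drift. It makes the excursions of $M$ above $0$ have length $O_P(\cdot)$ controlled by a random walk with drift $-c$, so $M_n=O_P(\log n)$ or better, certainly $o_P(\sqrt n)$.
\end{itemize}
The bounded-negative-drift argument, \`a la Hu--Zhang--He, gives $\max_{j\le n}$ excursion heights of order $O(\log n)$ with high probability. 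This yields (i) for $\max_k D_{n,k}$, and hence also for $\min_k D_{n,k}$ via $\sum_k D_{n,k}=0$ and $D_{n,k}\ge -(K-1)M_n$. Finally, \eqref{eq:thm-res-3} follows by combining \eqref{eq:thm-res-2} with $n(\hat\rho_n-v)=O(\sqrt{n\log\log n})$ from Theorem~\ref{thm:LLN}.

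\emph{Step 3: normality.} For $\hat\Theta_n$, write
\[
\sqrt n(\hat\theta_{n,k}-\theta_k)=\frac{n}{N_{n,k}+1}\cdot\frac{1}{\sqrt n}\sum_{j\le n}X_{j,k}(\xi_{j,k}-\theta_k)+o(1).
\]
The vector martingale $\sum_{j\le n} X_j\odot(\xi_j-\Theta)$ has predictable quadratic variation $\mathrm{diag}(N_{n,k}V_k)/n\to\mathrm{diag}(v_kV_k)$ almost surely. It satisfies a Lindeberg condition by Condition A and the i.i.d.\ responses, so the martingale CLT applies. Slutsky with $N_{n,k}/n\to v_k$ then gives $N(0,V)$ in \eqref{eq:thm-res-4}.

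A Taylor expansion with Condition B gives $\sqrt n(\hat\rho_n-v)=G\sqrt n(\hat\Theta_n-\Theta)+o_P(1)$, whose limit is $N(0,GVG^\top)$. Statement (i) gives
\[
\sqrt n\Bigl(\frac{N_n}{n}-v\Bigr)=\sqrt n(\hat\rho_n-v)+o_P(1).
\]
So the joint vector is asymptotically $(Z,Z)$ with $Z\sim N(0,GVG^\top)$, which is exactly the block covariance $\Omega$ in \eqref{eq:thm-res-5}.
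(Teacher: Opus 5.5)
Your overall architecture matches the paper's: a last-exit time, negative drift on the oversampled excursion absorbing bias and noise, a one-sided bound closed by $\sum_k D_{n,k}=0$, then a martingale CLT with Slutsky and the delta method. Your Step~3, normalizing the vector martingale by $\sqrt n$ with predictable variation $\mathrm{diag}(N_{n,k}V_k)/n\to\mathrm{diag}(v_kV_k)$, is fine and slightly more direct than the paper's componentwise $\sqrt{N_{n,k}}$ scaling.

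\textbf{The potential step fails as written.} The maximum $\max_k D_{n,k}$ does not have drift at most $-c$ when positive. Take two oversampled arms tied at the maximum, with $\hat\rho_{n,1}=\hat\rho_{n,2}=\rho$ and $p_{n+1,1}=p_{n+1,2}=\alpha\rho$. Ignoring the $\hat\rho$-updates,
$\mathbb E[\max(\Delta D_{n,1},\Delta D_{n,2})\mid\mathcal F_n]=2\alpha\rho(1-\rho)-(1-2\alpha\rho)\rho=\rho(2\alpha-1)$, which is positive for $\alpha>1/2$.
Pathwise, the increment of a maximum is the increment of the arm that is the argmax \emph{after} the step. That index is not predictable, so the ``martingale part'' in your bound on the maximum is not a martingale. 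The alternative $\sum_k D_{n,k}^+$ fails as well: an arm sitting just below $0$ and being boosted contributes positive drift. No potential is needed, because lacking control on $\{D_{n,k}<0\}$ does not matter for an \emph{upper} bound on $D_{n,k}$. Let $\ell_{n,k}$ be the last $m\le n$ with $D_{m,k}\le 0$. On $(\ell_{n,k},n]$ arm $k$ is oversampled at every step, so \eqref{eq:throttle} applies throughout, and the first step costs at most $1$. This is exactly Lemma~\ref{lem:preliminary}(ii). Then $\sum_k D_{n,k}=0$ gives the lower bound, as you already do.

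\textbf{Your first bullet for \eqref{eq:thm-res-1} does not remove the bias.} The conditional mean of $(m+1)\nabla\rho\cdot X_{m+1}\odot(\xi_{m+1}-\hat\Theta_m)/N_m$ is of order $\|\hat\Theta_m-\Theta\|$, and that \emph{is} the bias. What works is your second bullet, made precise. Over the excursion, bias plus drift totals $(n-\ell_{n,k})\bigl(-(1-\alpha)v_k+o(1)\bigr)$, which is Lemma~\ref{lem:diff-U-approx}.

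\textbf{The noise over the window $[\ell_{n,k},n]$ still has to be controlled, and this is missing.} Since $\ell_{n,k}$ is a last-exit time and not a stopping time, the control must be uniform in the window length. For the noise martingales $\mathcal M$ (the paper's $M_{n,k}$ and $Q_n$) you need $\max_{j\le n}\{|\mathcal M_n-\mathcal M_{n-j}|-\varepsilon j\}=o_P(\sqrt n)$ for every $\varepsilon>0$. The paper obtains this from Doob's inequality on dyadic blocks, namely Lemmas~\ref{lem:mart-1} and~\ref{lem:mart-2} with $L_n\to\infty$ and $L_n=o(n)$.
The paper applies these to the unscaled $Q_n$ and $M_n$, and only afterwards pulls out the factor $\ell/N_n$, which is eventually bounded almost surely. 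This avoids a problem in your per-step increments, which carry random weights $(m+1)/N_{m,j}$ that are not uniformly $L^2$-bounded in early steps.

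\textbf{Drop the $O(\log n)$ excursion-height claim.} It is not available under Condition~\hyperlink{CA}{\textbf{A}}. The increments involve $\xi_m$, and with only second moments $\max_{m\le n}|\xi_m|$ is merely $o(\sqrt n)$. Only $o_P(\sqrt n)$ at time $n$ is needed.
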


\begin{remark}[asymptotic efficiency] A direct consequence of Theorem~\ref{thm:normality-gen-erade} is that the empirical allocation proportion $N_n/n$ is an asymptotically efficient estimator of the target $v$, at least when the arms distribution belong to an exponential family. Indeed, under this assumption, Theorem 1 of \cite{hu2006asymptotically} provides a lower bound on the asymptotic variance of $N_n/n$ as an estimator of $v$ under any RAR procedure, that is
\[G\,I(\Theta)^{-1}\,G^{\top }\]
where  \(I(\Theta) = \mathrm{diag}\left(v_1 I_1(\theta_1), \dots, v_K I_K(\theta_K)\right)\) and \(I_k(\theta_k)\) is the Fisher information for a single observation from treatment \(k \in [K]\). The condition $\mathbb{V}ar(\xi_{1,k}) = I_k(\theta_k)^{-1}$ is naturally satisfied for regular one-parameter exponential families when the parameter of interest $\theta_k$ is the mean parameter (see Proposition~\ref{prop:exp-fam} in Appendix~\ref{appnC}). Hence the minimal asymptotic variance is equal to $GVG^{\top}$, which is the asymptotic variance under any $\alpha$RTS design when conditions \hyperlink{CA}{\textbf{A}}--\hyperlink{CB}{\textbf{B}} are satisfied.
\end{remark}

Theorem~\ref{thm:normality-gen-erade} allows to perform asymptotic inference based on the data collected adaptively with our RAR procedures. Indeed, CLTs for both the estimated parameters and the proportions permit to give the asymptotic distributions of some test statistics that are transformations of these quantities (using the Delta method). In a two-armed case we are typically interested in the asymptotic calibration of a Wald test for testing the equality of two treatments. In Section~\ref{sec:simulation} we provide another example of calibration of an homogeneity test with 3 arms, see Lemma~\ref{lem:pearson}.

	\subsection{Proof Sketch}\label{sec:rts-proof}
Our analysis follows a similar template as the analysis of ERADE by \cite{hu2009efficient}. We first introduce the natural extension to $k$ arms of the quantities introduced in the analysis of ERADE:
	\begin{itemize}
		\item $p_{m,k} := \mathbb{E}[X_{m,k} \mid \mathcal{F}_{m-1}]$, the conditional probability of assigning treatment $k$ at step $m$
		\item $M_{n,k} := \sum_{m=1}^{n}(X_{m,k} - p_{m,k})$, a martingale capturing the deviation of the actual assignment from its conditional expectation
	\end{itemize}
as well as
\[U_{n,k} = \sum_{m=1}^{n-1} \alpha \hat{\rho}_{m,k} + M_{n,k} - n \hat{\rho}_{n,k}\;.\]

Our first contribution is to identify some sufficient conditions under which the conclusions of Theorem~\ref{thm:LLN} and Theorem~\ref{thm:normality-gen-erade} follow.

\begin{lemma} \label{lem:generic-conditions-main} Consider a RAR procedure under which there exists a sequence of random variables $(\ell_{n,k})_{n,k}$ satisfying the following properties:
\begin{enumerate}[label=(\roman*)]
		\item for all $k\in [K]$, \(\left(\ell_{n,k}\right)_{n \geq 1}\) is a non-decreasing sequence and for all \(n \in \mathbb{N}: \,\ell_{n,k} \leq n\)
		\item for all $k\in [K]$, $n \in \mathbb{N}$,
		\begin{equation}\label{eq:prelim-sketch}
			N_{n,k} - n\hat{\rho}_{n,k} \leq 1+N_{\ell_{n,k},k} - \ell_{n,k} \hat{\rho}_{\ell_{n,k},k} + U_{n,k} - U_{\ell_{n,k},k}
		\end{equation}
		\item for all $k\in [K]$, $n \in \mathbb{N}$,
		\begin{equation}\label{eq:condition-sketch}
			N_{\ell_{n,k},k} - \ell_{n,k} \hat{\rho}_{\ell_{n,k},k} = o(\sqrt{n}) \ a.s.
			\quad
		\end{equation}
	\end{enumerate}
Then the conclusions of Theorem~\ref{thm:LLN} and Theorem~\ref{thm:normality-gen-erade} hold.
\end{lemma}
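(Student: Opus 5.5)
The plan is to extract from conditions (i)--(iii) a two-sided bound $|N_{n,k}-n\hat\rho_{n,k}|$ of the required order. Everything else then follows from classical martingale limit theorems combined with a Taylor expansion of $\rho$, in the same way as in the ERADE analysis of \cite{hu2009efficient}.

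\emph{Step 1: controlling the increments of $U_{\cdot,k}$.} Write
\[
U_{n,k}-U_{\ell,k}=\sum_{m=\ell}^{n-1}\bigl(\alpha\hat\rho_{m,k}-\hat\rho_{m+1,k}\bigr)-\sum_{m=\ell}^{n-1} m\bigl(\hat\rho_{m+1,k}-\hat\rho_{m,k}\bigr)+M_{n,k}-M_{\ell,k}.
\]
The first sum is a negative drift, at most $-(1-\alpha)\sum_{m=\ell}^{n-1}\hat\rho_{m+1,k}$ up to a bounded term, and hence $\le 0$ up to $O(1)$. The martingale increments are bounded, so the LIL gives $M_{n,k}=O(\sqrt{n\log\log n})$ a.s. Freedman/Azuma-type maximal inequalities give $\max_{\ell\le n}|M_{n,k}-M_{\ell,k}|=O_P(\sqrt n)$. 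The middle sum is the delicate term. I would Taylor expand $\rho$ around $\Theta$ (using the twice differentiability in Condition B) and write $\hat\theta_{m,k}-\theta_k=Q_{m,k}/(N_{m,k}+1)+O(1/N_{m,k})$, where $Q_{m,k}=\sum_{j\le m}X_{j,k}(\xi_{j,k}-\theta_k)$ is a square-integrable martingale by Condition A. After summation by parts, this term can then be handled like the martingale term, namely $O(\sqrt{n\log\log n})$ a.s., once each $N_{m,k}$ grows linearly.

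\emph{Step 2: bootstrapping, using the simplex constraint.} Since $\sum_k N_{n,k}=n$ and $\sum_k\hat\rho_{n,k}=1$, we have $\sum_k(N_{n,k}-n\hat\rho_{n,k})=0$. Any upper bound for all $k$ therefore gives a lower bound of the form $N_{n,k}-n\hat\rho_{n,k}\ge-(K-1)\max_j(\cdot)$. I would proceed in three passes.

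1. A crude pass uses only the boundedness of $\hat\rho$ and the sign of the drift to get $N_{n,k}-n\hat\rho_{n,k}\le o(n)$. Suppose arm $k$ were sampled only finitely often. Then $\hat\rho_{n,k}$ would freeze at some $\rho_k(z)>0$ with $z\in I_1\times\dots\times I_K$, so $N_{n,k}-n\hat\rho_{n,k}\to-\infty$ linearly. This contradicts the lower bound, so every $N_{n,k}\to\infty$.
2. The LLN then gives $\hat\Theta_n\to\Theta$, hence $\hat\rho_n\to v$ and $N_{n,k}/n\to v_k>0$.
3. With linear growth, the LIL for $Q_{n,k}$ yields $\hat\Theta_n-\Theta=O(\sqrt{\log\log n/n})$, and the Lipschitz property of $\rho$ near $\Theta$ yields the rate for $\hat\rho_n$. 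Feeding this back into Step 1 together with (iii) gives \eqref{eq:thm-res-2} and \eqref{eq:thm-res-1}. Then \eqref{eq:thm-res-3} follows by the triangle inequality, which completes Theorem~\ref{thm:LLN}.

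\emph{Step 3: CLTs.} Given \eqref{eq:thm-res-1}, $\sqrt n(N_n/n-v)=\sqrt n(\hat\rho_n-v)+o_P(1)$. Further, $\sqrt n(\hat\rho_n-v)=G\sqrt n(\hat\Theta_n-\Theta)+o_P(1)$, and $\sqrt n(\hat\theta_{n,k}-\theta_k)=\sqrt n\,Q_{n,k}/N_{n,k}+o_P(1)$. The martingale CLT applies to the vector $(Q_{n,k}/\sqrt n)_k$ with conditional variance $\sum_m p_{m,k}V_k/n\to v_kV_k$, and its components are orthogonal across $k$. Dividing by $N_{n,k}/n\to v_k$ gives $N(0,V)$, which is \eqref{eq:thm-res-4}. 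The identical leading terms then give the degenerate block covariance $\Omega$ in \eqref{eq:thm-res-5}.

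\emph{Main obstacle.} I expect the main difficulty to be the middle term $\sum_m m(\hat\rho_{m+1,k}-\hat\rho_{m,k})$ in Step 1, which must be bounded uniformly over the random starting index $\ell_{n,k}$ before the rates are known. I would first try summation by parts to convert it into $n\hat\rho_n-\ell\hat\rho_\ell-\sum\hat\rho_{m+1}$. This reduces the problem to oscillation bounds for $\hat\rho$ on $[\ell,n]$, which the martingale representation of $\hat\Theta$ controls once Step 2 has established linear growth.
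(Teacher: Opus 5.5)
\textbf{There is a genuine gap at \eqref{eq:thm-res-1}.} Your route to Theorem~\ref{thm:LLN}, to the almost-sure rates \eqref{eq:thm-res-2}--\eqref{eq:thm-res-3}, and to the CLT \eqref{eq:thm-res-4} for $\hat\Theta_n$ is essentially sound. The gap is the bound $|N_{n,k}-n\hat\rho_{n,k}|=o_P(\sqrt n)$, which is exactly what Step~3 needs to get the degenerate covariance $\Omega$, i.e.\ efficiency.

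In Step~1 you keep only the \emph{sign} of the drift (``$\le 0$ up to $O(1)$''). You then bound the remaining terms uniformly in the starting index, e.g.\ $\max_{\ell\le n}|M_{n,k}-M_{\ell,k}|=O_P(\sqrt n)$. Feeding such bounds back can only give $N_{n,k}-n\hat\rho_{n,k}\le O_P(\sqrt n)$, never $o_P(\sqrt n)$. No sharper uniform estimate will fix this. A priori nothing prevents $n-\ell_{n,k}$ from being of order $n$, and on such windows both $M_{n,k}-M_{\ell_{n,k},k}$ and $\ell_{n,k}(\hat\rho_{\ell_{n,k},k}-\hat\rho_{n,k})$ genuinely fluctuate at scale $\sqrt n$.

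A sanity check shows the argument must use $\alpha<1$ quantitatively. Every property you invoke (drift nonpositive up to $O(1)$, fluctuations $O_P(\sqrt n)$) also holds for $\alpha=1$, where your first sum is just $\hat\rho_{\ell,k}-\hat\rho_{n,k}$. For instance it holds for the design $p_{m+1}=\hat\rho_m$. Yet for that design $N_n-n\hat\rho_n$ contains the martingale $M_n$ and has nondegenerate $\sqrt n$-fluctuations.

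\textbf{The missing idea} is the one in the paper's Lemmas~\ref{lem:diff-U-approx} and~\ref{lem:U-ineq}: keep the drift and measure fluctuations relative to the window length.
\begin{enumerate}
\item With $\ell=\ell_{n,k}$, write $U_{n,k}-U_{\ell,k}=(n-\ell)[-(1-\alpha)v_k+o(1)]+(M_{n,k}-M_{\ell,k})+\ell(\hat\rho_{\ell,k}-\hat\rho_{n,k})$.
\item Reduce the last term to $o(1)(n-\ell)+C\|Q_n-Q_\ell\|+o_P(\sqrt n)$.
\item Fix a deterministic $L_n\to\infty$ with $L_n=o(n)$. The Doob-type maximal inequalities of Lemma~\ref{lem:mart-1} give $\max_{m<L_n}\|Q_n-Q_{n-m}\|=O_P(\sqrt{L_n})=o_P(\sqrt n)$ and $\max_{L_n\le m<n}\|Q_n-Q_{n-m}\|/m=O_P(L_n^{-1/2})=o_P(1)$, and likewise for $M_{\cdot,k}$.
\item Combined with the deterministic splitting of Lemma~\ref{lem:mart-2}, this gives $\|Q_n-Q_{\ell}\|+|M_{n,k}-M_{\ell,k}|\le o_P(1)(n-\ell)+o_P(\sqrt n)$.
\end{enumerate}
Short windows are therefore small outright. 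On long windows the $o_P(1)(n-\ell)$ part is absorbed by $-(1-\alpha)v_k(n-\ell)$, which uses $v_k>0$ and $\alpha<1$. This yields $U_{n,k}-U_{\ell_{n,k},k}\le o_P(\sqrt n)$. After that, (ii), (iii) and the simplex identity give \eqref{eq:thm-res-1} as you intended. Your summation by parts does isolate the right object, the oscillation $\ell(\hat\rho_{\ell,k}-\hat\rho_{n,k})$, but it must be paired with the drift rather than bounded on its own.

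\textbf{A smaller, fixable point} concerns your crude pass. Boundedness of $\hat\rho$ does not give $N_{n,k}-n\hat\rho_{n,k}\le o(n)$, because your middle term can be of order $n$ if $\hat\rho$ merely oscillates. You need $\hat\rho_n\to\rho(z)$ for some $z\in I_1\times\dots\times I_K$. This holds because each $\hat\theta_{n,j}$ either converges to $\theta_j$ or is eventually constant. Note that $\hat\rho_{n,k}$ converges but need not ``freeze'', since it depends on all coordinates of $\hat\Theta_n$.
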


We obtained these conditions by isolating the key arguments that are used in the ERADE analysis for $K=2$ arms, and by showing that they can actually be naturally be extended to $K$ arms. In Lemma~\ref{lem:preliminary} in Appendix~\ref{appnA}, we show that these conditions are satisfied for the random sequence
\[\ell_{n,k} := \max \left\{\, m \leq n ; \, \frac{N_{m,k}}{m} \leq \hat{\rho}_{m,k}  \,\right\}\]
for any $\alpha$RTS design. In words $\ell_{n,k}$ is the last time before \(n\) at which treatment \(k\) is under-sampled: its allocation proportion is less than or equal to the estimated target proportion.

\smallskip

We now briefly sketch the proof of Lemma~\ref{lem:generic-conditions-main}. The detailed proof that Theorem~\ref{thm:LLN} holds can be found in Appendix~\ref{proof:thm1}. The first step to establish the convergence of the empirical proportions is to show that the sequence $\hat{\rho}_n$ has a limit $u \in \Delta_K$ that is non-sparse, i.e. that satisfies $u_k \neq 0$ for all $k$. The argument there relies on both Condition \hyperlink{CA}{\textbf{A}} (to apply the law of large number) and Condition \hyperlink{CB}{\textbf{B}}.

Then, combining Inequalities~\eqref{eq:prelim-sketch} and \eqref{eq:condition-sketch} yields that for all $k$
\begin{equation}\label{step-main}\frac{N_{n,k}}{n} - \hat{\rho}_{n,k} \leq \frac{U_{n,k}- U_{\ell_{n,k},k}}{n} + o\left(\frac{1}{\sqrt{n}}\right).\end{equation}
Starting from the definition of $U_{n,k}$, we can now apply the law of large number to the martingale $M_n$ and use the convergence of $\hat{\rho}_{n}$ to prove that $\lim_{n\rightarrow \infty} \frac{U_{n,k}}{n} = -(1-\alpha)u_k \in (0,1)$. Some algebra (Lemma~\ref{lem:convergence}) further permits to show that the right-hand side of Equation~\eqref{step-main} tends to zero. Using that both $N_n/n$ and $\hat{\rho}_n$ are probability vectors, it follows that $\lim_{n\rightarrow 0} \frac{N_{n,k}}{n} - \hat{\rho}_{n,k} = 0$ and that ${N_{n}}/{n}$ also converges to $u \in (0,1)^K$.

A straightforward consequence is that $\forall k \in [K], N_{n,k} \to \infty$ a.s., hence $\hat{\Theta}_n$ converges to $\Theta$ and by continuity of \(\rho\) we obtain that \(\hat{\rho}_n = \rho(\hat{\Theta}_n) \to \rho(\Theta) = v \in (0,1)^K\), hence for all $k$,
	\[
	\lim_{n \to \infty} \frac{N_{n,k}}{n} = \lim_{n \to \infty} \hat{\rho}_{n,k} = v_k\;.
	\]
	The convergence rates follows from the Law of Iterated Logarithm.

	\smallskip

	The detailed proof that Theorem~\ref{thm:normality-gen-erade} holds can be found in Appendix~\ref{proof:thm2}. To get the more precise rates in statement (i), we need a more careful control on the increment of the auxiliary process $U_{n,k}$. Namely, we prove through a series of (rather technical) lemmas that
	\[
	U_{n,k}-U_{\ell_{n,k},k}\leq o_p(\sqrt n),
	\qquad
	U_{n,k}-U_{\ell_{n,k},k}\leq O(\sqrt{n\log\log n}) \quad a.s.
	\]
	which yields the following using the identities $\sum_{k=1}^K N_{n,k}=n$ and $\sum_{k=1}^K \hat\rho_{n,k}=1$:
	\[
	N_{n,k}-n\hat\rho_{n,k}=o_p(\sqrt n)\;.
	\]
    This proves the statements \eqref{eq:thm-res-1}--\eqref{eq:thm-res-2}. 	Combined with the rate \(n(\hat\rho_n-v)=O(\sqrt{n\log\log n})\)
	from Theorem \ref{thm:LLN} yields \eqref{eq:thm-res-3}.
	
	To establish the asymptotic normality in (ii), we prove the following lemma that gives the joint CLT of the components of the estimator \(\hat{\Theta}_n\) scaled with \(\sqrt{N_{n,k}}\). The proof of Lemma \ref{lem:componentwise} relies on a martingale-CLT argument and is given in Appendix~\ref{proof:thm2}.
	
		\begin{restatable}{lemma}{jointcltlemma}[Joint CLT with Componentwise Scaling]\label{lem:componentwise} 
			Consider a RAR procedure under which $N_{n,k}\to\infty$ a.s. for all $k \in [K]$, then we have
			\[
			D_n(\hat\Theta_n-\Theta)\xrightarrow{\mathcal{D}}
			\mathcal N\!\big(0,\mathrm{diag}\left(V_1,\dots,V_K\right)\big)
			\]
			where $D_n:=\mathrm{diag}(\sqrt{N_{n,1}},\dots,\sqrt{N_{n,K}})\;.$
		\end{restatable}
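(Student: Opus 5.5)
Throughout, write $\varepsilon_{j,k}:=\xi_{j,k}-\theta_k$, so that $\mathbb{E}[\varepsilon_{j,k}]=0$ and $\mathbb{V}ar(\varepsilon_{j,k})=V_k$. From the definition of the estimator \eqref{eq:dbcd-estimator},
\[
\hat\theta_{n,k}-\theta_k=\frac{\sum_{j=1}^n X_{j,k}\varepsilon_{j,k}+\theta_{0,k}-\theta_k}{N_{n,k}+1}.
\]
Hence
\[
\sqrt{N_{n,k}}(\hat\theta_{n,k}-\theta_k)=\frac{\sqrt{N_{n,k}}}{N_{n,k}+1}\,S_{n,k}+o(1)\ \text{a.s.},
\qquad
S_{n,k}:=\sum_{j=1}^n X_{j,k}\varepsilon_{j,k}.
\]
Since $N_{n,k}\to\infty$ a.s., the remainder $(\theta_{0,k}-\theta_k)\sqrt{N_{n,k}}/(N_{n,k}+1)$ vanishes. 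By Slutsky it therefore suffices to show that the vector $(S_{n,k}/\sqrt{N_{n,k}})_{k\in[K]}$ converges in law to $\mathcal N(0,\mathrm{diag}(V_1,\dots,V_K))$.

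The difficulty is that the normalisation $N_{n,k}$ is random and not asymptotically deterministic: we are not assuming $N_{n,k}/n$ converges. I would therefore avoid a martingale CLT in calendar time and use the classical \emph{skeleton} (random time change) argument instead. For each arm $k$, let $\tau_k(i)$ be the index of the $i$-th patient assigned to arm $k$, and set $Y_{i,k}:=\varepsilon_{\tau_k(i),k}$.

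The key observation is that, for all $k$ simultaneously, the family $\{Y_{i,k}\}_{i\ge1,k\in[K]}$ is i.i.d. with $Y_{i,k}$ distributed as $\varepsilon_{1,k}$. This holds because $\xi_{m,k}$ is independent of $\mathcal F_{m-1}$ and of $X_m$ (the randomization uses only $p_{m-1}$ and external coins). The proof is a standard optional-sampling/induction argument: conditioning on the past up to $\tau_k(i)$, the next response on arm $k$ is a fresh draw. Independence across arms follows since the responses of different patients are independent, and for a given patient we only ever observe one coordinate. To be safe, I would couple by writing $\xi_{m,k}$ as the $N_{m,k}$-th element of an i.i.d. stack of arm-$k$ rewards generated in advance; this representation has the same law as the original model. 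This lets us write
\[
S_{n,k}=\sum_{i=1}^{N_{n,k}} Y_{i,k}.
\]

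Next, apply the multivariate Donsker theorem to the independent random walks $W_k(i)=\sum_{i'\le i}Y_{i',k}$. Assuming Condition \hyperlink{CA}{\textbf{A}} for finite variance, we need to evaluate $W_k$ at random times $N_{n,k}\to\infty$ that are dependent on the walks. For this I would use the Anscombe/Rényi type theorem for random indices. Rényi's mixing CLT shows that $W_k(i)/\sqrt{V_k i}$ converges stably (mixing) to $\mathcal N(0,1)$. Stable convergence is exactly what allows random normalisations that are arbitrary $\sigma(\text{everything})$-measurable quantities tending to infinity, not just those converging in probability after deterministic scaling.

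Concretely, I would prove that for any $t\in\mathbb R^K$ the characteristic function $\mathbb E\exp\bigl(i\sum_k t_k S_{n,k}/\sqrt{N_{n,k}}\bigr)$ tends to $\exp(-\sum_k t_k^2V_k/2)$. An alternative route is a martingale CLT in a "sample-count" time scale: sort all observations by patient time and apply the CLT for martingale arrays with conditional variance $\sum_k t_k^2 V_k \cdot N_{n,k}/N_{n,k}$, which is deterministic. This suggests a cleaner route. Define the martingale differences
\[
d_{m}^{(n)}=\sum_k t_k X_{m,k}\varepsilon_{m,k}/\sqrt{N_{n,k}}.
\]
These are not adapted because $N_{n,k}$ looks ahead. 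This is precisely why I expect this step to be the main obstacle.

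To get around it, my first attempt would be the stopping-time device. For a large deterministic $c$, replace $N_{n,k}$ by $c$ and consider the stopping times $\tau_k(c)$ at which arm $k$ reaches $c$ samples. The vector $(W_k(c)/\sqrt c)_k$ is then exactly Gaussian in the limit by the i.i.d. skeleton. We then control
\[
\frac{W_k(N_{n,k})}{\sqrt{N_{n,k}}}-\frac{W_k(c_n)}{\sqrt{c_n}}
\]
using Kolmogorov's maximal inequality, combined with a discretisation of the possible values of $N_{n,k}$ on a geometric grid (Anscombe's uniform continuity in probability). If the uniform control proves delicate, I will fall back on Rényi stable convergence applied to the independent walks, which gives joint stable convergence with independent limits across $k$ and handles any random index $N_{n,k}\to\infty$ directly.

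Combining the resulting limit with the Slutsky reduction from the first step yields the stated covariance $\mathrm{diag}(V_1,\dots,V_K)$.
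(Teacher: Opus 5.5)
Your reduction to $(S_{n,k}/\sqrt{N_{n,k}})_k$ and the i.i.d.\ skeleton $S_{n,k}=W_k(N_{n,k})$ are fine. You also correctly locate the difficulty. The gap is that neither of your ways of evaluating the walks at the random indices $N_{n,k}$ works under the stated hypothesis. Anscombe's theorem (your stopping-time/geometric-grid device) needs $N_{n,k}/c_n\to1$ in probability for some deterministic $c_n\to\infty$. Rényi's random-index theorem needs $N_{n,k}/a_{n,k}\to\tau_k$ in probability, with $a_{n,k}$ deterministic and $\tau_k>0$ random. Mixing (stable) convergence of $W_k(i)/\sqrt{V_k i}$ along deterministic $i$ only gives joint convergence with any fixed random variable. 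It does \emph{not} let you read the walk off at an arbitrary random time tending to infinity, because such a time can select favourable moments of the walk. For example, $\nu_n=\min\{j\ge n: W(j)\ge0\}$ gives $W(\nu_n)/\sqrt{\nu_n}\ge0$. Likewise, without a ratio condition there is nothing for Kolmogorov's inequality to control: by the law of the iterated logarithm, $\sup_{j\ge c}|W(j)/\sqrt{j}-W(c)/\sqrt{c}|$ is infinite a.s.

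This cannot be patched, because the lemma is false when one only assumes $N_{n,k}\to\infty$. Take $K=2$ with Bernoulli$(\tfrac12)$ arms, and set $S_m:=\sum_{j\le m}X_{j,1}(\xi_{j,1}-\tfrac12)$. Send patient $m+1$ to arm~1 if $S_m<0$ or $m+1$ is a power of $2$, and to arm~2 otherwise. Recurrence of the simple random walk gives $N_{n,1},N_{n,2}\to\infty$ a.s. A negative excursion of $S$ can only begin at a power of $2$, and during it arm~1 is sampled at every step with fresh i.i.d.\ responses. Take $n_r=2^r-1$. Each of the $r$ possible starting times $2^s\le 2^{r-1}$ would require an excursion of length at least $2^{r-1}$, so a union bound gives $\mathbb P(S_{n_r}<0)=O(r2^{-r/2})\to0$. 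Hence $\sqrt{N_{n_r,1}}(\hat\theta_{n_r,1}-\tfrac12)\ge -O(N_{n_r,1}^{-1/2})$ with probability tending to one, which rules out an $\mathcal N(0,\tfrac14)$ limit.

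The paper's martingale-CLT proof has the same hole. It self-normalises by $\langle M_{\cdot,k}\rangle_n=V_kN_{n,k}$ and checks only a Lindeberg condition, whereas the martingale CLT needs the conditional variance, divided by deterministic constants, to converge in probability to a positive limit. What is missing is an extra hypothesis such as $N_{n,k}/n\to v_k>0$ a.s. Theorem~\ref{thm:LLN} supplies this where the lemma is used for Theorem~\ref{thm:normality-gen-erade}. Under that hypothesis your route is complete and clean. Anscombe with $c_{n,k}=\lfloor nv_k\rfloor$ gives $W_k(N_{n,k})/\sqrt{N_{n,k}}-W_k(c_{n,k})/\sqrt{c_{n,k}}\to0$ in probability. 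The vector $(W_k(c_{n,k})/\sqrt{c_{n,k}})_k$ has independent coordinates, and Slutsky finishes. The sparse-target Corollary~\ref{corr:componentwise}, where $N_{n,k}/n\to0$, would need a separate argument producing such a ratio condition.
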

		
		Using the fact that \(\sqrt n D_n^{-1}\) converges to \(\mathrm{diag}\left(\frac{1}{\sqrt{v_1}},\ldots,\frac{1}{\sqrt{v_K}}\right)\), the CLT result of Lemma \ref{lem:componentwise} combined with Slutsky's lemma yields the first asymptotic normality result:
		\begin{align*}
			\sqrt n(\hat\Theta_n-\Theta) &= \sqrt n D_n^{-1} D_n(\hat\Theta_n-\Theta)\\
			&\xrightarrow{\mathcal{D}}\mathrm{diag}\left(\frac{1}{\sqrt{v_1}},\ldots,\frac{1}{\sqrt{v_K}}\right) \mathcal N\!\big(0,\mathrm{diag}\left(V_1,\dots,V_K\right)\big)\\ 
			&= \mathcal N(0,V).
		\end{align*}

	A first-order Taylor expansion of the target function around $\Theta$ gives
	\[
	\hat\rho_n-v
	= G(\hat\Theta_n-\Theta) + O_p(n^{-1/2}),
	\]
	and hence
	\[
	\sqrt n(\hat\rho_n-v)\xrightarrow{\mathcal{D}}\mathcal N(0,GVG^\top).
	\]
	Finally, the bound \eqref{eq:thm-res-1} implies
	\[
	\sqrt n\!\left(\frac{N_n}{n}-\hat\rho_n\right)=o_p(1),
	\]
	so Slutsky’s lemma yields the CLT for $\sqrt n(N_n/n-v)$ as well as the joint convergence stated in \eqref{eq:thm-res-4} and \eqref{eq:thm-res-5}.
	
	\begin{remark}
			We emphasize that one component of the proof, Lemma \ref{lem:componentwise}, only requires that each treatment is sampled
			infinitely often. The normalization in this CLT result is performed componentwise through
			\(D_n\), allowing each estimator \(\hat\theta_{n,k}\) to be scaled according
			to its own sample size \(N_{n,k}\). As a result, this Lemma provides a general asymptotic normality result for the treatment effect estimators that
			does not rely on any assumption on the limiting allocation proportions. This result will be particularly relevant to tackle sparse target allocations, as we do in the next section.
	\end{remark}

	\section{$\alpha$RTS with Forced Exploration ($\alpha$RTS-FE)}\label{sec:alpha-rts-fe}
	
The adaptive design framework introduced above is augmented with a forced-exploration mechanism ensuring that each treatment continues to receive sufficient sampling throughout the experiment while the adaptive targeting rule remains active. Such a mechanism facilitate asymptotic analysis and prevent treatment starvation, while avoiding rigid allocation structures that do not adapt to the evolving information collected during the trial.

The forced-exploration component dynamically allocates observations to under-sampled
treatments in a randomized manner, ensuring adequate information acquisition without
compromising the adaptivity of the design. This approach is inspired by tracking-based
strategies from the bandit best-arm identification literature, notably the forced-exploration
component of the D-Tracking algorithm of \cite{garivier2016optimal}. By operating
alongside the targeting rule, forced exploration provides a natural mechanism to
guarantee that $N_{n,k}\to\infty$ for all treatments $k$, including in settings where
the target allocation may be sparse.
 
\begin{definition}\label{def:rts-fe-def}
	Let $h:\mathbb{N}\to\mathbb{R}$ a function such that
	\begin{enumerate}[label=(\roman*)]
		\item $\lim\limits_{n \to +\infty} h(n) = +\infty$
		\item $h(n)=o(\sqrt{n})$ as \(n \to +\infty\)
	\end{enumerate}

%

	Also we define the following sets
	\begin{equation}
		\mathbf{U}_m:=\{j\in[K]:\ N_{m,j}\le h(m)\},\qquad \mathbf{S}_m:=\arg\min_{j\in U_m} N_{m,j}.
	\end{equation}
	A design is in the \(\alpha\)RTS-FE family if its allocation probabilities obey:
	\begin{align}
		&\textbf{(Forced exploration)}\quad \mathbf{U}_m\neq\emptyset\ \Rightarrow\ \forall a\in \mathbf{S}_m:\ p_{m+1,a}=\frac{1}{|\mathbf{S}_m|};\label{eq:fe-phase}\\
		&\textbf{(Otherwise)}\quad \mathbf{U}_m=\emptyset\ \Rightarrow\ \text{$p_{m+1}$ satisfies \eqref{eq:throttle} for some $\alpha\in[0,1)$}.\label{eq:fe-otherwise}
	\end{align}
\end{definition}

\begin{example}
	In \cite{garivier2016optimal}, the \textit{D-Tracking} algorithm is composed of two parts, the \textit{Tracking} part as presented earlier, and the second part consists in \textit{Forced Exploration} that coincides with the \textit{\(\alpha\)RTS-FE} family with \(\alpha=0\), where they chose \(h(n) = \left(\sqrt{n} - \frac{K}{2}\right)^+.\) Note that this function doesn't verify the second condition of Definition \ref{def:rts-fe-def}. In our experiments, we use instead the function \(h(n) = \left(n^{\frac{1}{3}} - \frac{K}{2}\right)^+\) that satisfies the two conditions.
\end{example}

\subsection{Preservation of Asymptotic Properties}

We first show that the introduction of forced exploration does not alter the asymptotic behavior established for the \(\alpha RTS\) family for non-sparse target allocations, i.e. target allocations that satisfy {Condition} \hyperlink{CB}{\textbf{B}}.

\begin{theorem}[Validity under forced exploration]\label{thm:forced} Under conditions \hyperlink{CA}{\textbf{A}}--\hyperlink{CB}{\textbf{B}}, Theorem~\ref{thm:LLN} and Theorem~\ref{thm:normality-gen-erade} 
	remain valid for any design in the \(\alpha\)RTS-FE family.
\end{theorem}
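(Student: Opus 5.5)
Our plan is to reduce Theorem~\ref{thm:forced} to Lemma~\ref{lem:generic-conditions-main}: we exhibit, for any $\alpha$RTS-FE design, a sequence $(\ell_{n,k})$ satisfying conditions (i)--(iii). The natural candidate modifies the hitting time used for $\alpha$RTS so that it also absorbs forced-exploration rounds:
\[
\ell_{n,k} := \max\left\{ m \le n \,:\, \frac{N_{m,k}}{m} \le \hat\rho_{m,k} \ \text{ or } \ \mathbf{U}_m \neq \emptyset \right\}.
\]
Condition (i) is immediate, since $\ell_{n,k}$ is non-decreasing in $n$ and bounded by $n$.

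\textbf{Condition (ii).} On the rounds $m \in (\ell_{n,k}, n)$, arm $k$ is strictly over-sampled and $\mathbf{U}_m=\emptyset$. By \eqref{eq:fe-otherwise}, the throttle \eqref{eq:throttle} therefore holds, giving $p_{m+1,k} \le \alpha\hat\rho_{m,k}$. We then repeat the telescoping computation of Lemma~\ref{lem:preliminary}. We write $N_{n,k} = N_{\ell_{n,k}+1,k} + \sum_{m=\ell_{n,k}+1}^{n-1} X_{m+1,k}$ and split $X_{m+1,k} = p_{m+1,k} + (M_{m+1,k}-M_{m,k})$. We bound the drift by $\alpha\hat\rho_{m,k}$ and the single step at $\ell_{n,k}$ by $1$. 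Rearranging with $n\hat\rho_{n,k}$ and $\ell_{n,k}\hat\rho_{\ell_{n,k},k}$ then yields \eqref{eq:prelim-sketch} exactly as in the vanilla case. The point is that this inequality only uses the throttle on the excursion interval, and we have ensured that no forced-exploration step lies inside it.

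\textbf{Condition (iii).} This is the main obstacle. There are two cases for $\ell_{n,k}$. If $N_{\ell,k}/\ell \le \hat\rho_{\ell,k}$ at $\ell=\ell_{n,k}$, then $N_{\ell,k}-\ell\hat\rho_{\ell,k} \le 0$, and the vanilla argument of Lemma~\ref{lem:preliminary} applies. What is really needed there is an upper bound, together with a lower bound obtained from the sum constraints in the main proof. Otherwise $\mathbf{U}_\ell \neq \emptyset$, and arm $k$ is over-sampled at $\ell$. We then need $N_{\ell,k} - \ell\hat\rho_{\ell,k} = o(\sqrt n)$, which is not automatic since arm $k$ itself may not be in $\mathbf{U}_\ell$.

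To handle this second case, we first show that forced exploration happens only finitely often, or at least only at times $\ell = o(n)$ in a controlled way. The argument goes as follows. We use a crude lower bound: because the throttled arms always leave mass at least $(1-\alpha)\sum_{j \text{ over}}\hat\rho_{m,j}$ for the others, and Condition~\hyperlink{CB}{\textbf{B}} combined with the LLN keeps $\hat\rho_m$ bounded away from $0$, one should get $\liminf N_{m,j}/m > 0$ for every $j$. Then $N_{m,j} > h(m)=o(\sqrt m)$ eventually, so $\mathbf{U}_m=\emptyset$ for all large $m$.

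If establishing this a priori linear growth proves circular, we fall back on a weaker bound. Forced exploration samples only arms with $N_{m,j}\le h(m)$, so the total number of forced rounds up to $n$ is at most $K h(n) + K = o(\sqrt n)$. Perturbing the vanilla bound by these $o(\sqrt n)$ rounds then preserves $o(\sqrt n)$ in (iii), and also preserves the $o(\sqrt n)$ and $O(\sqrt{n\log\log n})$ controls on $U_{n,k}-U_{\ell_{n,k},k}$ used for Theorem~\ref{thm:normality-gen-erade}.

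Once (i)--(iii) are verified, Lemma~\ref{lem:generic-conditions-main} delivers both Theorem~\ref{thm:LLN} and Theorem~\ref{thm:normality-gen-erade}.
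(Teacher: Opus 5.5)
\paragraph{Gap in condition (iii).} Reducing to Lemma~\ref{lem:generic-conditions-main} is the right frame, and your checks of (i) and (ii) are fine for your $\ell_{n,k}$. The problem is (iii), and it comes from letting \emph{every} forced round enter $\ell_{n,k}$. Suppose $\ell=\ell_{n,k}$ is a round with $\mathbf{U}_\ell\neq\emptyset$ at which $k$ is over-sampled and $k\notin\mathbf{U}_\ell$. Then nothing controls $N_{\ell,k}-\ell\hat\rho_{\ell,k}$ a priori. Controlling over-sampling is what the theorem itself asserts, and even a posteriori the a.s.\ bound is only $O(\sqrt{\ell\log\log\ell})$. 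So you genuinely need forced exploration to stop.

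Your argument that it stops is circular, as you suspect, and its key step is false as stated. The mass $(1-\alpha)\sum_{j\text{ over}}\hat\rho_{m,j}$ released by \eqref{eq:throttle} goes to the under-sampled arms \emph{collectively}. An $\alpha$RTS such as Interpolated D-Tracking with $\alpha=0$ gives every under-sampled arm other than $k^\star$ probability $0$. A lower bound on $N_{m,j}/m$ can therefore only come from upper bounds on all the other arms plus $\sum_j N_{m,j}=m$, which is exactly the (ii)--(iii) argument you are trying to set up.

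Your fallback count is correct: the number of rounds $m\le n$ with $\mathbf{U}_m\neq\emptyset$ is at most $K(\max_{m\le n}h(m)+1)=o(\sqrt n)$. But this bounds how many rounds can violate the throttle, which is a perturbation of the drift in (ii). It says nothing about the overshoot $N_{\ell,k}-\ell\hat\rho_{\ell,k}$ at the last forced round, which is what (iii) requires.

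\paragraph{The fix.} The missing idea is to admit into $\ell_{n,k}$ only those forced rounds at which $k$ itself is among the least-sampled arms. That is, use the condition $N_{m,k}/m\le\hat\rho_{m,k}$ or ($\mathbf{U}_m\neq\emptyset$ and $k\in\mathbf{S}_m$).
\begin{itemize}
\item Condition (iii) is then immediate: at such an $\ell$ you have $N_{\ell,k}\le h(\ell)\le\max_{m\le n}h(m)=o(\sqrt n)$.
\item Condition (ii) survives: at a forced round with $k\notin\mathbf{S}_m$, \eqref{eq:fe-phase} puts all the mass on $\mathbf{S}_m$, so $p_{m+1,k}=0\le\alpha\hat\rho_{m,k}$.
\end{itemize}
This is the paper's proof.

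Alternatively, you can salvage your counting observation as a genuinely different route. Keep the vanilla $\ell_{n,k}$ of Lemma~\ref{lem:preliminary}, so that (iii) is automatic. Then carry an extra additive term in (ii), namely the number of forced rounds, which is $o(\sqrt n)$ deterministically. In that case Lemma~\ref{lem:generic-conditions-main} no longer applies verbatim. You would need to check that its proof absorbs the extra term. It does, both in the a.s.\ $o(1)$ step of Theorem~\ref{thm:LLN} and in the $o_p(\sqrt n)$ and $O(\sqrt{n\log\log n})$ bounds of Theorem~\ref{thm:normality-gen-erade}.
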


\begin{proof} The proof follow the exact same template as that for $\alpha$RTS, with the twist of introducing an alternative sequence $\ell_{n,k}$ defined as follow:
\begin{equation*}
	\ell_{n,k} := \max \left\{\, m \leq n : \frac{N_{m,k}}{m} \leq \hat{\rho}_{m,k} \text{ or } \left(k \in \text{arg}\!\min_{j \in U_m} N_{m,j}\text{ and } \mathbf{U}_m \neq \emptyset\right) \;.\right\}
\end{equation*}
$\ell_{n,k}$ now represents the last time before \(n\) at which treatment \(k\) is under-sampled or is the least sampled among treatments belonging to \(U_{\ell_{n,k}}\). We prove in Lemma~\ref{lem:preliminary2} in Appendix~\ref{appnB} that the conditions in Lemma~\ref{lem:generic-conditions-main} are valid for this new sequence $\ell_{n,k}$. The conclusion follows. The proof of Lemma~\ref{lem:preliminary2} hinges on some properties on the function \(h\) to establish that
\begin{align*}
	N_{\ell_{n,k},k} - \ell_{n,k} \hat{\rho}_{\ell_{n,k},k}
	&\leq h(\ell_{n,k})
	= o(\sqrt{n})
	\: .
\end{align*}
\end{proof}

\subsection{Sparse Allocation Target} 

In this section we relax Condition \hyperlink{CB}{\textbf{B}} to incorporate possibly sparse target allocations, in the following sense.

\begin{definition}[Sparse Target Allocation]
	A target allocation function \(\rho: \mathbb{R}^K \to [0,1]^K\) is said to be sparse at \(\Theta\) if one or more components of the target vector \(v = \rho(\Theta)\) are equal to zero.
\end{definition}

We now establish consistency and convergence rates for \(\alpha\)RTS-FE design for sparse target allocation. The proof of all statements in this section can be found in Appendix~\ref{proofs:sparse-case}.


\begin{theorem}[Consistency and rate]\label{thm:sparse-rate}
	Under condition \hyperlink{CA}{\textbf{A}}, \(\alpha\)RTS-FE satisfies as \(n \to +\infty\), for all $k \in [K]$,
	\[
	N_{n,k}\xrightarrow{\text{a.s.}}+\infty,\qquad
	\frac{N_{n,k}}{n}\xrightarrow{\text{a.s.}}v_k \ \ \ \
	\text{ and } \ \ \ \ \hat\Theta_n-\Theta=O\!\left(\sqrt{\frac{\log\log N_{n,k}}{N_{n,k}}}\right)\text{ a.s.}.
	\]
\end{theorem}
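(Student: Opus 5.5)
We prove the three claims in order: infinite sampling, convergence of the estimators, then convergence of the proportions. We reuse as much of the non-sparse machinery (Lemma~\ref{lem:generic-conditions-main} and the sequence $\ell_{n,k}$ from the proof of Theorem~\ref{thm:forced}) as possible, avoiding only the step that used Condition~\hyperlink{CB}{\textbf{B}} to guarantee a non-sparse limit of $\hat\rho_n$.

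\emph{Step 1: $N_{n,k}\to\infty$ a.s.} This is deterministic and follows from the forced-exploration rule \eqref{eq:fe-phase}. Suppose some arm $k$ had $N_{n,k}\le C$ for all $n$. Since $h(n)\to\infty$, for $n$ large we have $k\in\mathbf{U}_n$, so $\mathbf{U}_n\neq\emptyset$ from then on. The forced-exploration step then samples only arms achieving $\min_{j\in\mathbf U_n}N_{n,j}\le C$. There are finitely many such arms, each sampled with positive probability $1/|\mathbf S_n|\ge 1/K$ at every such step. By the conditional Borel--Cantelli lemma (Lévy's extension), the minimal count increases infinitely often a.s., which contradicts boundedness. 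More quantitatively, one can show $\min_k N_{n,k}\ge h(n)-O(1)$ eventually a.s.

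\emph{Step 2: estimators.} Given Step 1, each $\hat\theta_{n,k}$ is a regularized average of $N_{n,k}$ i.i.d.\ responses of arm $k$; the subsequence of observed responses is i.i.d.\ by the optional-skipping argument. Condition~\hyperlink{CA}{\textbf{A}} and the law of the iterated logarithm applied along the random index $N_{n,k}\to\infty$ give $\hat\theta_{n,k}-\theta_k=O(\sqrt{\log\log N_{n,k}/N_{n,k}})$ a.s. The regularization term $(\theta_{0,k}-\hat\theta)/(N_{n,k}+1)$ is of smaller order. Hence $\hat\Theta_n\to\Theta$ and, by continuity of $\rho$ (which we still assume), $\hat\rho_n\to v$ a.s., where some $v_k$ may be $0$.

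\emph{Step 3: proportions.} For arms with $v_k>0$ and for arms with $v_k=0$ alike, we want the upper bound
\[
\limsup_n \Big(\frac{N_{n,k}}{n}-\hat\rho_{n,k}\Big)\le 0.
\]
Since both $N_n/n$ and $\hat\rho_n$ lie in the simplex, this upper bound for every $k$ forces $N_{n,k}/n-\hat\rho_{n,k}\to0$ for all $k$, and hence $N_{n,k}/n\to v_k$.

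To get the upper bound, use the inequality \eqref{eq:prelim-sketch} with the forced-exploration hitting time $\ell_{n,k}$ established in Lemma~\ref{lem:preliminary2}. It gives
\[
N_{n,k}-n\hat\rho_{n,k}\le 1+h(\ell_{n,k})+U_{n,k}-U_{\ell_{n,k},k}.
\]
We need to check that this inequality and the bound $N_{\ell,k}-\ell\hat\rho_{\ell,k}\le h(\ell)$ did not use the positivity in Condition~\hyperlink{CB}{\textbf{B}}. I expect they did not, since they come from the throttling rule and the definition of $\mathbf U_m$.

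Next, by the martingale SLLN and $\hat\rho_m\to v$,
\[
\frac{U_{n,k}}{n}\to -(1-\alpha)v_k .
\]
We then bound $(U_{n,k}-U_{\ell_{n,k},k})/n$ as in Lemma~\ref{lem:convergence}. Writing $\ell=\ell_{n,k}$,
\[
U_{n,k}-U_{\ell,k}=\alpha\sum_{m=\ell}^{n-1}\hat\rho_{m,k}+(M_{n,k}-M_{\ell,k})-n\hat\rho_{n,k}+\ell\hat\rho_{\ell,k}.
\]
This equals $-(1-\alpha)v_k(n-\ell)+o(n)$. The main term is $\le 0$ when $v_k\ge0$, so the bound is $o(n)$ regardless of whether $v_k=0$.

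The main obstacle is precisely this last point. In the non-sparse proof one may rely on $u_k>0$ to argue that $n-\ell_{n,k}=o(n)$ or similar. Here we must verify that the Lemma~\ref{lem:convergence}-type algebra only needs the sign $-(1-\alpha)v_k\le0$ and a uniform $o(n)$ control of $\sup_{\ell\le n}|M_{n,k}-M_{\ell,k}|$ and of $\sup_{\ell\le n}|\sum_{m<\ell}\hat\rho_{m,k}-\ell v_k|$. Both follow from the SLLN, since $\max_{\ell\le n}|a_\ell|=o(n)$ whenever $a_\ell=o(\ell)$. If that works, dividing by $n$ and using $h(n)=o(\sqrt n)$ concludes the proof.
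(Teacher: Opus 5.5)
Your proposal is correct and follows the paper's proof. Forced exploration gives $N_{n,k}\to\infty$; the LLN/LIL then give the estimator rate and $\hat\rho_n\to v$. Lemma~\ref{lem:preliminary2} combined with the Lemma~\ref{lem:convergence} bound, which indeed only needs $-(1-\alpha)v_k\le 0$, yields $\limsup_n (N_{n,k}/n-\hat\rho_{n,k})\le 0$ for every $k$, and the simplex constraint finishes. The one difference is in Step~1, where Borel--Cantelli is unnecessary: the forced-exploration probabilities $1/|\mathbf S_m|$ sum to one, so the sampled arm always lies in $\mathbf S_m$ and the paper's potential $D(n)=\sum_j (M+1-N_{n,j})^+$ drops by one at every step, a deterministic contradiction (your unused side claim $\min_k N_{n,k}\ge h(n)-O(1)$ would need extra regularity of $h$).
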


Finally, we derive a joint componentwise CLT for the estimators when adaptive data is generated from an \(\alpha\)RTS-FE design, which is a direct consequence of Lemma \ref{lem:componentwise}.

\begin{corollary}\label{corr:componentwise}
		Under condition \hyperlink{CA}{\textbf{A}}, \(\alpha\)RTS-FE satisfies as \(n \to +\infty\)
		\[
		D_n(\hat\Theta_n-\Theta)\xrightarrow{\mathcal{D}}
		\mathcal N\!\big(0,\mathrm{diag}\left(V_1,\dots,V_K\right)\big)
		\]
		with $D_n:=\mathrm{diag}(\sqrt{N_{n,1}},\dots,\sqrt{N_{n,K}})$
	\end{corollary}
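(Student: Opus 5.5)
The corollary follows from Lemma~\ref{lem:componentwise}. That lemma's only hypothesis is that the RAR procedure satisfies $N_{n,k}\to\infty$ almost surely for every $k\in[K]$. It uses nothing about the limiting proportions, the non-sparsity of the target, or Condition~\hyperlink{CB}{\textbf{B}}. The plan is therefore to check that an $\alpha$RTS-FE design is a RAR procedure in the sense of Section~\ref{sec:notations} and that every arm is sampled infinitely often almost surely, and then to apply the lemma directly.

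\textbf{Step 1: RAR structure.} By Definition~\ref{def:rts-fe-def}, $p_{m+1}$ is determined by $\mathbf{U}_m$, $\mathbf{S}_m$, $N_m$ and $\hat\rho_m$. All of these are $\mathcal{F}_m$-measurable, and $p_{m+1}\in\Delta_K$. So the design is an RAR procedure.

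\textbf{Step 2: infinite sampling.} Theorem~\ref{thm:sparse-rate} already gives $N_{n,k}\to\infty$ a.s. under Condition~\hyperlink{CA}{\textbf{A}} alone. If one prefers a self-contained argument (it is presumably the core of that theorem's proof), I would argue as follows.

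Suppose some arm $k$ had $\sup_n N_{n,k}<\infty$ on an event of positive probability. Since $h(n)\to\infty$, we get $N_{m,k}\le h(m)$ for all large $m$, so $\mathbf{U}_m\neq\emptyset$ eventually.

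Among the arms in $\mathbf{U}_m$, consider those with bounded counts. Their counts do not exceed that of $k$ forever, so eventually some arm with the smallest bounded count belongs to $\mathbf{S}_m$. Arms with unbounded counts eventually exceed this minimum.

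So for all large $m$, $\mathbf{S}_m$ contains only arms whose counts stay at the minimal bounded value. Each such arm is chosen with probability $1/|\mathbf{S}_m|\ge 1/K$ at every such step. By the conditional Borel--Cantelli (Lévy) lemma, one of them is then sampled infinitely often almost surely, which contradicts the boundedness. Hence $N_{n,k}\to\infty$ a.s. for all $k$.

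\textbf{Step 3: conclusion.} Apply Lemma~\ref{lem:componentwise} to obtain
\[
D_n(\hat\Theta_n-\Theta)\xrightarrow{\mathcal D}\mathcal N\big(0,\mathrm{diag}(V_1,\dots,V_K)\big).
\]
The lemma's martingale-CLT argument needs finite second moments of the responses, and Condition~\hyperlink{CA}{\textbf{A}} provides them.

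\textbf{Main obstacle.} The only nontrivial point is Step 2. It is settled either by citing Theorem~\ref{thm:sparse-rate} or by the Lévy Borel--Cantelli argument above. The delicate part of that argument is making sure the set $\mathbf{S}_m$ of minimal-count arms eventually consists only of arms with bounded counts, so that the forced-exploration probability $1/|\mathbf{S}_m|\ge 1/K$ is applied infinitely often.
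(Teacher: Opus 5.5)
Your proposal is correct and follows the paper's route exactly. The paper also obtains the corollary by combining the almost-sure infinite sampling $N_{n,k}\to\infty$ from Theorem~\ref{thm:sparse-rate} with Lemma~\ref{lem:componentwise}, whose only hypothesis is that every arm is sampled infinitely often. Your optional self-contained Step~2 is valid too, though the conditional Borel--Cantelli (L\'evy) argument is more than you need. Forced exploration puts total probability one on $\mathbf{S}_m$, so the paper argues deterministically: the potential $\sum_{j}(M+1-N_{n,j})^+$ would have to decrease by one at every step, which is impossible.
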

	
\begin{remark}[CLTs under sparsity]\label{cor:sparse-clt}
		The normalization in Corollary~\ref{corr:componentwise} is component-wise and depends on \(N_{n,k}\), rather than on \(n\). This formulation naturally accommodates sparse targets, for which
		\(v_k = 0\) for some $k$. We note that when \(v \in (0, 1)^K\), this result recovers some known asymptotic normality results established for response-adaptive procedures that converge almost surely to their targets (Lemma 1 from  \cite{hu2006asymptotically}).
	\end{remark}

	\section{Simulation study}\label{sec:simulation}
	
We now complement our asymptotic analysis with a simulation study designed to evaluate the behavior of $\alpha$RTS designs. The implementation and simulation scripts used in this study can be accessed at \url{https://github.com/ryagouti/rebalancing-targeting-strategies}.

	Our experiments address three questions. First, we investigate the
	convergence of allocation proportions and plug-in target estimates in a
	three-treatment setting (\(K=3\)). Second, we study the impact of forced
	exploration when the target allocation is sparse. Finally, we evaluate the
	performance of hypothesis testing procedures based on data collected with
	\(\alpha\)RTS and \(\alpha\)RTS-FE designs in a four-treatment setting
	(\(K=4\)). Together, these experiments illustrate both the allocation
	properties and the inferential performance of the proposed methods.
	
	We set \(\alpha=0.4\) throughout the experiments. This choice follows the recommendation of \cite{hu2009efficient}, who suggest values of \(\alpha\) between \(0.4\) and \(0.7\)
	based on their numerical study of ERADE.

\subsection{Convergence of Allocation Proportions and Plugin Estimate}

The first experiment is designed to illustrate the convergence behavior predicted by Theorems \ref{thm:LLN} and \ref{thm:normality-gen-erade}. We consider a three-treatment setting (\(K=3\)) with Bernoulli arms with success probabilities $\Theta=(0.5, 0.6, 0.8)$, and use the Neyman allocation as the target, defined as
\[\rho_k^N(\Theta) = \frac{\sqrt{\theta_k(1-\theta_k)}}{\sum_{\ell=1}^{K}\sqrt{\theta_{\ell}(1-\theta_{\ell})}}\;.\]
For two-arms the Neyman allocation (\cite{neyman1934representative}) is the one minimizing the variance of the estimator of the treatment difference as shown in \cite{melfi1998variablility}, and its natural extension to $K$ arms can be traced back to \cite{Cochran1977}.

Figures \ref{fig:fixed-tgt-prop} and \ref{fig:fixed-tgt-plugin} respectively  display  the evolution of the absolute errors $|N_{n,k}/n - v_k|$ and $|\hat{\rho}_{n,k} - v_k|$ as a function of $n$, for the three example of $\alpha$RTS design discussed in Section~\ref{sec:alpha-rts}: Distance-Based, ERADE 2025 and Interpolated D-Tracking. Across all treatments, both quantities decrease rapidly as the sample size increases, indicating that the allocation proportions track the target allocation and that the plug-in estimator stabilizes. This behavior is consistent with the almost sure convergence results  implied by Theorems \ref{thm:LLN} and \ref{thm:normality-gen-erade}. While all three designs share the same asymptotic guarantees, we observe minor differences in early-stage variability. The D-Tracking rule exhibits larger fluctuations at small sample sizes, due to its more aggressive reallocation strategy, whereas the distance-based and ERADE-type rules yield smoother trajectories. These differences vanish as $n$ increases, and all methods display similar behavior in later stages.

\begin{figure}[tb]
	\centering
	\begin{subfigure}[t]{0.32\linewidth}
		\centering
		\includegraphics[width=\linewidth]{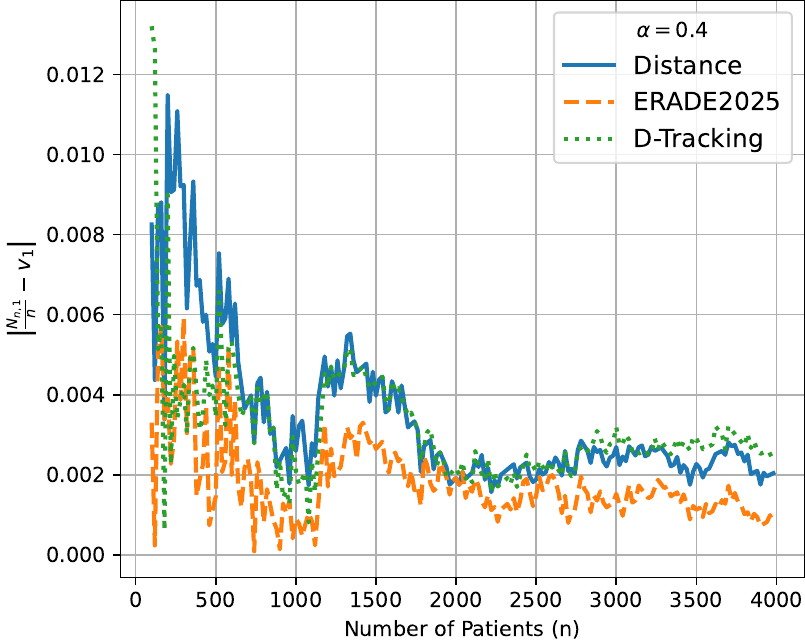}
		\caption{Treatment 1}
	\end{subfigure}
	\hfill
	\begin{subfigure}[t]{0.32\linewidth}
		\centering
		\includegraphics[width=\linewidth]{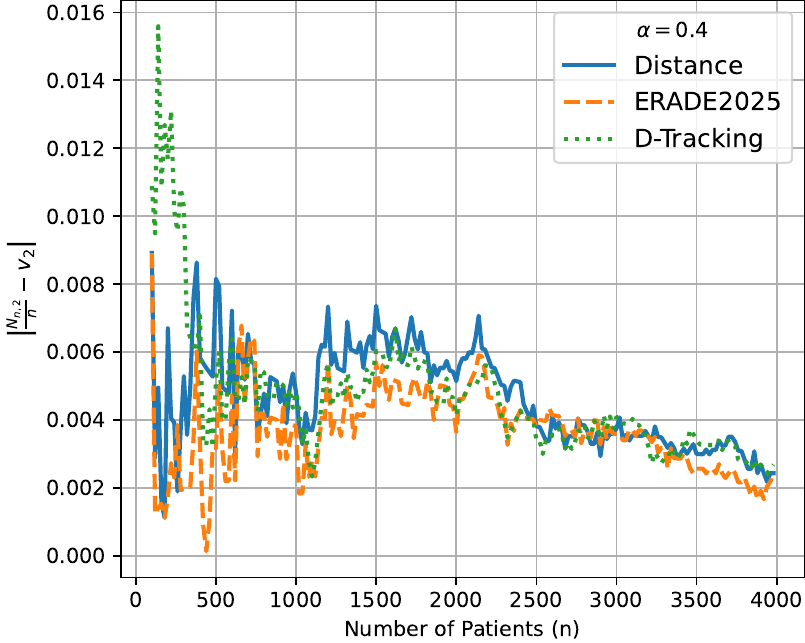}
		\caption{Treatment 2}
	\end{subfigure}
	\hfill
	\begin{subfigure}[t]{0.32\linewidth}
		\centering
		\includegraphics[width=\linewidth]{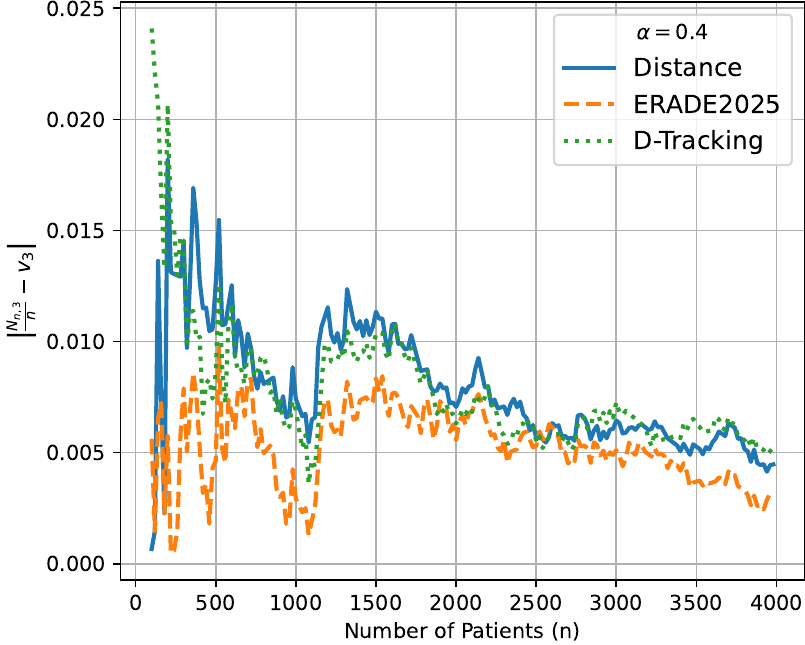}
		\caption{Treatment 3}
	\end{subfigure}
	
	\caption{Distance between the empirical allocation proportion and the target as a function of $n$ for different $\alpha$RTS designs targeting the Neyman Allocation}
	\label{fig:fixed-tgt-prop}
\end{figure}

\begin{figure}[tb]
	\centering
	\begin{subfigure}[t]{0.32\linewidth}
		\centering
		\includegraphics[width=\linewidth]{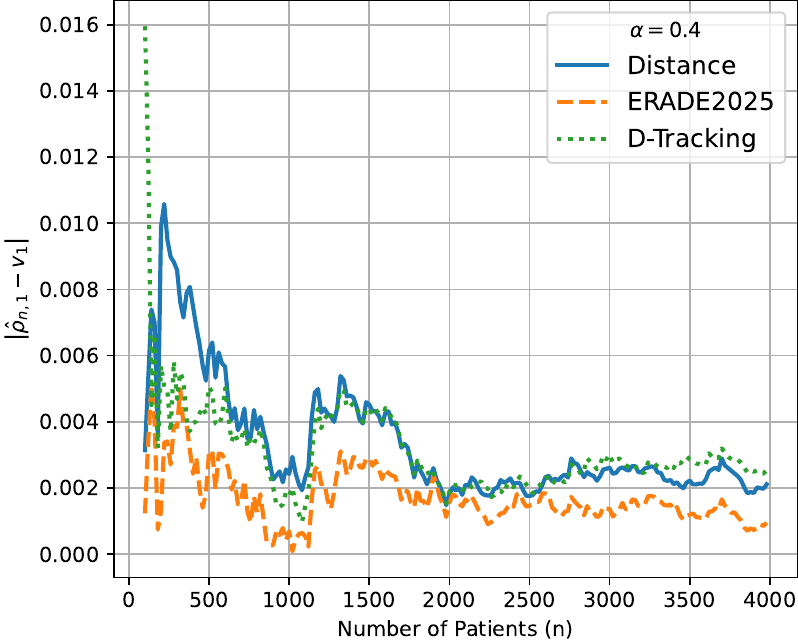}
		\caption{Treatment 1}
	\end{subfigure}
	\hfill
	\begin{subfigure}[t]{0.32\linewidth}
		\centering
		\includegraphics[width=\linewidth]{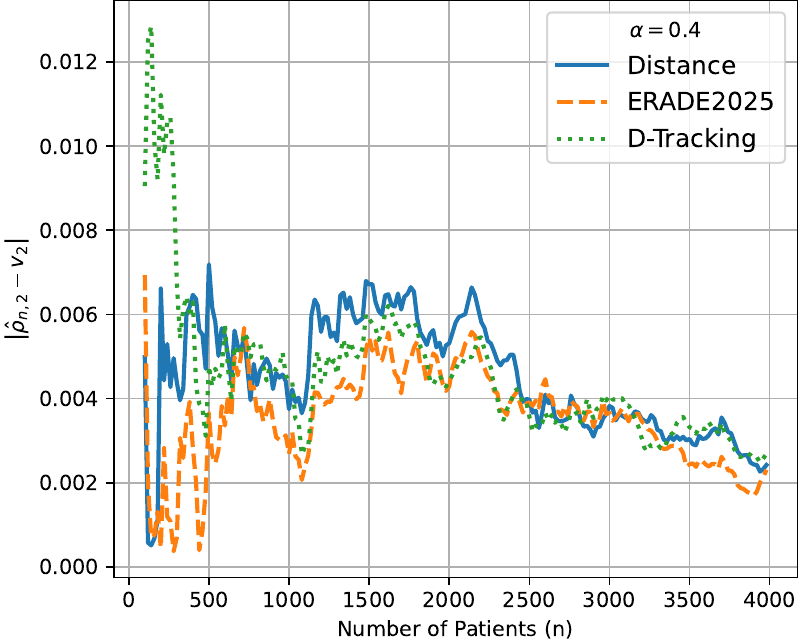}
		\caption{Treatment 2}
	\end{subfigure}
	\hfill
	\begin{subfigure}[t]{0.32\linewidth}
		\centering
		\includegraphics[width=\linewidth]{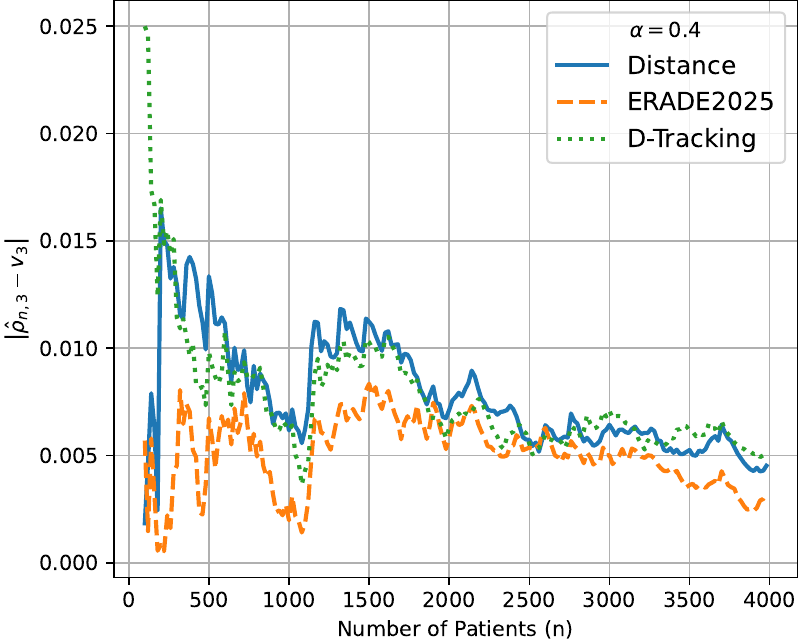}
		\caption{Treatment 3}
	\end{subfigure}
	
	\caption{Distance between the target plug-in estimate and the target as a function of $n$ for different $\alpha$RTS designs targeting the Neyman Allocation}
	\label{fig:fixed-tgt-plugin}
\end{figure}

\subsection{The Impact of Forced Exploration} Our second experiments aims at illustrating the impact of forced exploration when the target allocation is sparse.
The work of \cite{tymofyeyev2007implementing} proposes a family of target allocations in multi-treatment Bernoulli experiments that are motivated by homogeneity tests, that is, testing
the equality of treatment success probabilities across all arms. In this work, we consider the allocation that minimizes the sample size given a constraint on the power of the homogeneity test (which can be related to a constraint on the non-centrality parameter of some chi-square distribution). This target allocation is shown to exhibit a sparsity pattern for a wide range of parameter configurations. In particular, if $\Theta$ is such that
\[\theta_1=\cdots=\theta_s
>
\theta_{s+1}\geq\cdots\geq \theta_{K-g}
>
\theta_{K-g+1}=\cdots=\theta_K ,\]
then
\[
\rho_i^T(\Theta)=
\begin{cases}
	\displaystyle
	\frac{\sqrt{\theta_1 (1 - \theta_1)}}
	{s\left(\sqrt{\theta_1 (1 - \theta_1)}+\sqrt{\theta_K (1 - \theta_K)}\right)},
	& \text{if } i\in\{1,\dots,s\},
	\\[1.2em]
	0,
	& \text{if } i\in\{s+1,\dots,K-g\},
	\\[1.2em]
	\displaystyle
	\frac{1}{g}\left(
	1-
	\frac{\sqrt{p_1q_1}}
	{\sqrt{p_1q_1}+\sqrt{p_Kq_K}}
	\right),
	& \text{if } i\in\{K-g+1,\dots,K\}.
\end{cases}
\]
and \(\rho_{s+1}^T=\cdots=\rho_{K-g}^T=0\). This allocation, which we call the Tymofyeyev allocation in our study, assigns positive target proportions only to the best-performing and the worst-performing treatments. 


In our experiments, we consider a trial with $K=3$ arms and binary responses. The success probabilities are chosen from the following configurations
\[\Theta=(\theta_1,\theta_2,\theta_3) \in \{(0.1,0.3,0.6), (0.3,0.4,0.7), (0.4,0.6,0.8)\}\]
which induce sparse Tymofyeyev target allocations with treatment 2 being intermediate and asymptotically eliminated. The target values \(\rho^{T}\left(\Theta\right)\) for these instances are reported  in Table~\ref{tab:rho_tymo}.

\begin{table}[b]
	\centering
	\begin{tabular}{c|ccc}
		\hline
		$(\theta_1,\theta_2,\theta_3)$ & Arm 1 & Arm 2 & Arm 3 \\
		\hline
		$(0.1,\,0.3,\,0.6)$ & 0.3798 & 0 & 0.6202 \\
		$(0.3,\,0.4,\,0.7)$ & 0.5000 & 0 & 0.5000 \\
		$(0.4,\,0.6,\,0.8)$ & 0.5505 & 0 & 0.4495 \\
		\hline
	\end{tabular}
	\caption{\label{tab:rho_tymo}Tymofyeyev target allocation vectors corresponding to different instances $\Theta$}
\end{table}

We compare \(\alpha\)RTS and its \(\alpha\)RTS-FE counterpart for the same three variants considered in the previous section, when the target is a sparse Tymofyeyev allocation.
For each configuration we perform 500 independent runs of each algorithms, with a total of \(n=1000\) patients per run. Table \ref{tab:sparse-target} reports the average empirical allocation proportions $N_{n,k}/n$ obtained for different variants of $\alpha$RTS.
Across all instantiations (Distance, ERADE2025 and D-Tracking) and values of $\Theta$, the standard \(\alpha\)RTS procedures allocate a non-negligible fraction of observations to the intermediate treatment, despite its target allocation being zero. The forced-exploration variants (\(\alpha\)RTS-FE) tend to assign a slightly smaller proportion of observations to this treatment, resulting in empirical allocations that are closer to the target. 

\begin{table}[tb]
	\centering
	\small
	\begin{tabular}{c|ccc|ccc}
		\hline
		\textbf{Distance} & \multicolumn{3}{c|}{\(\alpha.RTS\)} & \multicolumn{3}{c}{\(\alpha.RTS-FE\)} \\
		$(p_1, p_2, p_3)$
		& $\frac{N_1}{n}$ & $\frac{N_2}{n}$ & $\frac{N_3}{n}$
		& $\frac{N_1}{n}$ & $\frac{N_2}{n}$ & $\frac{N_3}{n}$ \\
		\hline
		(0.1, 0.3, 0.6) & 0.3381 & 0.0849 & 0.5760 & 0.3559 & 0.0558 & 0.5873 \\
		(0.3, 0.4, 0.7) & 0.3634 & 0.1523 & 0.4832 & 0.3814 & 0.1387 & 0.4790 \\
		(0.4, 0.6, 0.8) & 0.4766 & 0.1206 & 0.4017 & 0.4969 & 0.0867 & 0.4154 \\
		\hline
	\end{tabular}
	
\vspace{0.4cm}

	\begin{tabular}{c|ccc|ccc}
		\hline
		\textbf{ERADE2025} & \multicolumn{3}{c|}{\(\alpha.RTS\)} & \multicolumn{3}{c}{\(\alpha.RTS-FE\)} \\
		$(p_1, p_2, p_3)$
		& $\frac{N_1}{n}$ & $\frac{N_2}{n}$ & $\frac{N_3}{n}$
		& $\frac{N_1}{n}$ & $\frac{N_2}{n}$ & $\frac{N_3}{n}$ \\
		\hline
		(0.1, 0.3, 0.6) & 0.3422 & 0.0968 & 0.5600 & 0.3544 & 0.0480 & 0.5966 \\
		(0.3, 0.4, 0.7) & 0.3529 & 0.1685 & 0.4776 & 0.3802 & 0.1350 & 0.4838 \\
		(0.4, 0.6, 0.8) & 0.4873 & 0.1012 & 0.4105 & 0.4900 & 0.0938 & 0.4151 \\
		\hline
	\end{tabular}
	
\vspace{0.4cm}

	\begin{tabular}{c|ccc|ccc}
		\hline
		\textbf{D-Tracking} & \multicolumn{3}{c|}{\(\alpha.RTS\)} & \multicolumn{3}{c}{\(\alpha.RTS-FE\)} \\
		$(p_1, p_2, p_3)$
		& $\frac{N_1}{n}$ & $\frac{N_2}{n}$ & $\frac{N_3}{n}$
		& $\frac{N_1}{n}$ & $\frac{N_2}{n}$ & $\frac{N_3}{n}$ \\
		\hline
		(0.1, 0.3, 0.6) & 0.3375 & 0.0979 & 0.5636 & 0.3593 & 0.0499 & 0.5898 \\
		(0.3, 0.4, 0.7) & 0.3747 & 0.1434 & 0.4809 & 0.3841 & 0.1360 & 0.4789 \\
		(0.4, 0.6, 0.8) & 0.4933 & 0.1003 & 0.4054 & 0.4962 & 0.0887 & 0.4141 \\
		\hline
	\end{tabular}
	
\vspace{0.4cm}

	\caption{Simulated allocation proportions for different targeting designs with fixed target allocation (Tymofyeyev), 500 simulations, n = 1000}
	\label{tab:sparse-target}
\end{table}

%

\begin{figure}[tb]
	\centering
	\begin{subfigure}[t]{0.32\linewidth}
		\centering
		\includegraphics[width=\linewidth]{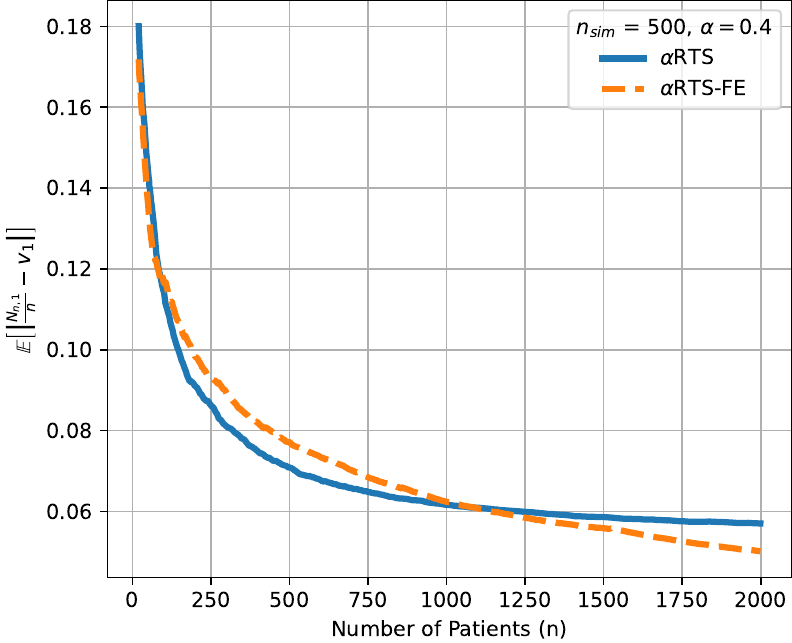}
		\caption{Treatment 1}
	\end{subfigure}
	\hfill
	\begin{subfigure}[t]{0.32\linewidth}
		\centering
		\includegraphics[width=\linewidth]{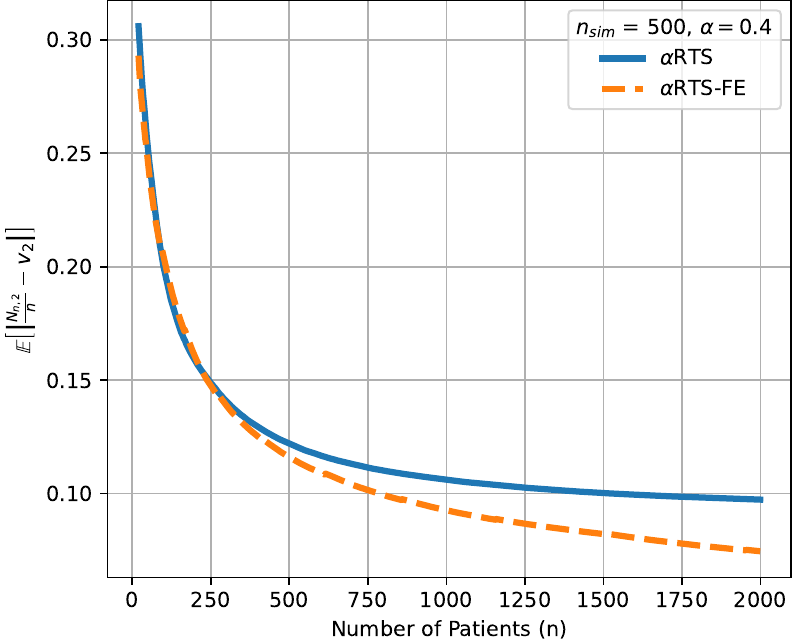}
		\caption{Treatment 2}
	\end{subfigure}
	\hfill
	\begin{subfigure}[t]{0.32\linewidth}
		\centering
		\includegraphics[width=\linewidth]{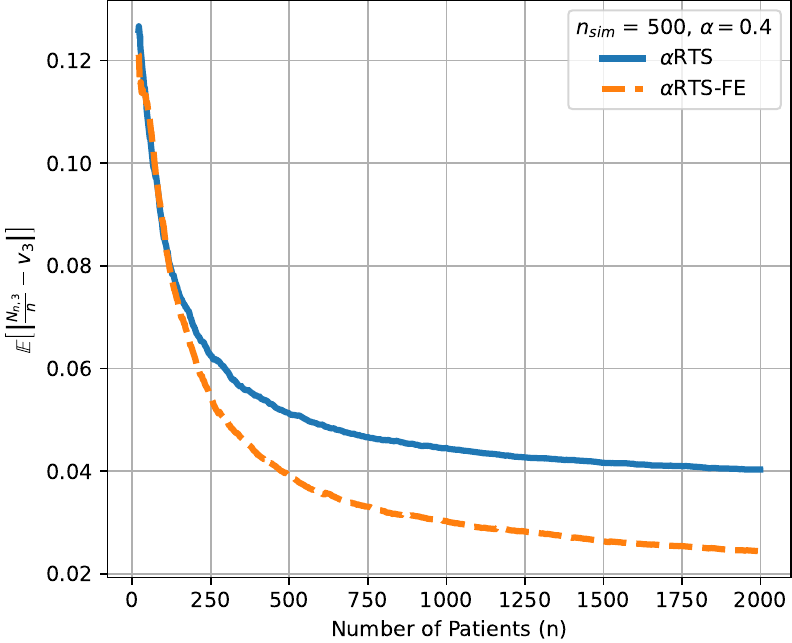}
		\caption{Treatment 3}
	\end{subfigure}

	\caption{Distance between the allocation proportion and the target as a function of $n$ based on Distance design and the Tymofyeyev allocation}
	\label{fig:tymo-prop}
\end{figure}

To complement these findings, Figure \ref{fig:tymo-prop} reports the average allocation error
\(\left|\frac{N_{n,k}}{n}-v_k\right|\) as a function of the number of patients $n$, computed over \(n_{\mathrm{sim}}=500\) independent simulation runs, using the real parameters \(\Theta = (0.4, 0.6, 0.8)\) and the Distance targeting design with and without forced exploration. Across all
treatments, the error decreases with the sample size for both \(\alpha\)RTS and
\(\alpha\)RTS-FE. In the sparse setting considered here, the forced-exploration
variants tend to achieve smaller errors. This suggests that continued sampling of all
treatments lead to a faster convergence to a sparse target allocation.

\subsection{Hypothesis Testing with $\alpha$RTS} The asymptotic properties established for $\alpha$RTS(-FE) make these adaptive data collection methods compatible with asymptotic inference. To complement these asymptotic results, our last set of experiments focuses on the finite-time performance of an hypothesis test performed when the data is collected with $\alpha$RTS. We consider $K$ Bernoulli arms and the homogeneity test already mentioned above:
\[\mathcal{H}_0 : \{\forall k \in [K] : \theta_k = \theta_1 \} \ \ \text{ against } \ \ \ \mathcal{H}_1 : \{\exists i,j :  \theta_i \neq \theta_j\}\;.\]
The following lemma leads to an asymptotic calibration of a chi-square test.

\begin{lemma}\label{lem:pearson}
		Define the quantities \(S_{n,k} := N_{n,k} \hat{\theta}_{n,k}\) and 
		\(\tilde{\theta}_n := \sum_{k=1}^K \frac{N_{n,k}}{n} \hat{\theta}_{n,k}\).	
		Consider a RAR procedure under which \(N_{n,k} \to +\infty\), then the Pearson Chi-square statistic \(\mathcal{X}_n := \sum_{k=1}^K \frac{\left(S_{n,k} - N_{n,k}\tilde{\theta}_n\right)^2}{N_{n,k}\tilde{\theta}_n (1 - \tilde{\theta}_n)}\)
		satisfies
		\(\mathcal{X}_n \xrightarrow{\mathcal{D}} \chi^2_{K-1}\) under the null hypothesis.
	\end{lemma}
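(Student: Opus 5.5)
Under $\mathcal H_0$ all arms share a common success probability $\theta\in(0,1)$, so $V_k=\theta(1-\theta)$ for every $k$. We take $\hat\theta_{n,k}$ to be the empirical mean, so that $S_{n,k}$ is the number of successes on arm $k$. The regularised estimator \eqref{eq:dbcd-estimator} differs from it by $O(1/N_{n,k})$, which is negligible after the $\sqrt{N_{n,k}}$ scaling used below. Since the procedure samples every arm infinitely often, Lemma~\ref{lem:componentwise} applies and gives
\[
Z_n := \frac{D_n(\hat\Theta_n-\theta\mathbf 1)}{\sqrt{\theta(1-\theta)}} \xrightarrow{\mathcal D} \mathcal N(0,I_K).
\]
The plan is to write $\mathcal X_n$ as a quadratic form in $Z_n$ given by a projection matrix, and then read off the $\chi^2_{K-1}$ limit.

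First, $\tilde\theta_n=\sum_k (N_{n,k}/n)\hat\theta_{n,k}$ is a convex combination of consistent estimators. Hence $\tilde\theta_n\to\theta$ a.s., and the denominator factor $\tilde\theta_n(1-\tilde\theta_n)$ can be replaced by $\theta(1-\theta)$ by Slutsky's lemma. Next, set $w_{n,k}:=N_{n,k}/n$ and $q_n:=(\sqrt{w_{n,1}},\dots,\sqrt{w_{n,K}})^\top$, a unit vector. Then
\[
\frac{S_{n,k}-N_{n,k}\tilde\theta_n}{\sqrt{N_{n,k}}}=\sqrt{N_{n,k}}\,(\hat\theta_{n,k}-\theta)-\sqrt{w_{n,k}}\sum_j \sqrt{w_{n,j}}\,\sqrt{N_{n,j}}\,(\hat\theta_{n,j}-\theta),
\]
because $\sqrt{N_{n,k}}\,w_{n,j}=\sqrt{w_{n,k}}\sqrt{w_{n,j}}\sqrt{N_{n,j}}$. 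The vector of these quantities is therefore $\sqrt{\theta(1-\theta)}\,(I-q_nq_n^\top)Z_n$. This gives the exact identity
\[
\mathcal X_n=\frac{\theta(1-\theta)}{\tilde\theta_n(1-\tilde\theta_n)}\;Z_n^\top (I-q_nq_n^\top) Z_n,
\]
using that $I-q_nq_n^\top$ is symmetric and idempotent.

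The main obstacle is that $q_n$ is random and need not converge: the lemma assumes only $N_{n,k}\to\infty$, not $N_n/n\to v$. To handle this, I would rely on the fact that the distribution of $Z^\top(I-qq^\top)Z$ with $Z\sim\mathcal N(0,I_K)$ is $\chi^2_{K-1}$ for every unit vector $q$, by rotation invariance. The precise route is to pick an orthogonal matrix $O_n$, measurable in $N_n$, whose first row is $q_n^\top$. Then $Z_n^\top(I-q_nq_n^\top)Z_n=\|(O_nZ_n)_{2:K}\|^2$, and it suffices to show $O_nZ_n\xrightarrow{\mathcal D}\mathcal N(0,I_K)$.

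I would get this convergence by reopening the martingale CLT inside Lemma~\ref{lem:componentwise} rather than using only its conclusion. Conditionally on the assignments, the normalised sums $\sum_{j\le n} X_{j,k}(\xi_{j,k}-\theta)/\sqrt{N_{n,k}}$ are asymptotically independent standard normals. This convergence should hold stably, or equivalently uniformly over deterministic orthogonal rotations, via characteristic functions: for fixed $t$, $\mathbb E[e^{i t^\top O Z_n}]\to e^{-\|t\|^2/2}$ uniformly in $O$. If a direct argument proves delicate, a subsequence argument works instead. Along any subsequence, compactness of the orthogonal group gives a further subsequence with $q_n\to q$. Along it, Slutsky applied to $Z_n$ gives $Z_n^\top(I-q_nq_n^\top)Z_n\to\chi^2_{K-1}$, whose law does not depend on $q$. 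This handles the case where $q_n$ converges almost surely along subsequences; in general I would fall back on the stable/uniform version, which is the technical point to verify. The conclusion $\mathcal X_n\xrightarrow{\mathcal D}\chi^2_{K-1}$ then follows from the identity above and Slutsky's lemma.
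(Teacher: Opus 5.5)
Your identity $\mathcal X_n=\frac{\theta(1-\theta)}{\tilde\theta_n(1-\tilde\theta_n)}\,Z_n^\top(I-q_nq_n^\top)Z_n$ is exactly the one the paper uses. You are also right that $N_{n,k}\to\infty$ alone does not make $q_n$ converge. The paper closes this step by invoking Theorem~\ref{thm:LLN} to get $q_n\to s=(\sqrt{v_1},\dots,\sqrt{v_K})$, a deterministic unit vector; then $I-ss^\top$ is a fixed rank-$(K-1)$ projection and Slutsky's lemma finishes. The gap is in your replacement for that step.

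The subsequence argument does not work for a \emph{random} $q_n$. Compactness only gives $\omega$-dependent subsequences. Along a deterministic subsequence, Prokhorov gives joint convergence of $(Z_n,q_n)$ to some $(Z,q)$, but $Z$ and $q$ may be dependent, so $Z^\top(I-qq^\top)Z$ need not be $\chi^2_{K-1}$; Slutsky needs $q$ deterministic. Uniform convergence of $\mathbb E[e^{it^\top OZ_n}]$ over deterministic rotations $O$ is also not equivalent to stable convergence. It follows automatically from $Z_n\xrightarrow{\mathcal D}\mathcal N(0,I_K)$ and says nothing about $O_nZ_n$ when $O_n$ is built from $N_n$, which the adaptive design couples to $Z_n$. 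Stable convergence itself would only help if $q_n$ converged in probability.

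In fact this step cannot be repaired at the stated generality. Take $K=2$ and the null instance $\theta_1=\theta_2=1/2$, so that $[Z_n]_1$ is computable from the data. At each point $t_j$ of a fixed super-exponentially growing grid, draw arm~1 at least once and until $[Z_n]_1\ge 1$ (this stops a.s.\ by the LIL), then draw arm~2 until $t_{j+1}$. Both $N_{n,k}\to\infty$. Yet along $n=t_{j+1}-1$ one has $q_n\to(0,1)$ in probability and $\mathcal X_n\ge 1-o_p(1)$, which is incompatible with a $\chi^2_1$ limit. The same example shows that Lemma~\ref{lem:componentwise} also needs more than $N_{n,k}\to\infty$, since there $[Z_n]_1\ge 1$ with probability tending to one. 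So reopening its martingale CLT cannot give the uniform or stable version you want.

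The missing ingredient is a deterministic limit for the allocation proportions. It is available for the designs the lemma is meant for: $\alpha$RTS under Condition B by Theorem~\ref{thm:LLN}, and $\alpha$RTS-FE by Theorems~\ref{thm:forced} and~\ref{thm:sparse-rate}. With it, your argument reduces to the paper's. Your subsequence idea then gives a mild generalisation, provided a CLT for $Z_n$ holds: it suffices that $q_n-c_n\to0$ in probability for some deterministic unit vectors $c_n$.
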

	\begin{remark}
		Lemma \ref{lem:pearson} is valid for all \(\alpha\)RTS designs when Condition \hyperlink{CB}{\textbf{B}} is verified, and also for \(\alpha\)RTS-FE designs.
\end{remark}

{Hence, the test that rejects $\mathcal{H}_0$ if $\mathcal{X}_n$ exceeds the quantile of order 0.95 of a chi-square distribution with $K-1$ degrees of freedom is asymptotically a level-$0.05$ test.} We now evaluate its finite-sample performance for different targeting strategies in the $\alpha$RTS family and different target allocations, on some instances with $K=4$ arms.

\paragraph*{Varying the Targeting Mechanism with Fixed Target}
We first evaluate the statistical performance of the proposed designs when the target allocation is fixed to the Neyman allocation. In this setting, the objective is to assess whether different targeting mechanisms within the $\alpha$RTS and $\alpha$RTS-FE families preserve the validity of hypothesis testing procedures. We also compare those adaptive data collection procedures to non-adaptive uniform sampling. 

Table \ref{tab:fixed-target} reports the empirical type I error rates of the Pearson Chi-squared test under the null hypothesis for a range of homogeneous configurations, along with power values under configurations in the alternative. All results are obtained from $1000$ independent simulations with a sample size $n=200$.


Under the null hypothesis (top panel of Table~\ref{tab:fixed-target}),
	all designs maintain type~I error rates reasonably close to the nominal
	\(5\%\) level, supporting the validity of the asymptotic calibration even
	for the moderate sample size \(n=200\). Although some variability is
	observed across parameter configurations and targeting rules, no design
	appears systematically best.
	
	Under the alternative, all methods achieve high power for well-separated configurations, making differences between targeting rules difficult to discern. More informative comparisons arise for the challenging instances reported in the bottom panel, where the power remains substantially below one. In these settings, adaptive targeting occasionally yields modest
	improvements over uniform sampling, but no targeting rule consistently outperforms the others. Similar conclusions hold for the forced-exploration variants (\(\alpha\)RTS-FE), indicating that the
	additional exploration does not materially affect testing performance in this non-sparse regime.	Overall, these results suggest that, under a fixed Neyman target allocation, the choice of targeting mechanism has only a limited impact on the finite-sample performance of the Pearson chi-square test

\begin{table}[tb]
	\centering
	\small
	\resizebox{\linewidth}{!}{%
	\begin{tabular}{c|cccc|cccc}
		\hline
		& \multicolumn{4}{c|}{\(\alpha.RTS\)} & \multicolumn{4}{c}{\(\alpha.RTS-FE\)} \\
		$(p_0, p_1, p_2, p_3)$
		& Uniform & Distance & ERADE2025 & D-Tracking
		& Uniform & Distance & ERADE2025 & D-Tracking \\
		\hline
		(0.1, 0.1, 0.1, 0.1) & 2.80\% & 1.00\% & 1.70\% & 0.90\% & 4.00\% & 1.70\% & 1.70\% & 1.50\% \\
		(0.3, 0.3, 0.3, 0.3) & 4.30\% & 5.50\% & 3.80\% & 4.30\% & 4.90\% & 5.50\% & 4.70\% & 4.70\% \\
		(0.5, 0.5, 0.5, 0.5) & 4.00\% & 4.90\% & 5.00\% & 4.00\% & 5.50\% & 5.30\% & 5.80\% & 6.20\% \\
		(0.7, 0.7, 0.7, 0.7) & 4.30\% & 5.20\% & 3.80\% & 4.50\% & 4.90\% & 5.50\% & 4.70\% & 4.70\% \\
		(0.9, 0.9, 0.9, 0.9) & 2.80\% & 1.00\% & 1.70\% & 1.00\% & 4.00\% & 1.70\% & 1.70\% & 1.50\% \\
		\hline
		(0.9, 0.8, 0.7, 0.6) & \textbf{87.70\%} & 85.90\% & 85.50\% & 85.30\% & \textbf{87.50\%} & 84.70\% & 83.80\% & 86.50\% \\
		(0.8, 0.7, 0.5, 0.4) & 98.10\% & \textbf{98.40\%} & 97.70\% & 98.20\% & 98.10\% & \textbf{98.40\%} & 98.20\% & 98.10\% \\
		(0.7, 0.6, 0.4, 0.2) & 99.90\% & 99.90\% & 99.70\% & \textbf{100.00\%} & \textbf{99.90\%} & 99.70\% & 99.80\% & 99.60\% \\
		(0.5, 0.4, 0.3, 0.2) & 78.50\% & \textbf{81.60\%} & \textbf{81.60\%} & 81.50\% & \textbf{80.20\%} & 79.80\% & 79.20\% & 78.30\% \\
		(0.4, 0.3, 0.2, 0.1) & \textbf{86.40\%} & 83.90\% & 84.70\% & 85.20\% & \textbf{88.90\%} & 85.20\% & 85.20\% & 83.80\% \\
		\hline
		(0.5, 0.5, 0.5, 0.55) & 6.90\% & \textbf{9.00\%} & 7.80\% & 8.70\% & \textbf{9.10\%} & 8.60\% & 8.80\% & 8.90\% \\
		(0.5, 0.5, 0.55, 0.55) & 8.80\% & 8.90\% & 9.40\% & \textbf{9.70\%} & 8.20\% & \textbf{10.10\%} & 9.60\% & 9.70\% \\
		(0.5, 0.55, 0.55, 0.55) & 7.80\% & 7.90\% & 8.90\% & \textbf{9.70\%} & \textbf{8.70\%} & 8.10\% & 7.70\% & 7.00\% \\
		(0.5, 0.5, 0.55, 0.6) & 18.40\% & \textbf{20.20\%} & 18.90\% & 19.30\% & 17.10\% & \textbf{18.10\%} & 17.70\% & 17.50\% \\
		(0.5, 0.55, 0.55, 0.6) & 15.10\% & \textbf{15.70\%} & 14.60\% & 15.10\% & 14.20\% & \textbf{15.20\%} & 14.20\% & 13.40\% \\
		\hline
	\end{tabular}
}
	\caption{Type-1 error (\%) for different targeting designs with fixed target allocation (Neyman), 1000 simulations, n = 200}
	\label{tab:fixed-target}
\end{table}



\paragraph*{Varying the Target Allocations With Fixed Targeting}

We now investigate the impact of the choice of the target allocation function on statistical performance, while keeping the targeting mechanism fixed. In this experiment, we use the distance-based design and compare three target allocations: Tymofyeyev, Neyman, and RSIHR, that we also compare with the non-adaptive uniform sampling. RSIHR was proposed by
\cite{Rosenberger01Opt}. It balances ethical and statistical considerations by
minimizing the expected number of treatment failures subject to a constraint
on the variance of the test statistic and is equal to
\[\rho_k^R(\Theta) = \frac{\sqrt{\theta_k}}{\sum_{\ell=1}^{K}\sqrt{\theta_{\ell}}}\;.\]

Table \ref{tab:fixed-targeting} reports the empirical type I error rates (top panel) and power (bottom panels) of the Pearson Chi-squared test across different configurations, based on $1000$ simulations with a sample size $n=200$.

%

Under the null hypothesis, the choice of target allocation has a noticeable
	impact on type~I error control. The Tymofyeyev allocation tends to be more
	conservative than both the uniform design and the Neyman and RSIHR
	allocations, which generally yield rejection rates closer to the nominal
	\(5\%\) level. Similar behavior is observed for the
	forced-exploration variants (\(\alpha\)RTS-FE).
	
	Under the alternative, all procedures achieve high power for well-separated configurations, making differences difficult to distinguish. More informative comparisons arise for the challenging instances reported in the
	bottom panel. In these settings, Neyman and RSIHR generally outperform the Tymofyeyev allocation and occasionally improve upon uniform sampling.
	However, the gains over the uniform design remain moderate, and no target allocation exhibits a clear and systematic advantage across all configurations. The forced-exploration variants display similar trends, suggesting that the exploration mechanism has only a limited effect on testing performance in
	this regime. Overall, the results indicate that the choice of target allocation can influence both type~I error control and power, although the
	improvements over uniform sampling remain relatively modest in the configurations considered here.

\begin{table}[tb]
	\centering
	\small
	\resizebox{\linewidth}{!}{%
	\begin{tabular}{c|cccc|cccc}
		\hline
		& \multicolumn{4}{c|}{\(\alpha.RTS\)} & \multicolumn{4}{c}{\(\alpha.RTS-FE\)} \\
		$(p_0, p_1, p_2, p_3)$
		& Uniform & Tymofyeyev & Neyman & RSIHR
		& Uniform & Tymofyeyev & Neyman & RSIHR \\
		\hline
		(0.1, 0.1, 0.1, 0.1) & 2.80\% & 0.40\% & 1.00\% & 1.30\% & 4.00\% & 0.30\% & 1.50\% & 1.10\% \\
		(0.3, 0.3, 0.3, 0.3) & 4.30\% & 1.60\% & 4.70\% & 5.50\% & 4.90\% & 1.80\% & 5.10\% & 4.10\% \\
		(0.5, 0.5, 0.5, 0.5) & 4.00\% & 1.60\% & 4.70\% & 5.30\% & 5.50\% & 2.20\% & 6.30\% & 6.00\% \\
		(0.7, 0.7, 0.7, 0.7) & 4.30\% & 1.60\% & 4.80\% & 4.50\% & 4.90\% & 1.80\% & 5.10\% & 4.50\% \\
		(0.9, 0.9, 0.9, 0.9) & 2.80\% & 0.40\% & 1.10\% & 4.30\% & 4.00\% & 0.30\% & 1.50\% & 2.90\% \\
		\hline
		(0.9, 0.8, 0.7, 0.6) & 87.70\% & 81.40\% & 84.80\% & \textbf{90.30\%} & 87.50\% & 81.00\% & 84.80\% & \textbf{89.70\%} \\
		(0.8, 0.7, 0.5, 0.4) & 98.10\% & 95.40\% & 98.10\% & \textbf{98.20\%} & 98.10\% & 95.30\% & \textbf{98.70\%} & 98.50\% \\
		(0.7, 0.6, 0.4, 0.2) & \textbf{99.90\%} & 97.90\% & 99.80\% & 99.70\% & \textbf{99.90\%} & 98.50\% & \textbf{99.90\%} & 99.70\% \\
		(0.5, 0.4, 0.3, 0.2) & 78.50\% & 74.00\% & \textbf{80.60\%} & 79.20\% & 80.20\% & 75.00\% & \textbf{80.30\%} & 79.70\% \\
		(0.4, 0.3, 0.2, 0.1) & \textbf{86.40\%} & 81.70\% & 85.30\% & 83.70\% & \textbf{88.90\%} & 84.20\% & 85.60\% & 83.50\% \\
		\hline		
		(0.5, 0.5, 0.5, 0.55) & 6.90\% & 4.30\% & \textbf{7.40\%} & 6.10\% & \textbf{9.10\%} & 2.50\% & 7.70\% & 6.10\% \\
		(0.5, 0.5, 0.55, 0.55) & 8.80\% & 4.10\% & 8.10\% & \textbf{9.60\%} & \textbf{8.20\%} & 3.20\% & 6.90\% & 7.20\% \\
		(0.5, 0.55, 0.55, 0.55) & \textbf{7.80\%} & 4.10\% & \textbf{7.80\%} & 6.60\% & \textbf{8.70\%} & 2.90\% & 7.10\% & 7.40\% \\
		(0.5, 0.5, 0.55, 0.6) & \textbf{18.40\%} & 9.30\% & 16.10\% & 14.70\% & \textbf{17.10\%} & 8.00\% & 13.90\% & 12.40\% \\
		(0.5, 0.55, 0.55, 0.6) & \textbf{15.10\%} & 7.60\% & 13.20\% & 11.20\% & \textbf{14.20\%} & 5.50\% & 13.00\% & 10.90\% \\
		\hline
	\end{tabular}
}
	\caption{Type-1 error (\%) for different target allocations with fixed targeting design (Distance), 1000 simulations, n = 200}
	\label{tab:fixed-targeting}
\end{table}

Figure~\ref{fig:fixed-tgting-power} reports the empirical power as a function of the number of patients $n$ for the configuration
	\(\Theta=(0.6,0.65,0.7,0.75)\) using the Distance targeting design with forced exploration. The results are based on \(n_{\mathrm{sim}}=1000\) independent simulation runs and the shaded bands correspond to 95\% confidence intervals. As expected, the power increases with the sample size for all designs. While the Uniform, Neyman and RSIHR allocations exhibit very similar power profiles,
	 the Tymofyeyev allocation yields significantly lower power, on all the range of sample sizes.  

	 \begin{figure}[h]
	\centering
	\includegraphics[width=0.7\linewidth]{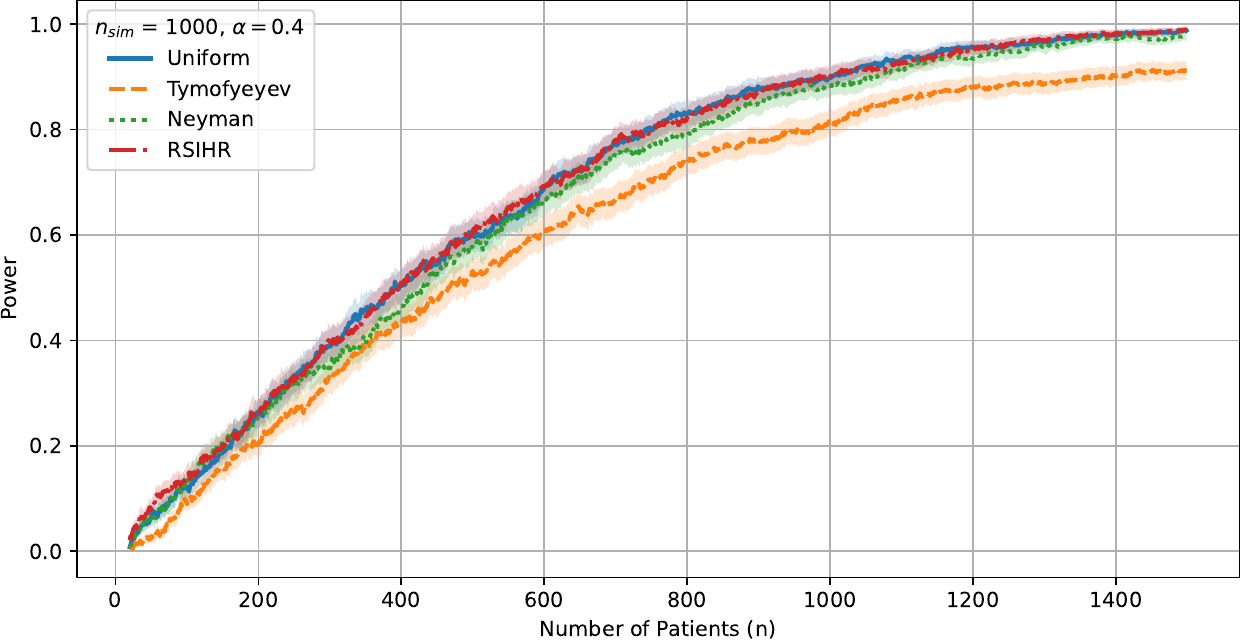}
	\caption{Power profile for fixed Distance Targeting}
	\label{fig:fixed-tgting-power}
\end{figure}

	This last observation is perhaps surprising as we recall that the Tymofyeyev target allocation is the one maximizing the power of this homogeneity test. However, for moderate sample sizes, we observe that a RAR procedure converging to this optimal allocation (even with minimal asymptotic variance) does not necessarily inherit this maximal power.

	\section{Conclusion}\label{sec:conclusion}
	In this paper, we introduced a unified theoretical framework for response-adaptive targeting in multi-treatment experiments. Rather than focusing on a specific allocation rule, we defined a broad class of adaptive designs, the $\alpha$RTS family, encompassing several existing procedures while allowing for new algorithmic variations.
	
	We established that all designs within this class share the same fundamental asymptotic properties, including strong consistency, asymptotic normality of both estimators and allocation proportions, and asymptotic efficiency. A key insight of our analysis is that these properties are structural and do not depend on the specific form of the allocation rule within the class. To address sparse target regimes, we proposed a forced-exploration extension and showed that it preserves the asymptotic guarantees while ensuring sufficient sampling of all treatments. Our simulation results illustrate that, while different designs may exhibit distinct finite-sample behaviors, they remain asymptotically equivalent, and highlight the importance of forced exploration in sparse settings.
	
	Overall, our work provides a flexible and robust foundation for designing and analyzing response-adaptive procedures in multi-arm trials.
	
	\subsection*{Acknowledgments}
		The PhD of Redouane Yagouti is founded by an Inria Action Exploration grant called BETA-3$K$. The authors acknowledge Sof\'ia Villar and Lukas Pin for interesting discussions on Response Adaptive Randomization.

	\bibliographystyle{plainnat}
	\bibliography{references}

@article{hu2004asymptotic,
	title={Asymptotic properties of doubly adaptive biased coin designs for multitreatment clinical trials},
	author={Hu, Feifang and Zhang, Li-Xin},
	journal={The Annals of Statistics},
	volume={32},
	number={1},
	pages={268--301},
	year={2004},
	publisher={Institute of Mathematical Statistics}
}

@article{HuRosenberger03VariabilityPower,
title = {Optimality, Variability, Power
Evaluating Response-Adaptive Randomization Procedures for Treatment Comparisons},
author = {Feifang Hu and William Rosenberger},
journal = {Journal of the American Statistical Society},
volume = {98 (463)},
pages = {671-678}
}

@article{alkhnefr2025efficient,
	title={Efficient randomized adaptive designs for multi-arm clinical trials},
	author={Alkhnefr, Norah and Hu, Feifang and Zhai, Guannan},
	journal={Statistical Methods in Medical Research},
	volume={34},
	number={9},
	pages={1886--1898},
	year={2025},
	publisher={SAGE Publications Sage UK: London, England}
}

@article{Efron71BCD,
author = {Efron, Bradley},
title = {Forcing a sequential experiment to be balanced},
year = {1971},
journal = {Biometrika},
volume = {58},
pages = {403–417}
}

@article{hu2009efficient,
	title={Efficient randomized-adaptive designs},
	author={Hu, Feifang and Zhang, Li-Xin and He, Xuming},
	journal={The Annals of Statistics},
	pages={2543--2560},
	year={2009},
	publisher={JSTOR}
}

@article{Villar23Review,
author = {David S. Robertson  and  Kim May Lee  and Boryana C. López-Kolkovska  and Sofía S. Villar},
title = {Response-adaptive randomization in clinical trials: from myths to practical considerations},
journal = {Statistical Science},
volume = {38(2)},
year = {2023}
}

@Article{Thompson33,
  Title                    = {{On the likelihood that one unknown probability exceeds another in view of the evidence of two samples}},
  Author                   = {Thompson, W.R.},
  Journal                  = {Biometrika},
  Year                     = {1933},
  Pages                    = {285--294},
  Volume                   = {25}
}

@Article{Robbins52Freq,
  Title                    = {{Some aspects of the sequential design of experiments}},
  Author                   = {Robbins, H.},
  Journal                  = {Bulletin of the American Mathematical Society},
  Year                     = {1952},
  Pages                    = {527--535},
  Volume                   = {58(5)}
}

@ARTICLE{Thall07,
	TITLE = {Practical Bayesian Adaptive Randomization in Clinical Trials},
	AUTHOR = {P.F. Thall and J.K. Wathen},
	JOURNAL = {European Journal on Cancer},
	YEAR = {2007},
	VOLUME = {43},
	PAGES = {859-866}
}

@article{Rosenberger01Opt,
title = {Optimal Adaptive Designs for Binary Response Trials},
author = {Rosenberger, W.F. and Stallard, N. and Ivanova, A. and Harper, C.N. and Ricks, M.L.},
journal = {Biometrics},
volume = {57},
pages = {909–913},
year = {2001}}

@article{WeiDurham78RPW,
author={Wei,  LJ and  Durham S.},
title = {The Randomized Play-the-winner Rule in Medical Trials},
journal = {Journal of Medical Statistics Association},
year = {1978},
volume = {73},
pages = {840–843}
}

@article{Bartlett85ECMO,
author = {Bartlett, R. and Roloff D. and Cornell R. and Andrews A. and Dillon P. and  Zwischenberger J.},
title = {Extracorporeal Circulation in Neonatal Respiratory Failure: A Prospective Randomized Study},
journal = {Pediatrics Journal},
year = {1985},
volume = {76 (4)},
pages = {479–487}}

@book{BanditBook,
author = {Lattimore, Tor and Szepesvari, Csaba},
publisher = {Cambridge University Press},
title = {{Bandit Algorithms}},
year = {2019}
}

@article{hu2006asymptotically,
	title={Asymptotically best response-adaptive randomization procedures},
	author={Hu, Feifang and Rosenberger, William F and Zhang, Li-Xin},
	journal={Journal of Statistical Planning and Inference},
	volume={136},
	number={6},
	pages={1911--1922},
	year={2006},
	publisher={Elsevier}
}

@book{hu2006theory,
	title={The theory of response-adaptive randomization in clinical trials},
	author={Hu, Feifang and Rosenberger, William F},
	year={2006},
	publisher={John Wiley \& Sons}
}

@article{Doob1936,
	title = {Note on Probability},
	author = {Doob, J. L.}, 
	year = {1936}, 
	journal = {Ann. Math},
	volume ={37},
	pages = {363-367}
}

@article{Chow1967,
	title = {On a Strong Law of Large Numbers for Martingales},
	author = {Chow, Y. S.}, 
	year = {1967}, 
	journal = {Ann. Math. Statist.},
	volume ={328},
	pages = {610-611}
}

@article{Stout1970,
	title = {A Martingale Analogue of Kolmogorov's Law
	of the Iterated Logarithm},
	author = {Stout, W. F.}, 
	year = {1970}, 
	journal = {Z. Wahrscheinlichkeitstheorie verw. Gebiete},
	volume ={15},
	pages = {279-290}
}

@article{melfi1998variablility,
	title={Variablility in adaptive designs for estimation of success probabilities},
	author={Melfi, Vincent and Page, Connie},
	journal={Lecture Notes-Monograph Series},
	pages={106--114},
	year={1998},
	publisher={JSTOR}
}

@misc{jf-le-gall,
	author       = {Jean-Fran\c{c}ois Le Gall},
	title        = {Int\'egration, Probabilit\'es et Processus Al\'eatoires},
	year         = {2006},
	howpublished = {Cours de Master 2, Universit\'e Paris-Sud},
	note         = {Lecture notes}
}

@Article{KLUCBJournal,
  Title                    = {{{K}ullback-{L}eibler upper confidence bounds for optimal sequential allocation}},
  Author                   = {Capp{\'e}, O. and Garivier, A. and Maillard, O-A. and Munos, R. and Stoltz, G.},
  Journal                  = {Annals of Statistics},
  Year                     = {2013},
  Pages                    = {1516--1541},
  Volume                   = {41(3)}
}

@book{marc-yor,
	author    = {Daniel Revuz and Marc Yor},
	title     = {Continuous Martingales and Brownian Motion},
	publisher = {Springer},
	year      = {1999},
	edition   = {3rd},
	series    = {Grundlehren der mathematischen Wissenschaften},
	volume    = {293},
	address   = {Berlin}
}

@inproceedings{garivier2016optimal,
	title={Optimal best arm identification with fixed confidence},
	author={Garivier, Aur{\'e}lien and Kaufmann, Emilie},
	booktitle={Conference on Learning Theory},
	pages={998--1027},
	year={2016},
	organization={PMLR}
}

@article{tymofyeyev2007implementing,
	title={Implementing optimal allocation in sequential binary response experiments},
	author={Tymofyeyev, Yevgen and Rosenberger, William F and Hu, Feifang},
	journal={Journal of the American Statistical Association},
	volume={102},
	number={477},
	pages={224--234},
	year={2007},
	publisher={Taylor \& Francis}
}

@incollection{pin2024response,
	title={Response-adaptive randomization designs based on optimal allocation proportions},
	author={Pin, Lukas and Villar, Sofia S and Rosenberger, William F},
	booktitle={Biostatistics in Biopharmaceutical Research and Development: Clinical Trial Design, Volume 1},
	pages={313--339},
	year={2024},
	publisher={Springer}
}

@article{neyman1934representative,
	author  = {Neyman, Jerzy},
	title   = {On the two different aspects of the representative method: The method of stratified sampling and the method of purposive selection. },
	journal = {Journal of the Royal Statistical Society},
	volume   = {97},
	number   = {4},
	pages    = {558--606},
	year     = {1934}
}

@book{Cochran1977,
	author    = {Cochran, William G.},
	title     = {Sampling Techniques},
	year      = {1977},
	edition   = {3rd},
	publisher = {John Wiley \& Sons},
	address   = {New York}
}
	
	\newpage
	\begin{appendix}
		\section{Supporting Proofs for $\alpha$RTS}\label{appnA}
		
We fix a design that belongs to the \(\alpha\)RTS family for some \(0\leq\alpha <1\) and introduce the following notation for \(k \in [K]\):
\begin{itemize}
	\item $p_{m,k} = \mathbb{E}[X_{m,k} \mid \mathcal{F}_{m-1}]$	
	\item $\Delta M_{m,k} = X_{m,k} - \mathbb{E}[X_{m,k} \mid \mathcal{F}_{m-1}] = X_{m,k} - p_{m,k}$
	\item $Q_{n,k} = \sum_{m=1}^{n}X_{m,k}(\xi_{m,k} - \theta_{k})$ and denote \(\mathbf{Q_n} := (\mathbf{Q_{n,1}}, \dots, \mathbf{Q_{n,K}})\)	
\end{itemize}
and
	\[U_{n,k} = \sum_{m=1}^{n-1} \alpha \hat{\rho}_{m,k} + M_{n,k} - n \hat{\rho}_{n,k}\;.\]
We define the last hitting time $\ell_{n,k}$ as
\begin{equation}\label{eq:stopping-time-def}
	\ell_{n,k} := \max \left\{\, m \leq n ; \, \frac{N_{m,k}}{m} \leq \hat{\rho}_{m,k}  \;.\right\}
\end{equation}
We first prove a key lemma that is central to establish our main theorems.

\begin{lemma}\label{lem:preliminary}
	For any \(k \in [K]\) we have the following results:
	\begin{enumerate}[label=(\roman*)]
		\item \(\left(\ell_{n,k}\right)_{n \geq 1}\) is a non-decreasing sequence and \(\forall n \in \mathbb{N}: \,\ell_{n,k} \leq n\)
		\item for all $k\in [K]$, $n \in \mathbb{N}$,\begin{equation}\label{eq:prelim-4}
			N_{n,k} - n\hat{\rho}_{n,k} \leq 1+N_{\ell_{n,k},k} - \ell_{n,k} \hat{\rho}_{\ell_{n,k},k} + U_{n,k} - U_{\ell_{n,k},k}	
		\end{equation}
		\item for all $k\in [K]$, $n \in \mathbb{N}$,
		\begin{equation}\label{eq:condition}
			N_{\ell_{n,k},k} - \ell_{n,k} \hat{\rho}_{\ell_{n,k},k} \leq o(\sqrt{n}), \ a.s.
			\quad 
		\end{equation}
	\end{enumerate}
\end{lemma}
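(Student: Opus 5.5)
The three claims are essentially deterministic consequences of the definition of $\ell_{n,k}$ and of the throttling property \eqref{eq:throttle}. Only (ii) needs a short telescoping computation. Throughout, I take $m_0=0$, as allowed by the remark following the definition of $\alpha$RTS. If $m_0>0$, the finitely many burn-in steps contribute at most an additive constant $Km_0$ to (ii); this is harmless, since Lemma~\ref{lem:generic-conditions-main} only uses (ii)--(iii) up to $o(\sqrt n)$ terms.

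For (i), I will use the convention $N_{0,k}=0$, so that $m=0$ always belongs to the set in \eqref{eq:stopping-time-def}. The set is then nonempty and $\ell_{n,k}\in\{0,\dots,n\}$. Since $\{m\le n: N_{m,k}/m\le\hat\rho_{m,k}\}\subseteq\{m\le n+1: N_{m,k}/m\le\hat\rho_{m,k}\}$, the maximum is non-decreasing in $n$, and $\ell_{n,k}\le n$ holds by construction.

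For (iii), at time $m=\ell_{n,k}$ we have $N_{\ell_{n,k},k}\le \ell_{n,k}\hat\rho_{\ell_{n,k},k}$ by definition. Hence $N_{\ell_{n,k},k}-\ell_{n,k}\hat\rho_{\ell_{n,k},k}\le 0$ for every $n$, surely, which is stronger than the $o(\sqrt n)$ bound required.

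For (ii), write $\ell=\ell_{n,k}$. If $\ell=n$ the inequality reads $0\le 1$, so assume $\ell<n$. By maximality of $\ell$, every $m\in\{\ell+1,\dots,n\}$ satisfies $N_{m,k}/m>\hat\rho_{m,k}$. The throttling property \eqref{eq:throttle} then gives $p_{m+1,k}\le\alpha\hat\rho_{m,k}$ for $m=\ell+1,\dots,n-1$. I decompose $N_{n,k}-N_{\ell,k}=\sum_{m=\ell+1}^{n}(\Delta M_{m,k}+p_{m,k})$. I bound the first compensator term crudely by $p_{\ell+1,k}\le 1$, and each remaining one by $p_{m,k}\le\alpha\hat\rho_{m-1,k}$. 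Adding the nonnegative term $\alpha\hat\rho_{\ell,k}$ then yields
\[
N_{n,k}-N_{\ell,k}\le 1+M_{n,k}-M_{\ell,k}+\sum_{m=\ell}^{n-1}\alpha\hat\rho_{m,k}.
\]
On the other hand, unrolling the definition of $U$ gives
\[
U_{n,k}-U_{\ell,k}=\sum_{m=\ell}^{n-1}\alpha\hat\rho_{m,k}+M_{n,k}-M_{\ell,k}-n\hat\rho_{n,k}+\ell\hat\rho_{\ell,k}.
\]
Subtracting $n\hat\rho_{n,k}$ from both sides of the previous bound and substituting this identity gives exactly \eqref{eq:prelim-4}.

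The only delicate point I anticipate is this bookkeeping at the boundary step $\ell+1$. There the throttle is unavailable, because arm $k$ was not over-sampled at time $\ell$. This is precisely why the ``$+1$'' appears: the martingale increment $\Delta M_{\ell+1,k}$ must be kept inside $M_{n,k}-M_{\ell,k}$, while its compensator $p_{\ell+1,k}$ is bounded by $1$. The extra summand $\alpha\hat\rho_{\ell,k}\ge 0$ in $U$ only loosens the bound.
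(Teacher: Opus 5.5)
Your proof is correct and follows essentially the same route as the paper: for (ii) the paper also telescopes $N_{n,k}-n\hat\rho_{n,k}$ against $U_{n,k}-U_{\ell_{n,k},k}$, applies the throttle \eqref{eq:throttle} at steps $\ell_{n,k}+2,\dots,n$, bounds the boundary compensator $p_{\ell_{n,k}+1,k}$ by $1$, and drops $-\alpha\hat\rho_{\ell_{n,k},k}\le 0$. For (iii) you read $N_{\ell_{n,k},k}-\ell_{n,k}\hat\rho_{\ell_{n,k},k}\le 0$ directly off the definition, whereas the paper splits on whether $\ell_{n,k}<Km_0+1$ to get the bound $Km_0$; your version is slightly sharper, and you correctly place the burn-in effect in (ii), which is where the throttle is actually unavailable.
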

\begin{proof}
	\begin{enumerate}[label=(\roman*)]
		\item Obvious from definition \eqref{eq:stopping-time-def}.
		\item By definition of \(U_{n,k}\) we have \(\Delta U_{m,k} = \alpha \hat{\rho}_{m-1,k} - p_{m,k} 
		+ \Delta \Big( N_{m,k} - m \hat{\rho}_{m,k} \Big)\).
%
		Then 
		\begin{align*}
			U_{n,k} - U_{\ell_{n,k},k}
			&= \sum_{m=\ell_{n,k}+1}^n \Big( \alpha \hat{\rho}_{m-1,k} - p_{m,k} \Big)
			+ N_{n,k} - n \hat{\rho}_{n,k}
			- \Big( N_{\ell_{n,k},k} - \ell_{n,k} \hat{\rho}_{\ell_{n,k},k} \Big)
		\end{align*}
		So 
		\begin{align*}
			N_{n,k} - n \hat{\rho}_{n,k}
			&= N_{\ell_{n,k},k} - \ell_{n,k} \hat{\rho}_{\ell_{n,k},k}
			+ p_{\ell_{n,k}+1,k} - \alpha \hat{\rho}_{\ell_{n,k},k} \\&\,\,+ \sum_{m=\ell_{n,k}+2}^n \underbrace{\Big( p_{m,k} - \alpha \hat{\rho}_{m-1,k} \Big)}_{\leq 0 \quad}
			+ U_{n,k} - U_{\ell_{n,k}, k}\\					
			&\leq p_{\ell_{n,k}+1,k} + N_{\ell_{n,k},k} - \ell_{n,k} \hat{\rho}_{\ell_{n,k},k}
			+ U_{n,k} - U_{\ell_{n,k}, k}
		\end{align*}
		So we conclude 	
		\[N_{n,k} - n\hat{\rho}_{n,k} \leq 1 + N_{\ell_{n,k},k} - \ell_{n,k} \hat{\rho}_{\ell_{n,k},k} + U_{n,k} - U_{\ell_{n,k},k}\;.\]
		
		\item It is easy to check that if $\ell_{n,k} < K m_0+1$ then \(N_{\ell_{n,k},k} - \ell_{n,k} \hat{\rho}_{\ell_{n,k},k} \leq K m_0\),
		and if \(\ell_{n,k} \geq Km_0 + 1\) then \(N_{\ell_{n,k},k} - \ell_{n,k} \hat{\rho}_{\ell_{n,k},k} \leq 0\).
		So by combining the two inequalities we obtain that \(N_{\ell_{n,k},k} - \ell_{n,k} \hat{\rho}_{\ell_{n,k},k} \leq K m_0 = o(\sqrt{n})\)\;.
	\end{enumerate}
\end{proof}

\subsection{Proof of Theorem \ref{thm:LLN}}\label{proof:thm1}

We now explain how Lemma~\ref{lem:preliminary} leads to Theorem~\ref{thm:LLN}.

\begin{proof}[Proof of Theorem \ref{thm:LLN}]	
The first step of the proof, as in the work of \cite{hu2004asymptotic}, is to establish that $\hat{\rho}_{n}$ has a finite (non-sparse) limit.
	
	For \(k \in [K]\), if \(N_{n,k} \to +\infty\), then using  Condition \hyperlink{CA}{\textbf{A}} and the law of large numbers 
	we get \(\hat\theta_{n,k} \to \theta_k\). Otherwise if \(\sup_{n \geq 1}N_{n,k} < +\infty\) then \(\left(\hat\theta_{n,k}\right)_{n \geq 1}\) takes finite values and also has a limit, that belong to $V_k$ (which is defined in our statement of Condition  \hyperlink{CB}{\textbf{B}}). Hence $\hat\theta_{n}$ converges to some element $z$ in $I_1\times \dots \times I_K$.  By continuity of $\rho$ and from Condition \hyperlink{CB}{\textbf{B}} we obtain
	\begin{equation}\label{eq:conv-rho}
		\exists u \in (0,1)^K: \quad \hat{\rho}_n \to u:= \rho(z)\;.
	\end{equation}
	
	As \(M_n\) is a martingale, we have by LLN for martingales (see \cite{Chow1967}) that $M_{n,k} = o(n)$ a.s. so
	\begin{align*}
		\frac{U_{n,k}}{n} &= \frac{1}{n} \sum_{m=0}^{n-1} \alpha \hat{\rho}_{m,k} - \hat{\rho}_{n,k} + o(1) \\
		\implies \lim_{n \to \infty} \frac{U_{n,k}}{n} &= - (1 - \alpha) u_k \quad a.s. \quad\text{; combining \eqref{eq:conv-rho} and the Cesaro lemma}
	\end{align*}
	Using Lemma \ref{lem:convergence} and Inequality \eqref{eq:condition} from Lemma \ref{lem:preliminary} we obtain that
	\[
	\left(\frac{1}{n}+\frac{N_{\ell_{n,k},k} - \ell_{n,k} \hat{\rho}_{\ell_{n,k},k}}{n} + \frac{U_{n,k} - U_{\ell_{n,k},k}}{n} \right)^+ \to 0 \quad a.s.
	\]
	So by injecting the previous result in Inequality \eqref{eq:prelim-4} from Lemma \ref{lem:preliminary}, we obtain
	\[
	\forall k \in [K]: \, \left(\frac{N_{n,k}}{n} - \hat{\rho}_{n,k}\right)^+ \to 0 \quad a.s.
	\]
	and we have
	\begin{align*}
		\left(\hat{\rho}_{n,k} - \frac{N_{n,k}}{n}\right)^+ &= \left(1 - \frac{N_{n,k}}{n} - (1 - \hat{\rho}_{n,k})\right)^+ \\
		&= \left(\sum_{j=1; j \neq k}^K \frac{N_{n,j}}{n} - \hat{\rho}_{n,j}\right)^+ \\
		&\leq \sum_{j=1; j \neq k}^K \left(\frac{N_{n,j}}{n} - \hat{\rho}_{n,j}\right)^+  \to 0
	\end{align*}
	
	So we obtain that \(\frac{N_{n,k}}{n} - \hat{\rho}_{n,k} \to 0\),  and then \(\lim_{n \to \infty} \frac{N_{n,k}}{n} = \lim\limits_{n \to \infty} \hat{\rho}_{n,k} = u_k \in (0,1)\)
	
	A straightforward consequence is that $\forall k \in [K], N_{n,k} \to \infty$ a.s., hence $\hat{\Theta}_n \to \Theta$ and by continuity of \(\rho\) we obtain that \(\hat{\rho}_n = \rho(\hat{\Theta}_n) \to \rho(\Theta) = v \in (0,1)^K\), hence
	\[
	\lim_{n \to \infty} \frac{N_{n,k}}{n} = \lim_{n \to \infty} \hat{\rho}_{n,k} = v_k
	\]
	Using further Lemma A.4 \cite{hu2004asymptotic} and the Law of Iterated Logarithm we get
	\[
	\hat{\Theta}_n - \Theta = O\left(\sqrt{\frac{\log\!\log n}{n}}\right) \quad \text{a.s.}
	\]
	
	In addition, the second order Taylor expansion of \(\rho\)  gives that
	\begin{align*}
		\hat{\rho}_n - v  &= (\hat{\Theta}_n-\Theta)\frac{\partial \rho}{\partial y}\Big|_{\Theta} + O(\|\hat{\Theta}_n-\Theta\|^2) \\
		&= O\left(\sqrt{\frac{\log\!\log n}{n}}\right) \quad \text{a.s.}
	\end{align*}
	
\end{proof}

\subsection{Proof of Theorem~\ref{thm:normality-gen-erade}}\label{proof:thm2}To prove Theorem~\ref{thm:normality-gen-erade}, we need a more precise characterization of the behavior of $U_{n,k} - U_{\ell_{n,k},n}$, established in a series of lemma. We observe that the proofs of these lemmas only use the conclusions of Lemma~\ref{lem:preliminary}.

\begin{lemma}\label{lem:diff-U-approx}
	Let $(\ell_n)_{n \geq 1}$ a positive increasing sequence such that $\forall n \in \mathbb{N}: \ell_n \leq n$. Under conditions \hyperlink{CA}{\textbf{A}} and \hyperlink{CB}{\textbf{B}} we have the following
	\begin{enumerate}[label=(\roman*)]
		\item For all \(k \in [K]\) 
		\begin{equation}\label{eq:prelim-2}
			U_{n,k} - U_{\ell_{n},k} = (n - \ell_{n})[-(1 - \alpha)v_k + o(1)] + M_{n,k} - M_{\ell_{n},k} + \ell_{n} (\hat{\rho}_{\ell_{n},k} - \hat{\rho}_{n,k})
		\end{equation}
		\item There exists a constant \(C \geq 0\) such that
		\begin{equation}\label{eq:prelim-22}
			|\ell_n(\hat{\rho}_n - \hat{\rho}_{\ell_n})| \le o(1)(n-\ell_n) + C\|Q_n - Q_{\ell_n}\|  + o_p(\sqrt{n})
		\end{equation}		
	\end{enumerate}
\end{lemma}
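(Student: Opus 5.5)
Part (i) follows by expanding the definition of $U$ and using the a.s. convergence $\hat\rho_m\to v$ from Theorem~\ref{thm:LLN}. Part (ii) follows from a Taylor expansion of $\rho$ combined with the explicit form of the estimator $\hat\theta_{m,k}$.

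\emph{Part (i).} By definition,
\[
U_{n,k}-U_{\ell_n,k}=\alpha\sum_{m=\ell_n}^{n-1}\hat\rho_{m,k}+M_{n,k}-M_{\ell_n,k}-n\hat\rho_{n,k}+\ell_n\hat\rho_{\ell_n,k}.
\]
We write $-n\hat\rho_{n,k}+\ell_n\hat\rho_{\ell_n,k}=-(n-\ell_n)\hat\rho_{n,k}+\ell_n(\hat\rho_{\ell_n,k}-\hat\rho_{n,k})$. By Theorem~\ref{thm:LLN}, $\hat\rho_{m,k}\to v_k$ a.s. For any $\varepsilon>0$ there is $m_\varepsilon$ such that $|\hat\rho_{m,k}-v_k|\le\varepsilon$ for $m\ge m_\varepsilon$. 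We split the sum at $m_\varepsilon$. The indices $m\ge m_\varepsilon$ contribute $(n-\ell_n)(v_k+O(\varepsilon))$. The indices $m<m_\varepsilon$ contribute at most $m_\varepsilon$, since each term is bounded by one. That residual is $o(1)(n-\ell_n)$ only when $n-\ell_n\to\infty$; when $n-\ell_n$ stays bounded and $\ell_n\to\infty$, all indices exceed $m_\varepsilon$ eventually. If $\ell_n$ stays bounded, then $n-\ell_n\sim n$ and the constant is again negligible. Together with $-(n-\ell_n)\hat\rho_{n,k}=-(n-\ell_n)(v_k+o(1))$, this gives $(n-\ell_n)[-(1-\alpha)v_k+o(1)]$. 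The remaining terms are exactly those displayed in \eqref{eq:prelim-2}.

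\emph{Part (ii).} Since $\rho$ is $C^2$ on $\mathcal H$ and $\hat\Theta_m\to\Theta$ (Theorem~\ref{thm:LLN}), a mean-value/Taylor argument gives
\[
\hat\rho_n-\hat\rho_{\ell_n}=(G+o(1))(\hat\Theta_n-\hat\Theta_{\ell_n}),
\]
where the $o(1)$ term can be dropped into a constant factor for large indices. Next we use the estimator in \eqref{eq:dbcd-estimator}:
\[
\hat\theta_{m,k}-\theta_k=\frac{Q_{m,k}+\theta_{0,k}-\theta_k}{N_{m,k}+1}.
\]
Hence
\[
\ell_n(\hat\theta_{n,k}-\hat\theta_{\ell_n,k})=\frac{\ell_n}{N_{n,k}+1}(Q_{n,k}-Q_{\ell_n,k})+\ell_n\,(Q_{\ell_n,k}+c_k)\Big(\frac{1}{N_{n,k}+1}-\frac{1}{N_{\ell_n,k}+1}\Big).
\]
\begin{itemize}
\item The first term is at most $C|Q_{n,k}-Q_{\ell_n,k}|$, because $\ell_n/N_{n,k}\le n/N_{n,k}\to 1/v_k$.
\item In the second term, the bracket equals $-(N_{n,k}-N_{\ell_n,k})/((N_{n,k}+1)(N_{\ell_n,k}+1))$, and $N_{n,k}-N_{\ell_n,k}\le n-\ell_n$.
\item By the LIL and Theorem~\ref{thm:LLN}, $Q_{\ell_n,k}=O(\sqrt{\ell_n\log\log\ell_n})$ and $N_{\ell_n,k}\asymp\ell_n$ whenever $\ell_n\to\infty$. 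So the second term is $O\big((n-\ell_n)\sqrt{\log\log\ell_n/\ell_n}\big)=o(1)(n-\ell_n)$.
\end{itemize}

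The main obstacle is the regime where $\ell_n$ does not go to infinity, or is small compared with $n$. There the ratios $\ell_n/N_{\ell_n,k}$ are uncontrolled and the asymptotic bounds above are not available. I would treat it directly. When $\ell_n\le\sqrt n$, say, the quantity $\ell_n|\hat\rho_n-\hat\rho_{\ell_n}|\le 2\ell_n$ is bounded by $\ell_n\le\sqrt n$. I would attempt to sharpen this cutoff to $\ell_n=o(\sqrt n)$ using $\hat\rho$ bounded, which would be absorbed in $o_p(\sqrt n)$. If that fails, the leftover $\ell_n\|\hat\Theta_{\ell_n}-\Theta\|$ terms can be bounded via the CLT of Lemma~\ref{lem:componentwise} as $O_p(\sqrt{\ell_n})$ multiplied by vanishing factors, which gives the $o_p(\sqrt n)$ slack. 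The Taylor remainder $\ell_n O(\|\hat\Theta_n-\hat\Theta_{\ell_n}\|^2)$ is handled by the same split, using $\|\hat\Theta_m-\Theta\|^2=O(\log\log m/m)$.
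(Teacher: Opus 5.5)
Your part (i) is the paper's computation. The paper splits on $\ell_n$ bounded versus $\ell_n\to\infty$, which is exhaustive because $\ell_n$ is non-decreasing. Your cases on $n-\ell_n$ are not literally exhaustive, but your remark that all summation indices exceed $m_\varepsilon$ once $\ell_n\ge m_\varepsilon$ covers the missing case.

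In part (ii) you split $\ell_n(\hat\theta_{n,k}-\hat\theta_{\ell_n,k})$ the same way the paper does. There is an increment term bounded by $C|Q_{n,k}-Q_{\ell_n,k}|$ via $\ell_n/N_{n,k}\le n/N_{n,k}$, and a cross term of order $o(1)(n-\ell_n)$. What differs is the linearization:
\begin{itemize}
\item The paper expands $\hat\rho_n$ and $\hat\rho_{\ell_n}$ separately to second order around $\Theta$. It uses $\hat\Theta_n-\Theta=O_p(n^{-1/2})$ from the maximal inequality, together with the Bahadur representation. This produces its $O_p(1)$ and $o_p(\sqrt n)$ remainders, and its $O_p(1/\ell_n)$ step tacitly treats $\ell_n$ as deterministic.
\item You apply the mean-value theorem on the segment from $\hat\Theta_{\ell_n}$ to $\hat\Theta_n$. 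You then use the exact identity $\hat\theta_{m,k}-\theta_k=(Q_{m,k}+\theta_{0,k}-\theta_k)/(N_{m,k}+1)$, and control the cross term with the LIL instead of the martingale SLLN.
\end{itemize}
Your argument is pathwise, so it applies verbatim to the random hitting times $\ell_{n,k}$. It also gives a stronger almost-sure bound with no $o_p(\sqrt n)$ term. The price is that it is written for the specific estimator \eqref{eq:dbcd-estimator} rather than any estimator with a Bahadur-type representation. This is easily fixed: an $o(N_{m,k}^{-1/2})$ remainder would only add $o(\sqrt n)$ terms.

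Your ``main obstacle'' paragraph is more involved than necessary, and its fallback should be dropped. Pathwise, $\ell_n$ is either eventually constant or tends to infinity.
\begin{itemize}
\item If $\ell_n$ is eventually constant, then $\ell_n\|\hat\rho_n-\hat\rho_{\ell_n}\|_1\le 2\ell_n=O(1)$, which is trivially absorbed.
\item If $\ell_n\to\infty$, your main argument already covers everything, including $\ell_n=o(n)$. This is because $N_{\ell_n,k}/\ell_n\to v_k$ and $\hat\Theta_{\ell_n}\to\Theta$ however $\ell_n$ compares with $n$.
\end{itemize}
So no cutoff at $\sqrt n$ is needed. Even in a regime $\ell_n\le\varepsilon n$, the trivial bound $2\ell_n$ is absorbed by the $o(1)(n-\ell_n)$ slack, not the $o_p(\sqrt n)$ one. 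The proposed appeal to Lemma~\ref{lem:componentwise} would not work as stated anyway. It is a distributional statement at deterministic times, and it does not control $\hat\Theta_{\ell_n}-\Theta$ at a random index.
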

\begin{proof}
	\begin{enumerate}[label=(\roman*)]
		\item \begin{align*}
			&U_{n,k} - U_{\ell_{n},k} = \sum_{m=\ell_{n}}^{n-1} \alpha \hat{\rho}_{m,k} - n \hat{\rho}_{n,k} + \ell_{n} \hat{\rho}_{\ell_{n},k} + M_{n,k} - M_{\ell_{n,k},k} \\
			=& \sum_{m=\ell_{n}}^{n-1} \alpha (\hat{\rho}_{m,k} - v_k) + \alpha(n - \ell_{n}) v_k - (n - \ell_{n}) \hat{\rho}_{n,k} + \ell_{n} (\hat{\rho}_{\ell_{n},k} - \hat{\rho}_{n,k}) \\ 
			&\quad + M_{n,k} - M_{\ell_{n},k} \\
			=& -(1 - \alpha)(n - \ell_{n}) v_k + \sum_{m=\ell_{n}}^{n-1} \alpha (\hat{\rho}_{m,k} - v_k) - (n - \ell_{n})(\hat{\rho}_{n,k} - v_k) + M_{n,k} - M_{\ell_{n},k} \\
			&\quad + \ell_{n} (\hat{\rho}_{\ell_{n},k} - \hat{\rho}_{n,k}) \\ 
			=& (n-\ell_{n})\left[-(1 - \alpha)v_k - (\hat{\rho}_{n,k} - v_k) + \frac{\alpha}{n - \ell_{n}}\sum_{m=\ell_{n}}^{n-1} (\hat{\rho}_{m,k} - v_k)\right]+ M_{n,k} - M_{\ell_{n},k} \\
			&\quad + \ell_{n} (\hat{\rho}_{\ell_{n},k} - \hat{\rho}_{n,k}) \\
		\end{align*}
		Theorem \ref{thm:LLN} applies here and gives that \(\lim_{n \to +\infty} \hat{\rho}_n = v\). If \(\left(\ell_{n}\right)_{n \geq 1}\) is bounded then it is stationary, so the Cesaro yields \(\frac{1}{n - \ell_{n}}\sum_{m=\ell_{n}}^{n-1} (\hat{\rho}_{m,k} - v_k) = o(1)\). Otherwise if \(\ell_{n} \to +\infty\), we have
		\begin{align*}
			\frac{1}{n - \ell_{n}}\left|\sum_{m=\ell_{n}}^{n-1} \hat{\rho}_{m,k} - v_k\right| &\leq \frac{1}{n - \ell_{n}}\sum_{m=\ell_{n}}^{n-1} |\hat{\rho}_{m,k} - v_k| \\
			&\leq \sup_{\ell_{n} \leq m \leq n}|\hat{\rho}_{m,k} - v_k| \to 0\;.
		\end{align*}		
		In both cases we obtain
		\begin{equation*}
			U_{n,k} - U_{\ell_{n},k} = (n - \ell_{n})[-(1 - \alpha)v_k + o(1)] + M_{n,k} - M_{\ell_{n},k} + \ell_{n} (\hat{\rho}_{\ell_{n},k} - \hat{\rho}_{n,k})
		\end{equation*}
		\item 
		Let \( D_{n,k} := \sum_{j=1}^{n} X_{j,k}(\xi_{j,k}-\theta_k) \), then using Lemma \ref{lem:chebychev-mart} we have for \( M>0 \)		
		\begin{align*}
			\sup_{n\geq 1} \mathbb{P}(D_{n,k} > M\sqrt{n})
			&\leq \sup_{n\geq 1}
			\mathbb{P}\!\left(\max_{1\leq i\leq n} D_{i,k} \geq M\sqrt{n}\right) \\
			&\leq \sup_{n\geq 1}
			\frac{\mathbb{V}ar\left[D_{n,k}\right]}
			{\mathbb{V}ar\left[D_{n,k}\right] + M^2 n} \\
			&\leq
			\frac{\mathbb{V}ar[\xi_{1,k}]}
			{\mathbb{V}ar[\xi_{1,k}] + M^2}.
		\end{align*}
		
		It follows that \( D_{n,k} = O_p(\sqrt{n}) \). Moreover, we have \(\frac{\sqrt{n}}{N_{n,k}}  =\frac{1}{\sqrt{n}}\frac{n}{N_{n,k}} = \frac{1}{\sqrt{n}} O_p(1) = O_p\left(\frac{1}{\sqrt{n}}\right)\) and \(N_{n,k}^{-1/2} = \frac{1}{\sqrt{n}}\sqrt{n}N_{n,k}^{-1/2} = O_p\left(\frac{1}{\sqrt{n}}\right)\). Using the Bahadur representation of the estimator in \eqref{eq:bahadur-estimator} we obtain
			\begin{align*}
				\hat{\theta}_{n,k} - \theta_k
				&=
				\frac{D_{n,k}}{N_{n,k}}
				+ o_p\!\left(N_{n,k}^{-1/2}\right) \\
				&=
				O_p\!\left(\frac{\sqrt{n}}{N_{n,k}}\right)
				+ o_p\!\left(N_{n,k}^{-1/2}\right) \\
				&=
				O_p\!\left(\frac{1}{\sqrt{n}}\right) + o_p\!\left(\frac{1}{\sqrt{n}}\right)\\
				&=
				O_p\!\left(\frac{1}{\sqrt{n}}\right).
		\end{align*}
		
		So the second order Taylor expansion of \(\rho\) gives
		\begin{align}
			\hat{\rho}_n - v  &= (\hat{\Theta}_n-\Theta)\frac{\partial \rho}{\partial y}\Big|_{\Theta} + O(\|\hat{\Theta}_n-\Theta\|^2) \\
			&= (\hat{\Theta}_n-\Theta)\frac{\partial \rho}{\partial y}\Big|_{\Theta} + O_p\left(\frac{1}{n}\right) \label{eq:approx-rho-Op}
		\end{align}
		Similarly we obtain \(		\hat{\rho}_{\ell_n} - v = (\hat{\Theta}_{\ell_n}-\Theta)\frac{\partial \rho}{\partial y}\Big|_{\Theta} + O_p\left(\frac{1}{\ell_n}\right)
		\).	Combining the two previous equations gives
		\begin{eqnarray*}
			\ell_n(\hat{\rho}_{\ell_n}-\hat{\rho}_n) = \ell_n(\hat{\Theta}_{\ell_n}-\hat{\Theta}_n)\frac{\partial \rho}{\partial y}\Big|_{\Theta} + \left(O_p\left(\frac{\ell_n}{n}\right)+O_p(1)\right) \\
			\implies |\ell_n(\hat{\rho}_{\ell_n}-\hat{\rho}_n)| \le C \ell_n \|\hat{\Theta}_{\ell_n}-\hat{\Theta}_n\| + O_p(1)
		\end{eqnarray*}
		And we have	
		\begin{align*}
			\hat{\theta}_{n,k}-\hat{\theta}_{\ell_n,k} &= \frac{1}{N_{n,k}}\sum_{j=1}^{n}X_{j,k}(\xi_{j,k}-\theta_k)-\frac{1}{N_{\ell_n,k}}\sum_{j=1}^{\ell_n}X_{j,k}(\xi_{j,k}-\theta_k)\\
			&= \frac{1}{N_{n,k}}\sum_{j=\ell_n+1}^{n} X_{j,k}(\xi_{j,k}-\theta_k) + \left(\frac{1}{N_{n,k}} - \frac{1}{N_{\ell_n,k}}\right)\sum_{j=1}^{\ell_n}X_{j,k}(\xi_{j,k}-\theta_k)\\[8pt]
			&= \frac{1}{N_{n,k}}\sum_{j=\ell_n+1}^{n}X_{j,k}(\xi_{j,k}-\theta_k) + \frac{N_{\ell_n,k}-N_{n,k}}{N_{n,k}}\cdot\frac{1}{N_{\ell_n,k}}\sum_{j=1}^{\ell_n}X_{j,k}(\xi_{j,k}-\theta_k)\\[8pt]
			&\quad +\, o\left(N_{n,k}^{-1/2}\right) + o\left(N_{\ell_n,k}^{-1/2}\right)
		\end{align*}
		So
		\begin{align*}
			|\ell_n(\hat{\rho}_{\ell_n}-\hat{\rho}_n)| &\le C\sum_{k=1}^{K}\frac{\ell_n}{N_{n,k}}\left\|\sum_{j=\ell_n+1}^{n}X_{j,k}(\xi_{j,k}-\theta_k)\right\|\\
			&\quad + C\sum_{k=1}^{K}\frac{\ell_n|N_{\ell_n,k}-N_{n,k}|}{N_{n,k}}\left\|\frac{1}{N_{\ell_n,k}}\sum_{j=1}^{\ell_n}X_{j,k}(\xi_{j,k}-\theta_k)\right\|\\
			&\quad + \ell_n \cdot o\left(N_{n,k}^{-1/2}\right) + \ell_n \cdot o\left(N_{\ell_n,k}^{-1/2}\right)
		\end{align*}
		And for some two sequences $a_n$ and $b_n$ converging to $0$ as $n \to + \infty$
		\begin{align*}
			\ell_n\cdot o\left(N_{n,k}^{-1/2}\right)+\ell_n\cdot o\left(N_{\ell_n,k}^{-1/2}\right)
			&= \ell_n\,N_{n,k}^{-1/2}a_n + \ell_n \,N_{\ell_n,k}^{-1/2}b_n\\
			&= \sqrt{n}\left(\frac{\ell_n}{n}\left(\frac{N_{n,k}}{n}\right)^{-\frac{1}{2}}a_{n} + \sqrt{\frac{\ell_n}{n}}\left(\frac{N_{\ell_n,k}}{\ell_n}\right)^{-\frac{1}{2}}b_{n}\right)\\
			&= o_p(\sqrt{n})
		\end{align*}	
		as \quad \(\lim\limits_{n\to+\infty} \frac{\ell_n}{n}\left(\frac{N_{n,k}}{n}\right)^{-1/2}a_n + \sqrt{\frac{\ell_n}{n}}\left(\frac{N_{\ell_n,k}}{\ell_n}\right)^{-1/2}b_n=0\)\;. Since \(\ell_n \le n\) and \(\frac{n}{N_{n,k}}\to v_k^{-1}\) we have
		\begin{align*}
			C\sum_{k=1}^{K}\frac{\ell_n}{N_{n,k}}\left\|\sum_{m=\ell_n+1}^{n}X_{m,k}(\xi_{m,k}-\theta_k)\right\|
			&=C\sum_{k=1}^{K}\frac{\ell_n}{n}\frac{n}{N_{n,k}}\|Q_{n,k}-Q_{\ell_n,k}\|\\
			&\le C\sum_{k=1}^{K}\|Q_{n,k}-Q_{\ell_n,k}\|\\
			&= C\|Q_n-Q_{\ell_n}\|
		\end{align*}	
		On the second part of the inequality we use that \(N_{n,k} - N_{\ell_{n}, k} \leq n - \ell_n\):
		\begin{align*}
			&C\sum_{k=1}^{K}\frac{\ell_n|N_{\ell_n,k}-N_{n,k}|}{N_{n,k}}\left\|\frac{1}{N_{\ell_n,k}}\sum_{m=1}^{\ell_n}X_{m,k}(\xi_{m,k}-\theta_k)\right\|\\&= C\sum_{k=1}^{K}\frac{\ell_n|N_{\ell_n,k}-N_{n,k}|}{N_{\ell_n,k}N_{n,k}}\left\|\sum_{m=1}^{\ell_n}X_{m,k}(\xi_{m,k}-\theta_k)\right\|\\
			& \leq C(n-\ell_n) \sum_{k=1}^{K}\frac{\ell_n}{N_{\ell_n,k}}\frac{n}{N_{n,k}} \left\|\frac{1}{n}\sum_{m=1}^{\ell_n}X_{m,k}(\xi_{m,k}-\theta_k)\right\|
		\end{align*}
		If \((\ell_{n})_{n\geq1}\) is bounded, then \(\frac{1}{n}\left\|\sum_{m=1}^{\ell_n}X_{m,k}(\xi_{m,k}-\theta_k)\right\| = o(1)\). Otherwise if \(\ell_n \to +\infty\) then we can use the argument of \cite{Doob1936} and conclude that
		\[\frac{1}{\ell_n}\left\|\sum_{m=1}^{\ell_n}X_{m,k}(\xi_{m,k}-\theta_k)\right\| = o(1)\]
		Combined with the fact that \(N_{n,k} \sim v_k.n\), we obtain that
		\begin{align*}
			\frac{\ell_n}{N_{\ell_n,k}}\frac{n}{N_{n,k}}\left\|\frac{1}{n}\sum_{m=1}^{\ell_n}X_{m,k}(\xi_{m,k}-\theta_k)\right\| = o(1)
		\end{align*}
		So
		\begin{equation*}
			C\sum_{k=1}^{K}\frac{\ell_n|N_{\ell_n,k}-N_{n,k}|}{N_{n,k}}\left\|\frac{1}{N_{\ell_n,k}}\sum_{m=1}^{\ell_n}X_{m,k}(\xi_{m,k}-\theta_k)\right\| \leq (n-\ell_n) o(1)
		\end{equation*}
		Combining the previous results we obtain
		\[
		|\ell_n(\hat{\rho}_n - \hat{\rho}_{\ell_n})| \le o(1)(n-\ell_n) + C\|Q_n - Q_{\ell_n}\| + o_p(\sqrt{n})
		\]
	\end{enumerate}
\end{proof}

\begin{lemma}\label{lem:U-ineq}
	Under conditions \hyperlink{CA}{\textbf{A}}--\hyperlink{CB}{\textbf{B}} , we have for every \(k \in [K]\) as \(n \to +\infty\)	
	\begin{align}
		U_{n,k} - U_{\ell_{n,k},k} &\leq o_p(\sqrt{n})\label{eq:ineq-U-1}\\
		U_{n,k} - U_{\ell_{n,k},k} &\leq O(\sqrt{n \log \log(n)}) \quad \text{a.s.}\label{eq:ineq-U-2}
	\end{align}
\end{lemma}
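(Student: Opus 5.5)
We apply Lemma~\ref{lem:diff-U-approx} with $\ell_n=\ell_{n,k}$, which is legitimate by Lemma~\ref{lem:preliminary}(i). Combining \eqref{eq:prelim-2} and \eqref{eq:prelim-22} gives
\begin{equation*}
U_{n,k}-U_{\ell_{n,k},k}\le (n-\ell_{n,k})\bigl[-(1-\alpha)v_k+o(1)\bigr]+\bigl(M_{n,k}-M_{\ell_{n,k},k}\bigr)+C\|Q_n-Q_{\ell_{n,k}}\|+o_p(\sqrt n).
\end{equation*}
Since $v_k>0$ by Condition~\hyperlink{CB}{\textbf{B}}, the leading term is a strictly negative drift, at most $-c\,(n-\ell_{n,k})$ for large $n$ with $c=(1-\alpha)v_k/2>0$. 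The argument therefore reduces to showing that the fluctuation terms cannot beat this linear drift by more than $\sqrt n$ (respectively $\sqrt{n\log\log n}$).

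Write $W_n:=M_{n,k}+C'\sum_j Q_{n,j}$, so that $W_n-W_{\ell}$ dominates $M_{n,k}-M_{\ell,k}+C\|Q_n-Q_\ell\|$. Note that $\|Q_n-Q_\ell\|\le\sum_j|Q_{n,j}-Q_{\ell,j}|$ is not linear, so we bound each of the $2K+1$ martingale pieces separately by its absolute increment. The key elementary step is the deterministic inequality
\begin{equation*}
\sup_{0\le \ell\le n}\bigl(|Z_n-Z_\ell|-c(n-\ell)\bigr)\le \sup_{0\le \ell\le n}\bigl(|Z_n-Z_\ell|-c(n-\ell)\bigr)^+ ,
\end{equation*}
applied to each square-integrable martingale $Z\in\{M_{n,k},Q_{n,1},\dots,Q_{n,K}\}$. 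We use this with the drift split into $2K+1$ equal parts $c/(2K+1)$.

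\textbf{Almost sure bound.} By the martingale LIL (\cite{Stout1970}, using the bounded increments of $M$ and Condition~\hyperlink{CA}{\textbf{A}} for $Q$ via Lemma A.4 of \cite{hu2004asymptotic}), we have $|Z_n-Z_\ell|\le |Z_n|+|Z_\ell|=O(\sqrt{n\log\log n})$ uniformly in $\ell\le n$. Discarding the negative drift then gives \eqref{eq:ineq-U-2} directly. For the $o_p(\sqrt n)$ term of \eqref{eq:prelim-22} we also need its a.s.\ counterpart. Here we redo the Taylor step of Lemma~\ref{lem:diff-U-approx}(ii) with the a.s.\ rate $\hat\Theta_n-\Theta=O(\sqrt{\log\log n/n})$ from Theorem~\ref{thm:LLN}. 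The quadratic remainder then becomes $O(\log\log n)$ and the Bahadur remainder $o(\sqrt n)$ a.s., both of which are negligible.

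\textbf{In probability bound.} This is the main obstacle: the crude bound $|Z_n|+|Z_\ell|$ is only $O_p(\sqrt n)$, not $o_p(\sqrt n)$, so the drift must be exploited. Fix $\varepsilon>0$ and set $\delta=\varepsilon^2$. We split according to the size of $n-\ell$.

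If $n-\ell\ge\delta n$, the drift contributes at most $-c\delta n$, while $\sup_\ell|Z_n-Z_\ell|\le 2\max_{m\le n}|Z_m|=O_p(\sqrt n)$ by Doob's/Chebyshev's maximal inequality (Lemma~\ref{lem:chebychev-mart}). The sum is therefore negative with probability tending to one.

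If $n-\ell<\delta n$, we bound $\sup_{n-\delta n\le\ell\le n}|Z_n-Z_\ell|$ by Doob's inequality on the reversed increments. Its second moment is at most $4\,\mathbb E\sum_{m>n-\delta n}\Delta Z_m^2\le 4\delta n\max_j(1+V_j)$, so this supremum is $O_p(\sqrt{\delta n})=O_p(\varepsilon\sqrt n)$.

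Letting $\varepsilon\to0$ yields $\sup_{\ell\le n}\bigl(W_n-W_\ell-c(n-\ell)\bigr)^+=o_p(\sqrt n)$, which, evaluated at the random $\ell=\ell_{n,k}$, gives \eqref{eq:ineq-U-1}. The one remaining care point is that the $o(1)$ in the drift is random but a.s.\ eventually below $(1-\alpha)v_k/2$, which handles the substitution of $c$.
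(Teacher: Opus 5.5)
Your proof is correct, but it controls the fluctuation terms by a different mechanism than the paper. Both arguments combine the two parts of Lemma~\ref{lem:diff-U-approx}, and both use bounds uniform in $\ell$ to cope with the randomness of $\ell_{n,k}$.

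The paper first proves a drift-free statement about the martingales alone. This is $\|Q_n-Q_\ell\|\le o_p(1)(n-\ell)+o_p(\sqrt n)$ uniformly in $\ell$, and the same for $M_{\cdot,k}$. To get it, the paper takes a sublinear window $L_n\to\infty$, $L_n=o(n)$. It then applies Lemma~\ref{lem:mart-1}: Doob on the window, plus a dyadic peeling bound giving $\max_{L_n\le m<n}\|Q_n-Q_{n-m}\|/m=O_p(L_n^{-1/2})$. It combines this with the deterministic splitting of Lemma~\ref{lem:mart-2}. Only at the very end is $o_p(1)(n-\ell)$ absorbed into $-(1-\alpha)v_k(n-\ell)$.

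You instead let the drift kill the long lags directly. For $n-\ell\ge\delta n$, the crude $O_p(\sqrt n)$ maximal bound loses against $c\,\delta n$. For $n-\ell<\delta n$, a window maximal inequality gives $O_p(\sqrt{\delta n})$.

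What your route buys: it needs nothing beyond Doob's $L^2$ inequality, with no peeling and no Lemma~\ref{lem:mart-2}. Your $\varepsilon\to0$ step could even be replaced by a single window $L_n$ with $\sqrt n\ll L_n\ll n$. What it costs: your estimate cannot be separated from the strictly negative drift. The paper's intermediate bound, by contrast, is a self-contained fact about martingale increments. Both proofs need $v_k>0$ in the end.

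For the a.s.\ bound, re-deriving the Taylor step is valid but longer than necessary. The paper writes $\ell(\hat\rho_\ell-\hat\rho_n)=\ell(\hat\rho_\ell-v)-\frac{\ell}{n}\,n(\hat\rho_n-v)$, uses the rate from Theorem~\ref{thm:LLN}, and needs the LIL only for $M_{\cdot,k}$.

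A few harmless slips:
\begin{itemize}
\item $W_n-W_\ell$ does not dominate $M_{n,k}-M_{\ell,k}+C\|Q_n-Q_\ell\|$. You rightly fall back to bounding absolute increments piece by piece, and there are $K+1$ pieces, not $2K+1$.
\item Your displayed ``key inequality'' is just $x\le x^+$.
\item The reversed increments $Z_n-Z_{n-i}$ are not themselves a martingale. The window bound should go through $|Z_n-Z_\ell|\le 2\max_{n_0\le j\le n}|Z_j-Z_{n_0}|$ with $n_0=\lceil n-\delta n\rceil$, which only changes the constant.
\end{itemize}
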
	
\begin{proof}
	Let \((L_n)_{n \geq 1}\) a positive sequence such that $L_n \to \infty$ and $L_n = o(n)$. From Lemma \ref{lem:mart-1} we have
	\begin{equation*}
		\mathbb{E}\left[\max_{m \leq L_n} \|Q_n - Q_{n-m}\| \right] \leq 2\sqrt{L_n C_0}
		\quad \text{and} \quad \mathbb{E}\left[\max_{L_n \leq m < n} \frac{\|Q_n - Q_{n-m}\|}{m} \right] \leq 3\sqrt{\frac{C_0}{L_n}}  
	\end{equation*}
	Applying Lemma \ref{lem:expectation-to-O} we obtain,
	\begin{align*}
		\max_{L_n \leq m < n} \frac{\|Q_m - Q_{n-m}\|}{m} &= O_p\left(\frac{1}{\sqrt{L_n}}\right) = o_p(1)\\ 
		\max_{L_n \leq m < n} \|Q_m - Q_{n-m}\| &= O_p(\sqrt{L_n}) \leq o_p(\sqrt{n})
	\end{align*}
	From Lemma \ref{lem:mart-2}, we have that for all \(L < n\):
	\begin{equation}\label{eq:ineq-proof-1}
		\|Q_n - Q_{\ell_{n,k}}\| \leq (n - \ell_{n,k}) \max_{L \leq m < n} \frac{\|Q_n - Q_{n-m}\|}{m} + \max_{m < L} \|Q_n - Q_{n-m}\|
	\end{equation}
	So using \eqref{eq:ineq-proof-1}, and applying the same steps on the martingale \(\{M_n\}_{n \geq 1}\), we obtain
	\begin{equation}\label{eq:ineq-M-U}
		\|Q_n - Q_{\ell_{n,k}}\| \leq o_p(1)(n - \ell_{n,k}) + o_p(\sqrt{n}), \quad |M_{n,k} - M_{\ell_{n,k},k}| \leq o_p(1)(n - \ell_{n,k}) + o_p(\sqrt{n})
	\end{equation}	
	From inequality \eqref{eq:prelim-22} of Lemma \ref{lem:diff-U-approx} and \eqref{eq:ineq-M-U}
	\begin{align}
		|\ell_{n,k} (\hat{\rho}_{\ell_{n,k},k} - \hat{\rho}_{n,k})| &\leq o(1)(n - \ell_{n,k}) + C \|Q_n - Q_{\ell_{n,k}}\| + o_p(\sqrt{n})\notag\\
		&\leq o_p(1)(n - \ell_{n,k}) + o_p(\sqrt{n})\label{eq:ineq_ell}
	\end{align}
	As \(\left(\ell_{n,k}\right)_{n \geq 0}\) is a non-decreasing sequence of positive integers, and that \(\ell_{n,k} \leq n\), we can apply \eqref{eq:prelim-2} of Lemma \ref{lem:diff-U-approx} and we get
	\begin{align*}
		U_{n,k} - U_{\ell_{n,k},k} &= (n - \ell_{n,k})[-(1 - \alpha)v_k + o(1)] + M_{n,k} - M_{\ell_{n,k},k} + \ell_{n,k} (\hat{\rho}_{\ell_{n,k},k} - \hat{\rho}_{n,k})\\
		&\leq (n - \ell_{n,k})[-(1 - \alpha)v_k + o(1)] + |M_{n,k} - M_{\ell_{n,k},k}| + \ell_{n,k} |\hat{\rho}_{\ell_{n,k},k} - \hat{\rho}_{n,k}|\\
		&= (n - \ell_{n,k})[-(1 - \alpha)v_k + o_p(1)] + o_p(\sqrt{n})\\
		&\leq o_p(\sqrt{n}) 
	\end{align*}
	Which prove \eqref{eq:ineq-U-1}. For the second inequality, we proved in Theorem \ref{thm:LLN} that \(n(\hat{\rho}_n - v) = O\left(\sqrt{n \log \log n}\right) \quad \text{a.s.}\) which also gives that
	\begin{align*}
		\ell_{n,k}(\hat{\rho}_{\ell_{n,k}} - \hat{\rho}_{n,k}) &= O\left(\sqrt{n \log \log n}\right) + O\left(\sqrt{\ell_{n,k} \log \log \ell_{n,k}}\right) \\
		&= O\left(\sqrt{n \log \log n}\right) \quad \text{a.s.}
	\end{align*}
	Also, $\{M_{n,k}\}_{n \geq 1}$ is a martingale, so by LIL for martingales (see \cite{Stout1970}) we have \(M_{n,k} = O\left(\sqrt{n \log\!\log n}\right)\) then
	\begin{align*}
		M_{n,k} - M_{\ell_{n,k},k} &= O\left(\sqrt{n \log\!\log n}\right) + O\left(\sqrt{\ell_{n,k} \log\!\log \ell_{n,k}}\right)\\
		&= O\left(\sqrt{n \log\!\log n}\right) \quad \text{a.s.}
	\end{align*}
	So we obtain the second inequality \eqref{eq:ineq-U-2}
	\begin{align*}
		U_{n,k} - U_{\ell_{n,k},k} &\leq (n - \ell_{n,k})[-(1 - \alpha)v_k + o(1)] + |M_{n,k} - M_{\ell_{n,k},k}| + \ell_{n,k} |\hat{\rho}_{\ell_{n,k},k} - \hat{\rho}_{n,k}|\\			
		&\leq O\left(\sqrt{n \log \log n}\right) \quad a.s.
	\end{align*}
\end{proof}

Finally, we recall Lemma~\ref{lem:componentwise} and give its proof below.

\jointcltlemma*
\begin{proof}
	This proof is based on the central limit theorem for martingales.
	
	\textbf{Step 1. Defining the martingale.}
	
	For \(k \in [K]\) we define \( Y_{i,k} = \xi_{i,k} - \theta_k \) and \(M_{n,k} = \sum_{i=1}^{n} X_{i,k} Y_{i,k}.\) So \(\{M_{n,k}\}_{n \geq 0}\) is a martingale as we have
	\begin{align*}
		\mathbb{E}[\Delta M_{i,k} \mid \mathcal{F}_{i-1}] 
		&= \mathbb{E}[M_{i,k} - M_{i-1,k} \mid \mathcal{F}_{i-1}]\\
		&= \mathbb{E}[X_{i,k} Y_{i,k} \mid \mathcal{F}_{i-1}]\\
		&= X_{i,k} \, \mathbb{E}[Y_{i,k}]\\
		&= 0
	\end{align*}
	The quadratic variation of \( \{M_{n,k}\}_{n \geq 0} \) can be expressed as:
	\begin{align*}
		\langle M_{.,k} \rangle_n &= \sum_{i=1}^{n} \mathbb{E}\left[ (\Delta M_i)^2 \mid \mathcal{F}_{i-1} \right]\\
		&= \sum_{i=1}^{n} \mathbb{E}\left[ X_{i,k} Y_{i,k}^2 \mid \mathcal{F}_{i-1} \right]\\
		&= \sum_{i=1}^{n} X_{i,k}\mathbb{E}\left[Y_{i,k}^2 \mid \mathcal{F}_{i-1} \right]\\
		&= V_k N_{n,k}
	\end{align*}
	And for \(k \neq j \in [K]\) we have
	\[ \left<M_{.,k}, M_{.,j}\right>_n = \sum_{s=1}^{n} \mathbb{E}\left[\left(\xi_{s,k} - \theta_{k}\right)\left(\xi_{s,j} - \theta_{j}\right)\underbrace{X_{s,k}X_{s,j}}_{= 0}\mid \mathcal{F}_{s-1}\right] = 0\]	
	\textbf{Step 2. Lindeberg Condition for the martingale.}
	
	Let $\varepsilon > 0$, we need to prove that for all \(k \in [K]\)
	\begin{equation*}
		L_{n,k}(\varepsilon) 
		:= \frac{1}{\langle M_{.,k} \rangle_n} 
		\sum_{i=1}^n 
		\mathbb{E}\!\left[
		(\Delta M_{i,k})^2 
		\mathbf{1}_{\{|\Delta M_{i,k}| > \varepsilon \sqrt{\langle M_{.,k} \rangle_n}\}}
		\mid \mathcal{F}_{i-1}
		\right]
		\overset{\mathbb{P}}{\longrightarrow} 0 
		\quad \text{as } n \to +\infty.
	\end{equation*}
	By dominated convergence theorem and the fact that $\mathbb{E}\!\left[ |Y_{1,k}|^{2} \right] < +\infty$ we have that 
	\[\mathbb{E}\!\left[Y_{1,k}^2  \mathbf{1}_{\{|Y_{1,k}| > t\}}
	\right] \underset{t \to +\infty}{\longrightarrow} 0\]
	For \(\eta > 0\), choose \(m := m(\eta)\) such that 
	\[
	\frac{1}{V_k}
	\, \mathbb{E}\!\left[
	Y_{1,k}^2 
	\mathbf{1}_{\{|Y_{1,k}| > \varepsilon \sqrt{V_k} \sqrt{m}\}}
	\right] \leq \eta
	\]
	Now consider the event \(\left\{N_{n,k} \geq m\right\}\), we 
	have \(\mathbf{1}_{\{|Y_{1,k}| > \epsilon \sqrt{V_k} \sqrt{N_{n,k}}\}} \leq \mathbf{1}_{\{|Y_{1,k}| > \epsilon \sqrt{V_k} \sqrt{m}\}}\). So we have 	
	\begin{align*}
		L_{n,k}(\varepsilon)
		&\le 
		\frac{1}{V_k N_{n,k}} 
		\sum_{i=1}^n 
		X_{i,k} 
		\mathbb{E}\!\left[
		Y_{i,k}^2 
		\mathbf{1}_{\{|Y_{i,k}| > \varepsilon \sqrt{V_k} \sqrt{N_{n,k}}\}}
		\mid \mathcal{F}_{i-1}
		\right] \\
		&\leq 
		\frac{1}{V_k}
		\, \mathbb{E}\!\left[
		Y_{1,k}^2 
		\mathbf{1}_{\{|Y_{1,k}| > \varepsilon \sqrt{V_k} \sqrt{m}\}}
		\right] \\
		&\leq \eta
	\end{align*}
	So \(\mathbb{P}\left(L_{n,k}(\epsilon) \leq \eta\right) \leq \mathbb{P}\left(N_{n,k} \leq m\right) \underset{n \to +\infty}{\longrightarrow} 0\)	
	.Therefore, the Lindeberg condition is verified, and by the martingale CLT theorem, we have:
	\begin{equation}\label{eq:martingale-clt}
		\frac{M_n}{\sqrt{\langle M \rangle_n}} 
		\overset{\mathcal{D}}{\longrightarrow} 
		\mathcal{N}(0,I_K)
	\end{equation}
	
	And we have from the estimator definition \(D_n\left(\hat{\Theta}_n - \Theta\right) = \mathrm{diag}\left(\sqrt{V_1}, \ldots, \sqrt{V_K}\right)\frac{M_n}{\sqrt{\langle M \rangle_n}}\), then using \eqref{eq:martingale-clt} we get
	\begin{equation*}
		D_n\left(\hat{\Theta}_n - \Theta\right)
		\overset{\mathcal{D}}{\longrightarrow} 
		\mathcal{N}\left(0,\mathrm{diag}\left(V_1, \ldots, V_K\right)\right)
	\end{equation*}
\end{proof}

We are now ready to prove Theorem~\ref{thm:normality-gen-erade}.

\begin{proof}[Proof of Theorem \ref{thm:normality-gen-erade}]
	\begin{enumerate}[label=(\roman*)]
		\item Using Lemma \ref{lem:U-ineq} and inequalities \eqref{eq:prelim-4} and \eqref{eq:condition} from Lemma \ref{lem:preliminary} give that	
		\begin{equation}\label{eq:ineq-proof-4}
			\forall k \in [K]: \, N_{n,k} - n\hat{\rho}_{n,k} \leq o_p(\sqrt{n}) \quad \text{and} \quad N_{n,k} - n\hat{\rho}_{n,k} \leq O(\sqrt{n \log \log n}) \quad \text{a.s.}
		\end{equation}
		And we have that 
		\begin{align*}
			n\hat{\rho}_{n,k} - N_{n,k} &= n - N_{n,k} - n (1 - \hat{\rho}_{n,k})\\
			&= \sum_{j=1; j\neq k}^{K} \left(N_{n,j} - n \hat\rho_{n,j}\right)\\
			&\leq \sum_{j=1; j\neq k}^{K} o_p(\sqrt{n}) = o_p(\sqrt{n}) 
		\end{align*}
		Similarly for the second inequality we obtain that
		 \(\,n\hat{\rho}_{n,k} - N_{n,k} \leq O\left(\!\sqrt{n \log\!\log(n)}\right)\) \textit{a.s.}, so with \eqref{eq:ineq-proof-4} we obtain \eqref{eq:thm-res-1} and \eqref{eq:thm-res-2},
		\[
		|N_{n,k} - n\hat{\rho}_{n,k}| = o_p(\sqrt{n}) \quad \text{and} \quad  |N_{n,k} - n\hat{\rho}_{n,k}| = O(\sqrt{n \log\!\log n}) \quad \text{a.s.}
		\]
		now using Theorem \ref{thm:LLN} we have
		\begin{align*}
			N_{n,k} - n v_k &= N_{n,k} - n \hat{\rho}_{n,k} + n(\hat{\rho}_{n,k} - v_k)\\
			&= O(\sqrt{n \log \log n}) \quad a.s.
		\end{align*}
		which proves \eqref{eq:thm-res-3}.

		\item Using the fact that \(\sqrt n D_n^{-1}\) converges to \(\mathrm{diag}\left(\frac{1}{\sqrt{v_1}},\ldots,\frac{1}{\sqrt{v_K}}\right)\), the CLT result of Lemma \ref{lem:componentwise} combined with Slutsky's lemma yields the first asymptotic normality result:
		\begin{align*}
			\sqrt n(\hat\Theta_n-\Theta) &= \sqrt n D_n^{-1} D_n(\hat\Theta_n-\Theta)\\
			&\xrightarrow{\mathcal{D}}\mathrm{diag}\left(\frac{1}{\sqrt{v_1}},\ldots,\frac{1}{\sqrt{v_K}}\right) \mathcal N\!\big(0,\mathrm{diag}\left(V_1,\dots,V_K\right)\big)\\
			&= \mathcal N(0,V) 
		\end{align*}
		
		And by using \eqref{eq:approx-rho-Op} we obtain that \(\sqrt{n} (\hat{\rho}_n - v) = \sqrt{n} (\hat{\Theta}_n - \Theta) G + O_p\!\left(\frac{1}{\sqrt{n}}\right)\), now from \(O_p\!\left(\frac{1}{\sqrt{n}}\right) \xrightarrow{\mathbb{P}} 0\) and the CLT result above on \(\sqrt n(\hat\Theta_n-\Theta)\), we can apply Slutsky theorem giving
		\begin{equation}\label{eq:normality-1}
			\sqrt{n} (\hat{\rho}_n - v) \xrightarrow{\mathcal{D}} \mathcal{N}(0, GVG^{\top})
		\end{equation}
		And we have
		\begin{align*}
			\sqrt{n} \left( \frac{N_n}{n} - v \right) 
			&= \sqrt{n} \left( \frac{N_n}{n} - \hat{\rho}_n \right)
			+ \sqrt{n} (\hat{\rho}_n - \nu)\\
			&= \sqrt{n} (\hat{\rho}_n - \nu) + o_p(1)
		\end{align*}
		Again with \eqref{eq:normality-1} we can apply the Slutsky lemma:
		\[
		\sqrt{n} \left( \frac{N_n}{n} - v \right) \xrightarrow{\mathcal{D}} \mathcal{N}(0,GVG^{\top})
		\]
		Noting \(W_n := \sqrt{n} \left( \frac{N_n}{n} - \nu \right) 
		\quad \text{and} \quad Y_n := \sqrt{n} (\hat{\rho}_n - \nu)\), then
		\[
		\begin{pmatrix}
			W_n \\
			Y_n
		\end{pmatrix}
		=
		\begin{pmatrix}
			Y_n \\
			Y_n
		\end{pmatrix}
		+
		\underbrace{\begin{pmatrix}
				o_p(1) \\
				0
		\end{pmatrix}}_{\xrightarrow{\mathbb{P}} 0}
		\]
		And as we have
		\[
		\begin{pmatrix}
			Y_n \\
			Y_n
		\end{pmatrix}
		\xrightarrow{\mathcal{D}} 
		\mathcal{N}\left(0,\begin{pmatrix}
			GVG^{\top} & GVG^{\top} \\
			GVG^{\top} & GVG^{\top}
		\end{pmatrix}
		\right),
		\]
		we can use the Slutsky lemma again and get
		\[
		\begin{pmatrix}
			W_n \\
			Y_n
		\end{pmatrix}
		\xrightarrow{\mathcal{D}} \mathcal{N}(0,\Omega).
		\]
	\end{enumerate}	
\end{proof}

%
%

		\section{Supporting Proofs for $\alpha$RTS-FE}\label{appnB}
		
\subsection{Proof of Theorem \ref{thm:forced}}
In this part we redefine the last hitting time $\ell_{n,k}$ for \(\alpha\)RTS-FE designs family as follow
\begin{equation}\label{eq:stopping-time-def-fe}
	\ell_{n,k} := \max \left\{\, m \leq n : \frac{N_{m,k}}{m} \leq \hat{\rho}_{m,k} \text{ or } \left(k \in arg\!\min_{j \in U_m} N_{m,j}\text{ and } \mathbf{U}_m \neq \emptyset\right) \,\right\}
\end{equation}

We prove below that the conditions in Lemma~\ref{lem:generic-conditions-main} are valid for any design in the \(\alpha\)RTS-FE family and for the last hitting time defined as in \eqref{eq:stopping-time-def-fe}, which preserves the results of all Theorems for their counterparts without forced exploration.

\begin{lemma}\label{lem:preliminary2}
	For any \(k \in [K]\) we have the following results
	\begin{enumerate}[label=(\roman*)]
		\item \(\left(\ell_{n,k}\right)_{n \geq 1}\) is a non-decreasing sequence and \(\forall n \in \mathbb{N}: \,\ell_{n,k} \leq n\)
		\item for all $k\in [K]$, $n \in \mathbb{N}$,\begin{equation}\label{eq:prelim-4}
			N_{n,k} - n\hat{\rho}_{n,k} \leq 1+N_{\ell_{n,k},k} - \ell_{n,k} \hat{\rho}_{\ell_{n,k},k} + U_{n,k} - U_{\ell_{n,k},k}
		\end{equation}
		\item for all $k\in [K]$, $n \in \mathbb{N}$,
		\begin{equation}\label{eq:condition}
			N_{\ell_{n,k},k} - \ell_{n,k} \hat{\rho}_{\ell_{n,k},k} \leq o(\sqrt{n}), \ a.s.
			\quad
		\end{equation}
	\end{enumerate}
\end{lemma}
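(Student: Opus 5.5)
The plan mirrors the proof of Lemma~\ref{lem:preliminary}, with one new case: the times $m$ at which the forced-exploration rule \eqref{eq:fe-phase} is active and selects $k$.

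\textbf{Item (i).} This is immediate from the definition \eqref{eq:stopping-time-def-fe}. As $n$ grows, the set over which the maximum is taken only gains elements, so $\ell_{n,k}$ is non-decreasing, and $\ell_{n,k}\le n$ by construction. To make $\ell_{n,k}$ well defined, I use the convention $\ell_{n,k}=0$ when the set is empty; this is harmless because $N_{0,k}=0$.

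\textbf{Item (ii).} I repeat the telescoping identity from the proof of Lemma~\ref{lem:preliminary}:
\[
N_{n,k}-n\hat\rho_{n,k}=N_{\ell_{n,k},k}-\ell_{n,k}\hat\rho_{\ell_{n,k},k}+\sum_{m=\ell_{n,k}+1}^{n}\bigl(p_{m,k}-\alpha\hat\rho_{m-1,k}\bigr)+U_{n,k}-U_{\ell_{n,k},k}.
\]
The first summand, $m=\ell_{n,k}+1$, is bounded by $p_{\ell_{n,k}+1,k}\le 1$. For every $m$ with $\ell_{n,k}+2\le m\le n$, the time $m-1$ lies strictly between $\ell_{n,k}$ and $n$, so it fails both conditions in \eqref{eq:stopping-time-def-fe}. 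Failing the first condition gives $N_{m-1,k}/(m-1)>\hat\rho_{m-1,k}$. Failing the second means that either $\mathbf U_{m-1}=\emptyset$, or $k\notin \mathbf S_{m-1}$.
\begin{itemize}
\item If $\mathbf U_{m-1}=\emptyset$, then \eqref{eq:fe-otherwise} applies and the throttle \eqref{eq:throttle} yields $p_{m,k}\le\alpha\hat\rho_{m-1,k}$.
\item If $\mathbf U_{m-1}\neq\emptyset$ and $k\notin\mathbf S_{m-1}$, then forced exploration puts all its mass on $\mathbf S_{m-1}$, since the probabilities $1/|\mathbf S_{m-1}|$ sum to one. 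Hence $p_{m,k}=0\le\alpha\hat\rho_{m-1,k}$.
\end{itemize}
In either case every remaining summand is nonpositive, which gives \eqref{eq:prelim-4}.

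\textbf{Item (iii).} This is the substantive step. At time $\ell=\ell_{n,k}$, one of three things holds.
\begin{itemize}
\item If $N_{\ell,k}/\ell\le\hat\rho_{\ell,k}$, the quantity $N_{\ell,k}-\ell\hat\rho_{\ell,k}$ is already $\le 0$.
\item If $k\in\mathbf S_\ell$ with $\mathbf U_\ell\neq\emptyset$, then $k\in\mathbf U_\ell$. Therefore $N_{\ell,k}-\ell\hat\rho_{\ell,k}\le N_{\ell,k}\le h(\ell)$.
\item If the set in \eqref{eq:stopping-time-def-fe} is empty, the quantity is $0$.
\end{itemize}
It remains to show $h(\ell_{n,k})=o(\sqrt n)$ almost surely. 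I would set $H(n):=\sup_{m\le n}h(m)$. Because $h(m)=o(\sqrt m)$ and $h$ is finite at every integer, $H(n)=o(\sqrt n)$ as well: for any $\varepsilon>0$, split at some $m_\varepsilon$ so that $h(m)\le\varepsilon\sqrt m\le\varepsilon\sqrt n$ for $m\ge m_\varepsilon$, and note that the finitely many earlier values are $O(1)=o(\sqrt n)$. Since $\ell_{n,k}\le n$, it follows that $h(\ell_{n,k})\le H(n)=o(\sqrt n)$. This bound is deterministic, so it holds almost surely. Any burn-in steps before time $Km_0$ contribute at most the constant $Km_0$, exactly as in Lemma~\ref{lem:preliminary}.

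The main obstacle I expect is in item (ii), in the case where forced exploration is active at time $m-1$ but $k$ is not the least-sampled element of $\mathbf U_{m-1}$. The argument depends on \eqref{eq:fe-phase} forcing $p_{m,k}=0$ in that case. This is exactly why the definition of $\ell_{n,k}$ includes only the arms in $\mathbf S_m$, rather than all of $\mathbf U_m$.
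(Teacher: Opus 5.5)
Your proposal is correct and follows the paper's proof essentially step for step: the same telescoping identity in (ii), with $p_{m,k}\le\alpha\hat\rho_{m-1,k}$ coming from the throttle when $\mathbf{U}_{m-1}=\emptyset$ and from $p_{m,k}=0$ under forced exploration when $k\notin\mathbf{S}_{m-1}$, and the same case split in (iii) giving $N_{\ell_{n,k},k}-\ell_{n,k}\hat\rho_{\ell_{n,k},k}\le\max\bigl(0,h(\ell_{n,k})\bigr)$. Two points in your version are only implicit in the paper and make the argument cleaner: the running-supremum bound $h(\ell_{n,k})\le\sup_{m\le n}h(m)=o(\sqrt n)$, and the convention $\ell_{n,k}=0$ when the set is empty.
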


\begin{proof}[Proof of Lemma \ref{lem:preliminary2}]
	Consider \(\ell_{n,k}\) as defined in \eqref{eq:stopping-time-def-fe}, we have
	\begin{enumerate}[label=(\roman*)]
		\item Obvious from definition.
		\item Let \(k \in [K]\), if \(m \geq \ell_{n,k} + 1\) then
		\[
		\frac{N_{m,k}}{m} > \hat{\rho}_{m,k} \text{ and } \left\{k \notin arg\!\min_{j \in U_m} N_{m,j}\text{ or } \mathbf{U}_m = \emptyset\right\}
		\]
		If \(\mathbf{U}_m \neq \emptyset\) then \(k \notin arg\!\min_{j \in U_m} N_{m,j}\), so \(p_{m+1,k} = 0\) following \eqref{eq:fe-phase} . Otherwise if \(\mathbf{U}_m = \emptyset\) then \(p_{m+1,k} \leq \alpha \hat{\rho}_{m,k}\) following \eqref{eq:fe-phase}, so the result follows using the same arguments as in the previous case.
		\item At \(m = \ell_{n,k}\), if \(\frac{N_{\ell_{n,k},k}}{\ell_{n,k}} \leq \hat{\rho}_{\ell_{n,k},k}\) then \(	N_{\ell_{n,k},k} - \ell_{n,k} \hat{\rho}_{\ell_{n,k},k} \leq 0
		\),
		otherwise if \(\mathbf{U}_{\ell_{n,k}} \neq \emptyset\) and \(k \in arg\!\min_{j \in \mathbf{U}_{\ell_{n,k}}} N_{\ell_{n,k},j}\), then for \(a \in \mathbf{U}_{\ell_{n,k}}\) we have
		\begin{align*}
			N_{\ell_{n,k},k} - \ell_{n,k} \hat{\rho}_{\ell_{n,k},k}
			&\leq N_{\ell_{n,k},k}\\
			&\leq N_{\ell_{n,k},a}\\
			&\leq h(\ell_{n,k})
		\end{align*}
		So either way \(N_{\ell_{n,k},k} - \ell_{n,k} \hat{\rho}_{\ell_{n,k},k} \leq o(\sqrt{n}) \quad \text{a.s.}\)
	\end{enumerate}
\end{proof}

\subsection{$\alpha$RTS-FE with Sparse Allocations}\label{proofs:sparse-case}

\begin{proof}[Proof of Theorem \ref{thm:sparse-rate}]
	We first prove that \(N_{n,k} \to +\infty\). Let us assume towards contradiction that
	\[
	\exists i \in [K],\ \exists M \in \mathbb{N},\ \exists N \in \mathbb{N}: \, \forall n \geq N:\ N_{n,i} = M
	\]
	As \(h(n) \to +\infty\) we can select \(N\) such that \(\forall n \geq N: \, h(n) \geq M \). Now consider \( n \geq N \), we have
	\begin{align*}
		N_{n,i} \leq h(n) 
		& \Rightarrow \ i \in U_n \\
		& \Rightarrow U_n \neq \emptyset \\
		& \Rightarrow I_{n+1} \in \arg\min_{j \in U_n} N_j(n) \\
		& \Rightarrow N_{n, I_{n+1}} \leq N_{n,i} = M
	\end{align*}
	Since \( I_{n+1} \) has been selected at round \( n+1 \), we have \(N_{n+1, I_{n+1}} = N_{n, I_{n+1}} + 1 \leq M + 1\), and for \( j \neq I_{n+1} \) we have \(N_{n+1,j} = N_{n,j}\). 
	
	Now define \(D(n) := \sum_{j=1}^K \left(M + 1 - N_j(n)\right)^+\), then we have
	\begin{align*}
		D(n+1) - D(n)
		&= \sum_{j=1}^K \left(M + 1 - N_{n+1, j}\right)^+ 
		- \sum_{j=1}^K \left(M + 1 - N_{n,j}\right)^+ \\
		&= \left(M + 1 - N_{n+1, I_{n+1}}\right)^+ 
		- \left(M + 1 - N_{n, I_{n+1}}\right)^+ \\
		&= \left(M - N_{n, I_{n+1}}\right)^+ 
		- \left(M - N_{n, I_{n+1}} + 1\right)^+ \\
		&= -1.
	\end{align*}
	Thus \(D\left(n + D(n) + 1\right) = -1\) which is a contradiction as \(D \geq 0\). Therefore,
	\[
	\forall i \in [K], \quad N_{n,i} \xrightarrow[n\to\infty]{} +\infty.
	\]
	From the result above we have that \(\hat{\theta}_{n,k}-\theta_k \to 0\) using the law of large numbers. Additionally, by using the Law of Iterated Logarithm for martingales we obtain that
	\[\hat{\Theta}_n - \Theta = O\left(\sqrt{\frac{\log\log(N_{n,k})}{N_{n,k}}}\right) \quad a.s.\]
	In the same spirit of the proof of Theorem \ref{thm:LLN}, we use the continuity of \(\rho\) and Lemma \ref{lem:convergence} to obtain that \(\lim\limits_{n \to +\infty} \frac{N_{n,k}}{n} = \lim\limits_{n \to +\infty} \hat{\rho}_{n,k} = v_k\)
\end{proof}

\subsection{Asymptotic Distribution of the Pearson Test Statistic}
\begin{proof}[Proof of Lemma \ref{lem:pearson}]
	Let \(Y_n := \left( \sqrt{N_{n,k}}(\hat{\theta}_{n,k} - \theta)\right)_{k \in [K]}\) and \(s_n:=\left(\sqrt{\omega_{n,k}}\right)_{k \in [K]}\) where \(\omega_{n,j}:=\frac{N_{n,j}}{n}\), then we rewrite \(\mathcal{X}_n = \frac{1}{\tilde{\theta}_n (1 - \tilde{\theta}_n)} 
	\sum_{k=1}^K N_{n,k} (\hat{\theta}_{n,k} - \tilde{\theta}_n)^2\)
	and we have
	\begin{align*}
		\sqrt{N_{n,k}} (\tilde{\theta}_n - \theta)
		&= \sqrt{N_{n,k}} \sum_{j=1}^K \omega_{n,j} (\hat{\theta}_{n,j} - \theta)\\
		&= \sqrt{\frac{N_{n,k}}{n}} \sum_{j=1}^K \sqrt{\omega_{n,j}}\sqrt{N_{n,j}} (\hat{\theta}_{n,j} - \theta)\\
		&= \sqrt{\omega_{n,k}} \sum_{j=1}^K \sqrt{\omega_{n,j}} \, [Y_n]_j
	\end{align*}
	so 
	\begin{align*}
		\sqrt{N_{n,k}} (\hat{\theta}_{n,k} - \tilde{\theta}_n) &= \sqrt{N_{n,k}} (\hat{\theta}_{n,k} - \theta) 
		- \sqrt{N_{n,k}} (\tilde{\theta}_n - \theta)\\
		&= [Y_n - s_n s_n^\top Y_n]_k
	\end{align*}
	then \(\mathcal{X}_n = \frac{1}{\tilde{\theta}_n (1 - \tilde{\theta}_n)} 
	Y_n^\top (I - s_n s_n^\top) Y_n\)
	, and we have from Lemma \ref{lem:componentwise} that \(\frac{1}{\sqrt{\theta(1-\theta)}} Y_n \xrightarrow{\mathcal{D}} \mathcal{N}(0, I_K)\) and from Theorem \ref{thm:LLN} we have \(\tilde{\theta}_n \to \theta\) and \(I - s_n s_n^\top \to I - s s^\top
	\) with \(s := (\sqrt{\nu_1}, \dots, \sqrt{\nu_K})\). So as $I - s s^\top$ is an orthogonal projection matrix with rank $K-1$, we obtain by Slutsky argument that \(\mathcal{X}_n \xrightarrow{\mathcal{D}} \chi^2_{K-1}\)
\end{proof}

		\section{Technical Lemmas}\label{appnC}
		\subsection{Probabilistic Technical Lemmas}
\begin{lemma}\label{lem:mart-1}
	Suppose $\{M_n, \mathcal{F}_n; n \geq 1\}$ is a martingale with $\mathbb{E}|\Delta M_n|^2 \leq C_0$, where $\Delta M_n = M_n - M_{n-1}$ and $\{\mathcal{F}_n\}$ is a filter of sigma fields with $\mathcal{F}_1 \subset \mathcal{F}_2 \subset \cdots$. Then, for any $L > 4$,
	\begin{align}
		\mathbb{E} \left[ \max_{m \leq L} |M_n - M_{n - m}| \right] 
		&\leq 4\sqrt{C_0 L}\label{eq:A1}\\
		\mathbb{E} \left[ \max_{m : L \leq m \leq n} \frac{|M_n - M_{n - m}|}{m} \right] 
		&\leq 12 \sqrt{\frac{C_0}{L}} \label{eq:A2}
	\end{align}
\end{lemma}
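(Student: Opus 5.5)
The plan is to prove \eqref{eq:A1} from Doob's $L^2$ maximal inequality, and then to obtain \eqref{eq:A2} by a geometric peeling over blocks of $m$. The ratio of the blocks will be tuned so that the constant comes out exactly as $12$. Throughout, set $M_0:=0$, so that $M_n-M_{n-m}$ is defined for all $m\le n$, and cap all ranges of $m$ at $n$. The basic tool is orthogonality of martingale increments:
\[
\mathbb{E}\big[(M_b-M_a)^2\big]=\sum_{j=a+1}^{b}\mathbb{E}\big[(\Delta M_j)^2\big]\le C_0\,(b-a).
\]

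\emph{Step 1: a sharper form of \eqref{eq:A1}.} I will show that for every $T\ge 1$ (not necessarily an integer),
\[
\mathbb{E}\Big[\max_{m\le T}|M_n-M_{n-m}|\Big]\le 3\sqrt{C_0T}.
\]
Let $T'=\min(\lfloor T\rfloor,n)$. For $m\le T'$, split
\[
|M_n-M_{n-m}|\le |M_n-M_{n-T'}|+|M_{n-m}-M_{n-T'}|.
\]
The family $j\mapsto M_j-M_{n-T'}$, for $j=n-T',\dots,n$, is a forward martingale started at $0$. Hence Doob's $L^2$ inequality together with Cauchy--Schwarz gives
\[
\mathbb{E}\Big[\max_{n-T'\le j\le n}|M_j-M_{n-T'}|\Big]\le 2\sqrt{\mathbb{E}\big[(M_n-M_{n-T'})^2\big]}\le 2\sqrt{C_0T}.
\]
The first term satisfies $\mathbb{E}|M_n-M_{n-T'}|\le\sqrt{C_0T}$. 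Adding the two bounds yields $3\sqrt{C_0T}$. Taking $T=L$ gives $3\sqrt{C_0L}\le 4\sqrt{C_0L}$, which is \eqref{eq:A1}.

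\emph{Step 2: peeling for \eqref{eq:A2}.} Fix a ratio $r>1$ and cover $\{m: L\le m\le n\}$ by the blocks $B_i=[r^iL,\,r^{i+1}L)$ for $i\ge 0$. For $m\in B_i$,
\[
\frac{|M_n-M_{n-m}|}{m}\le \frac{1}{r^iL}\max_{m'\le r^{i+1}L}|M_n-M_{n-m'}|.
\]
The maximum over all $m\ge L$ is bounded by the sum over $i$ of these block maxima. Applying Step 1 with $T=r^{i+1}L$ gives
\[
\mathbb{E}\Big[\max_{L\le m\le n}\frac{|M_n-M_{n-m}|}{m}\Big]\le \sum_{i\ge0}\frac{3\sqrt{C_0r^{i+1}L}}{r^iL}=3\sqrt r\,\sqrt{\tfrac{C_0}{L}}\sum_{i\ge0}r^{-i/2}=\frac{3r}{\sqrt r-1}\sqrt{\tfrac{C_0}{L}}.
\]
Writing $s=\sqrt r$, the factor $s^2/(s-1)$ is minimised at $s=2$, where it equals $4$. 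So choosing $r=4$ gives exactly $12\sqrt{C_0/L}$, which is \eqref{eq:A2}.

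The main obstacle is the constant. Dyadic blocks ($r=2$) combined with the cruder constant $4$ from \eqref{eq:A1} only give roughly $19$. The two ingredients that make $12$ work are the intermediate constant $3$ from Step 1, obtained by pivoting at $n-T'$ so that Doob's inequality applies to a forward martingale, and the quaternary blocks $r=4$. Two bookkeeping points remain. Non-integer block endpoints are handled because Step 1 holds for real $T$. Blocks that extend beyond $n$ only enlarge the bound, and blocks starting beyond $n$ can simply be dropped. The hypothesis $L>4$ is not actually needed by this argument.
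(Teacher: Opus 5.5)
Your proof is correct. It follows the same skeleton as the paper's: Doob's maximal inequality for the forward martingale obtained by pivoting at $n-L$, then geometric peeling in $m$ with the first bound applied on each block. Your bookkeeping differs in two places, and these are exactly what make the stated constants hold.

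For \eqref{eq:A1}, the paper bounds the maximum by $2\max_{k\le L}|S_k|$ with $S_k=M_{n-L+k}-M_{n-L}$. It then uses the tail form of Doob's inequality for $S_k^2$ and integrates $\min(1,4LC_0/x^2)$, which gives $4\sqrt{C_0L}$. You instead keep the pivot term $|M_n-M_{n-T'}|$ outside the maximum and use Doob's $L^2$ inequality with Cauchy--Schwarz, which gives $3\sqrt{C_0T}$. Your cap $T'=\min(\lfloor T\rfloor,n)$ also covers non-integer $T$ and $T>n$. The paper needs this implicitly when it applies \eqref{eq:A1} with $L=2^j$.

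For \eqref{eq:A2}, the paper peels over dyadic blocks $[2^{j-1},2^j)$ with $j\ge\log_2 L$. Its final sum $4\sum_{j\ge\log_2 L}\sqrt{C_0}/\sqrt{2^{j-2}}$ equals $8(2+\sqrt2)\sqrt{C_0/L}\approx 27\sqrt{C_0/L}$ when $L$ is a power of $2$. So the claimed $12\sqrt{C_0/L}$ does not follow from its own chain. In addition, a factor $\sqrt2$ in its first line is dropped in a step that is not an inequality. Your blocks anchored at $L$ with ratio $4$, together with the constant $3$, give exactly $3\cdot 4=12$, and you are right that $L>4$ plays no role.

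The discrepancy is harmless downstream, since Lemma~\ref{lem:U-ineq} only uses the orders $O_p(\sqrt{L_n})$ and $O_p(1/\sqrt{L_n})$ via Lemma~\ref{lem:expectation-to-O}. Still, your argument is the one that actually proves the lemma with the constants as stated.
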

\begin{proof}
	\begin{enumerate}[label=(\roman*)]
		We first prove the first inequality, we have 
		\begin{align}
			\sup_{1 \le m \le L} |M_n - M_{n-m}| 
			&\le \sup_{1 \le m \le L} |M_n - M_{n-L}| + |M_{n-m} - M_{n-L}|\\
			&= |M_n - M_{n-L}| + \sup_{1 \le m \le L} |M_{n-m} - M_{n-L}|\notag\\
			&\le 2 \sup_{k \in \{n-L+1, \dots, n\}} |M_k - M_{n-L}|\notag\\
			&= 2 \sup_{k \in \{1, \dots, L\}} |S_k|\label{eq:doob-2}
		\end{align}
		where \(S_k := M_{n-L+k} - M_{n-L}.\) 
		
		Define now the new forward filtration \(\mathcal{G}_k := \mathcal{F}_{n-L+k}\). 
		We have 
		\begin{align*}
			\mathbb{E}[S_k \mid \mathcal{G}_{k-1}] 
			&= S_{k-1} + \mathbb{E}[\Delta S_k\mid \mathcal{G}_{k-1}]\\
			&= S_{k-1} + \mathbb{E}[M_{n-L+k} - M_{n-L+k-1} \mid \mathcal{G}_{k-1}]\\
			&= S_{k-1}
		\end{align*}
		so \(\{S_k\}\) is a \(\mathcal{G}_k\)-martingale. Then \(\{S_k^2\}\) is a sub-martingale. Applying Doob's maximal inequality (see Corollary 2.1.6 in \cite{marc-yor}), we obtain
		\begin{align}
			\mathbb{P}\!\left( \sup_{k \in \{1,\dots,L\}} S_k^2 \ge x^2 \right)
			&\le \frac{\mathbb{E}[S_L^2]}{x^2}\notag\\
			\implies \mathbb{P}\!\left( \sup_{k \in \{1,\dots,L\}} |S_k| \ge x \right)
			&\le \frac{\mathbb{E}[S_L^2]}{x^2}\label{eq:doob-1}
		\end{align}
		\(\{S_k\}\) being a martingale gives that \(\{\Delta S_k\}\) are orthogonal,  
		i.e. \(\operatorname{Cov}(\Delta S_k, \Delta S_j)=0\). So:
		\begin{align*}
			\mathbb{E}[S_L^2]
			&= \sum_{k=1}^L \mathbb{E}[\,|\Delta S_k|^2\,]\\
			&= \sum_{k=1}^L \mathbb{E}\!\left[\,|\Delta M_{n-L+k}|^2\,\right]\le L C_0.		
		\end{align*}	
		So from \eqref{eq:doob-2} and \eqref{eq:doob-1} we obtain 
		\[
		\mathbb{P}\!\left( \sup_{m \in \{1,\dots,L\}} |M_n - M_{n-m}| \ge x \right)
		\le 4\frac{LC_0}{x^2}
		\]	
		Then
		\begin{align*}
			\mathbb{E}\left[\sup_{m \in \{1,\dots,L\}} |M_n - M_{n-m}|\right] &= \int_{0}^{+\infty}\mathbb{P}\!\left( \sup_{m \in \{1,\dots,L\}} |M_n - M_{n-m}| \ge x \right) dx\\
			&\leq \int_{0}^{2\sqrt{LC_0}}1 \dot dx + 4\int_{2\sqrt{LC_0}}^{+\infty}\frac{LC_0}{x^2} \dot dx = 4\sqrt{LC_0}
		\end{align*}
		
		\item \begin{align*}
			\mathbb{E} \left[ \max_{L \leq m \leq n} \frac{|M_n - M_{n - m}|}{m} \right]
			&\leq \sum_{\log_2(L) \leq j \leq \log_2(n)} 
			\frac{1}{2^{j - 1}} \sqrt{2} \, \mathbb{E} \left[ 
			\max_{2^{j - 1} \leq m < 2^j} |M_n - M_{n - m}| \right] \\
			&\leq \sum_{j \geq \log_2(L)} 
			\frac{1}{2^{j - 1}} \, \mathbb{E} \left[ 
			\max_{m < 2^j} |M_n - M_{n - m}| \right] \\
			&\leq \sum_{j \geq \log_2(L)} 
			\frac{4\sqrt{C_0 \cdot 2^j}}{2^{j - 1}} \\
			&= 4\sum_{j \geq \log_2(L)} \frac{\sqrt{C_0}}{\sqrt{2^{j - 2}}}
			\leq 12 \sqrt{\frac{C_0}{L}}	
		\end{align*}
	\end{enumerate}
	
\end{proof}

\begin{lemma}\label{lem:expectation-to-O}
	Let $(u_n)_{n\ge1}$ be a sequence of positive numbers and let $\{X_n\}_{n\ge1}$ be real-valued random variables such that there exists $C>0$ for which \(\forall n \geq 1, \ \mathbb{E}|X_n|\le C u_n\). 
	Then 
	\[X_n = O_p(u_n).\]
\end{lemma}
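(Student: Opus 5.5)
This follows from Markov's inequality applied directly to the defining condition of $O_p$. I will verify the definition stated in Section~\ref{sec:notations}: for every $\varepsilon>0$ there must exist $M>0$ with $\limsup_{n\to\infty}\mathbb{P}(|X_n|/u_n > M)\le \varepsilon$. I will in fact obtain the stronger uniform statement that the bound holds for every $n$, not only in the limsup.

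Fix $\varepsilon>0$ and set $M := C/\varepsilon$. For each $n\ge1$, Markov's inequality applied to the nonnegative random variable $|X_n|$ (which is integrable by hypothesis) gives
\[
\mathbb{P}\!\left(\frac{|X_n|}{u_n} > M\right) = \mathbb{P}\big(|X_n| > M u_n\big) \le \frac{\mathbb{E}|X_n|}{M u_n} \le \frac{C u_n}{M u_n} = \frac{C}{M} = \varepsilon .
\]
The division by $u_n$ is legitimate because $u_n>0$. Since this bound holds for every $n$, it also holds for the $\limsup$, and so $X_n = O_p(u_n)$.

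I do not expect any real obstacle. The only points needing care are that $u_n$ is strictly positive, so the normalisation makes sense, and that the constant $C$ does not depend on $n$. This independence is what lets a single $M$ work for all $n$, and it is exactly how the lemma is used in Lemma~\ref{lem:U-ineq}, with $u_n=\sqrt{L_n}$ or $u_n = 1/\sqrt{L_n}$ and constants coming from Lemma~\ref{lem:mart-1}.
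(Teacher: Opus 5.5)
Your proof is correct and is essentially the paper's argument: Markov's inequality applied to $|X_n|$ gives $\mathbb{P}(|X_n| > M u_n) \le C/M$ for every $n$, and you take $M = C/\varepsilon$ (the paper takes $M \ge C/\varepsilon$). As in the paper, you get the bound uniformly in $n$, which is stronger than the $\limsup$ condition in the definition of $O_p$.
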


\begin{proof}
	By Markov's inequality applied to the non-negative random variable $|X_n|$,
	\[
	\mathbb{P}\big(|X_n|> M u_n\big)
	\;\le\; \frac{\mathbb{E}|X_n|}{M u_n}
	\;\le\; \frac{C}{M}.
	\]
	Given $\varepsilon>0$, choose $M\ge C/\varepsilon$. Then, for all $n$,
	\[
	\mathbb{P}\big(|X_n|> M u_n\big)\le \varepsilon.
	\]
	This is precisely the definition of $X_n=O_p(u_n)$.
\end{proof}

\begin{lemma}\label{lem:chebychev-mart}
	Suppose  \(\{X_n\}_{n \geq 1}\) is a square-integrable martingale with \(\mathbb{E}[X_1]=0\). Then for \(\lambda > 0\):
	\[
	\mathbb{P}\left(\sup_{1\leq i \leq n} X_i \geq \lambda\right)\leq \frac{\mathbb{V}ar(X_n)}{\mathbb{V}ar(X_n)+ \lambda^2}
	\]
\end{lemma}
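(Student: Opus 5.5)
The argument is a one-sided Cantelli inequality for the running maximum, obtained by combining a shifted version of Doob's submartingale maximal inequality with an optimization over the shift. Set $\sigma^2 := \mathbb{V}ar(X_n)$. Since $\mathbb{E}[X_1]=0$ and $(X_i)$ is a martingale, $\mathbb{E}[X_i]=0$ for all $i$, so $\sigma^2 = \mathbb{E}[X_n^2]$.

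Fix a constant $c\geq 0$, to be chosen later, and consider $Y_i := (X_i + c)^2$. Since $x\mapsto (x+c)^2$ is convex and $X_i$ is square-integrable, Jensen's conditional inequality makes $(Y_i)_{i\le n}$ a non-negative submartingale. On the event $\{\sup_{i\le n} X_i \geq \lambda\}$ there is some $i$ with $X_i + c \geq \lambda + c > 0$, hence $Y_i \geq (\lambda+c)^2$. Therefore
\[
\mathbb{P}\Big(\sup_{1\le i\le n} X_i \geq \lambda\Big) \le \mathbb{P}\Big(\max_{1\le i\le n} Y_i \geq (\lambda+c)^2\Big).
\]
Doob's maximal inequality for non-negative submartingales (Corollary 2.1.6 in \cite{marc-yor}, already used in Lemma~\ref{lem:mart-1}) bounds the right-hand side by $\mathbb{E}[Y_n]/(\lambda+c)^2$. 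Using $\mathbb{E}[X_n]=0$, we have $\mathbb{E}[Y_n] = \sigma^2 + c^2$. This gives, for every $c\ge 0$,
\[
\mathbb{P}\Big(\sup_{1\le i\le n} X_i \geq \lambda\Big) \le \frac{\sigma^2 + c^2}{(\lambda + c)^2}.
\]

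It remains to minimize over $c$. The standard choice is $c = \sigma^2/\lambda$. With it, the numerator equals $\sigma^2(\lambda^2+\sigma^2)/\lambda^2$ and the denominator equals $(\lambda^2+\sigma^2)^2/\lambda^2$, so the ratio is $\sigma^2/(\sigma^2+\lambda^2)$, which is the claimed bound.

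No step is a real obstacle; the only subtle point is the trick of shifting by $c$ before squaring. Applying Doob directly to $X_i^2$ only yields the weaker bound $\sigma^2/\lambda^2$. We should also check that the strict positivity $\lambda + c > 0$ is what legitimizes passing from $X_i+c\ge \lambda+c$ to $Y_i \geq (\lambda+c)^2$, and that integrability of $Y_i$ follows from square-integrability.
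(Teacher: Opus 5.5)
Your proof is correct and follows essentially the same route as the paper: shift by a constant, apply Doob's maximal inequality to the nonnegative submartingale $(X_i+c)^2$, and take $c=\mathbb{V}ar(X_n)/\lambda$. Your write-up is in fact cleaner than the paper's. The paper states the first step as an equality, whereas only the inclusion $\{\sup_i X_i\ge\lambda\}\subseteq\{\max_i (X_i+c)^2\ge(\lambda+c)^2\}$ holds. It also calls $X_n^2$ the relevant submartingale instead of $(X_n+c)^2$, and swaps the shift and the threshold inside the expectation.
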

\begin{proof}
	For \(\alpha > 0\) we have
	\[
	\mathbb{P}\left(\sup_{0 \leq k \leq n} X_k \geq c \right) = \mathbb{P}\left(\sup_{0 \leq k \leq n} (X_k + \alpha)^2 \geq (c+\alpha)^2 \right)
	\]
	As \(\{X_n\}\) is a martingale then \(\{X_n^2\}\) is a sub-martingale, so we can apply maximal Doob's inequality (see Theorem 12.4.2 in \cite{jf-le-gall}), which gives
	
	\begin{align*}
		\mathbb{P}\left(\sup_{0 \leq k \leq n} (X_k + \alpha)^2 \geq (c + \alpha) \right) &\leq \frac{\mathbb{E}\left[(X_n + c)^2\right]}{(c + \alpha)^2}\\
		&= \frac{\mathbb{V}ar\left[X_n\right] + \alpha^2}{(c + \alpha)^2}\\
		&= \frac{\mathbb{V}ar\left[X_n\right] + \left(\frac{\mathbb{V}ar\left[X_n\right]}{c}\right)^2}{\left(c + \frac{\mathbb{V}ar\left[X_n\right]}{c}\right)^2} \text{; Taking \(\alpha = \frac{\mathbb{V}ar\left[X_n\right]}{c}\)}\\
		&= \frac{\mathbb{V}ar\left[X_n\right]}{\mathbb{V}ar\left[X_n\right] + c^2}
	\end{align*}
	which proves the claim.		
\end{proof}

\begin{proposition}\label{prop:exp-fam}
	For regular one-parameter exponential families with the parameter of interest $\theta_k$ being the mean parameter, we have $\mathbb{V}ar(\xi_{1,k}) = I_k(\theta_k)^{-1}$.
\end{proposition}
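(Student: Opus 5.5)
We prove Proposition~\ref{prop:exp-fam} by writing the density of a regular one-parameter exponential family in its natural parametrization and computing both sides explicitly. Take
\[
f_\eta(x) = h(x)\exp\big(\eta x - A(\eta)\big), \qquad \eta \in \mathcal{N},
\]
where the natural parameter space $\mathcal{N}$ is open and $A$ is smooth; this is what regularity means here. Since $\int f_\eta = 1$, differentiating under the integral sign is legitimate on the interior of $\mathcal{N}$. Differentiating once and twice gives the cumulant identities
\[
\theta := \mathbb{E}_\eta[\xi] = A'(\eta), \qquad \mathbb{V}ar_\eta(\xi) = A''(\eta).
\]
Assuming the family is non-degenerate, $A''>0$. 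Hence $\eta \mapsto A'(\eta)$ is a smooth strictly increasing bijection onto the mean-parameter space, with smooth inverse $\eta(\theta)$ satisfying
\[
\eta'(\theta) = 1/A''(\eta(\theta)).
\]

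Next we compute the Fisher information of a single observation with respect to the mean parameter $\theta$. The log-likelihood is $\ell(\theta;x) = \eta(\theta)x - A(\eta(\theta)) + \log h(x)$, so the score is
\[
\partial_\theta \ell = \eta'(\theta)\big(x - A'(\eta(\theta))\big) = \frac{x-\theta}{A''(\eta(\theta))}.
\]
It follows that
\[
I(\theta) = \mathbb{E}\big[(\partial_\theta \ell)^2\big] = \frac{\mathbb{V}ar(\xi)}{A''(\eta)^2} = \frac{1}{A''(\eta)} = \frac{1}{\mathbb{V}ar(\xi)}.
\]
This gives $\mathbb{V}ar(\xi_{1,k}) = I_k(\theta_k)^{-1}$ for each arm $k$. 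An equivalent route is the reparametrization rule $I_\theta = I_\eta\,(\eta')^2$ with $I_\eta = A''$.

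The only delicate point is justifying differentiation under the integral sign and the smoothness and invertibility of the map $\eta \mapsto \theta$. Both are standard consequences of $\eta$ lying in the interior of the natural parameter space, where the moment generating function is finite in a neighborhood. I would state these as part of the regularity assumption rather than reprove them.
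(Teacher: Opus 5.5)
Your proposal is correct and follows essentially the same argument as the paper. Both write the family in natural form, use $A'(\eta)=\theta$ and $A''(\eta)=\mathbb{V}ar(\xi)$, and conclude $I(\theta)=1/A''(\eta)$. Your direct chain-rule computation of the score in $\theta$ is just the reparametrization rule $I_\theta = I_\eta\,(d\eta/d\theta)^2$ that the paper invokes, and you note this equivalence yourself.
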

\begin{proof}
	Consider the response of treatment $k$ following a natural exponential family
	\[
	f(y \mid \eta_k)
	=
	\exp\{\eta_k y - A(\eta_k)\} h(y),
	\]
	so that \(\mathbb{E}_{\eta_k}[Y] = A'(\eta_k) \text{ and }
	\mathbb{V}ar_{\eta_k}(Y) = A''(\eta_k).\) If the parameter of interest is the mean \(\theta_k = A'(\eta_k),\)
	and we take $\xi_{1,k}=Y$, then
	\[
	\mathbb{E}[\xi_{1,k}] = \theta_k,
	\qquad
	\mathbb{V}ar(\xi_{1,k}) = A''(\eta_k).
	\]
	Moreover, by the Fisher information reparameterization formula,
	\[
	I_k(\theta_k)
	=
	I_{\eta_k}(\eta_k)
	\left(\frac{d\eta_k}{d\theta_k}\right)^2
	=
	A''(\eta_k)\left(\frac{1}{A''(\eta_k)}\right)^2
	=
	\frac{1}{A''(\eta_k)},
	\]
	which yields \(\mathbb{V}ar(\xi_{1,k}) = I_k(\theta_k)^{-1}.\)
	This includes, for example, Bernoulli, Poisson, exponential, and normal models with known variance.
\end{proof}

\subsection{Analytical Technical Lemmas}
\begin{lemma}\label{lem:convergence}
	Let $(a_n)_{n \geq 1}$ a positive increasing sequence such that $\forall n \in \mathbb{N}: a_n \leq n$ and $\alpha, v \in [0,1]$, $(X_n)_{n \geq 1}$ a real sequence such that $\lim\limits_{n \to +\infty} \frac{X_n}{n} = -\alpha v$, then for any \(\epsilon_n \leq o(1)\) we have
\begin{align*}
		\lim\limits_{n \to +\infty}\left( \epsilon_n + \frac{X_n - X_{a_n}}{n} \right)^+ &= 0 
\end{align*}
\end{lemma}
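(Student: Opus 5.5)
The plan is to write $X_n$ as its linear trend plus a sublinear remainder, observe that the trend contributes a non-positive term, and show that every remaining term vanishes. Define
\[
\delta_n := \frac{X_n}{n} + \alpha v ,
\]
so the hypothesis reads $\delta_n \to 0$ and $X_n = -\alpha v\, n + n\delta_n$. Substituting this for both $X_n$ and $X_{a_n}$ gives
\[
\frac{X_n - X_{a_n}}{n} \;=\; -\alpha v\Big(1 - \frac{a_n}{n}\Big) \;+\; \delta_n \;-\; \frac{a_n}{n}\,\delta_{a_n}.
\]
Since $\alpha, v \in [0,1]$ and $a_n \le n$, the first term is $\le 0$. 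Using $(x+y)^+ \le x^+ + y^+$, we may therefore drop it and bound
\[
0 \;\le\; \Big(\epsilon_n + \frac{X_n - X_{a_n}}{n}\Big)^+ \;\le\; \epsilon_n^+ \;+\; |\delta_n| \;+\; \frac{a_n}{n}\,|\delta_{a_n}| .
\]

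It then suffices to show that each of the three terms tends to $0$. The first does because $\epsilon_n \le o(1)$ means $\limsup_n \epsilon_n \le 0$, hence $\epsilon_n^+ \to 0$. The second does by hypothesis.

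The only step that needs care is the third term, because $a_n$ may stay bounded, in which case $\delta_{a_n}$ need not be small. I would handle this uniformly with an $\eta$-argument rather than by cases. Fix $\eta>0$ and choose $N_\eta$ such that $|\delta_m| \le \eta$ for all $m \ge N_\eta$.
\begin{itemize}
\item If $a_n \ge N_\eta$, then $\frac{a_n}{n}|\delta_{a_n}| \le |\delta_{a_n}| \le \eta$, using $a_n \le n$.
\item If $a_n < N_\eta$, then $\frac{a_n}{n}|\delta_{a_n}| = \frac{|X_{a_n} + \alpha v\, a_n|}{n} \le \frac{C_\eta}{n}$, where $C_\eta := \max_{1\le m < N_\eta} |X_m + \alpha v\, m|$ is a finite constant.
\end{itemize}
Hence $\limsup_n \frac{a_n}{n}|\delta_{a_n}| \le \eta$ for every $\eta > 0$, so the term tends to $0$.

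Combining the three limits gives the claim. Note that this argument does not actually use the monotonicity or positivity of $(a_n)$, only $a_n \le n$. The same computation applies pathwise when $X_n$ and $\epsilon_n$ are random sequences converging almost surely, which is how the lemma is used in the proof of Theorem~\ref{thm:LLN} with $X_n = U_{n,k}$ and $a_n = \ell_{n,k}$.
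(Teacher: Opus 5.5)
Your proof is correct and uses the same mechanism as the paper's: the linear trend contributes $-\alpha v(1-a_n/n)\le 0$ (the paper uses this implicitly through $\alpha v\,a_n/n\le \alpha v$), and the deviation at index $a_n$ is negligible either because $a_n$ is large or because it is divided by $n$. The difference is only organizational --- the paper splits into the cases ``$(a_n)$ bounded, hence eventually constant'' and ``$a_n\to\infty$'', which relies on the monotonicity of $(a_n)$, whereas your per-$n$ split on whether $a_n\ge N_\eta$ handles both at once and, as you note, needs only $a_n\le n$.
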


\begin{proof}	
	\textbf{Case 1:} $(a_n)_{n \in \mathbb{N}}$ is bounded
	
	As \((a_n)\) is increasing sequence then \(a_n\) is stationary so \(\frac{X_{a_n}}{n} \longrightarrow 0\) , which gives
	\begin{align*}
		\lim_{n \to +\infty} \epsilon_n + \frac{X_n - X_{a_n}}{n} = -\alpha v
		&\implies \lim_{n \to +\infty} \left(\epsilon_n + \frac{X_n - X_{a_n}}{n}\right)^+ = 0
	\end{align*}	
	\textbf{Case 2:} $a_n \to +\infty$
	
	In this case we have \(\lim\limits_{n \to +\infty} \frac{X_{a_n}}{a_n} = -\alpha v\), then:
	\begin{align*}
		\forall \varepsilon > 0,\ \exists N \in \mathbb{N},\ \forall n \geq N &,\quad -\varepsilon \leq \frac{X_{a_n}}{a_n} + \alpha v \leq \varepsilon \quad \text{a.s.}\\		
		&\Rightarrow -\varepsilon - \alpha v \leq \frac{X_{a_n}}{a_n} \leq \varepsilon - \alpha v \quad \text{a.s.}\\
		&\Rightarrow -\frac{X_{a_n}}{n} \leq \varepsilon \frac{a_n}{n} + \alpha v \frac{a_n}{n} \leq \varepsilon + \alpha v \quad \text{a.s.}\\
		&\Rightarrow \epsilon_n + \frac{X_n - X_{a_n}}{n} \leq \epsilon_n + \frac{X_n}{n} + \varepsilon + \alpha v \quad \text{a.s.}		
	\end{align*}	
	So as \(\lim\limits_{n \to +\infty} \frac{X_n}{n} + \alpha v = 0\,\) and \(\lim\limits_{n \to +\infty} \epsilon_n \leq 0\), we obtain that \(\lim\limits_{n \to +\infty} \epsilon_n + \frac{X_n - X_{a_n}}{n} \leq 0 \,\), then 
	\[\lim\limits_{n \to +\infty} \left(\epsilon_n +\frac{X_n - X_{a_n}}{n}\right)^+ = 0 \]
\end{proof}

\begin{lemma}\label{lem:mart-2}
	For all \( \ell \geq 1, L < n \) and any sequence \( (Q_n)_{n \geq 0} \in \mathbb{R}^d \), we have
	\[
	\|Q_n - Q_\ell\| \leq (n - \ell) \max_{L \leq m \leq n} \frac{\|Q_n - Q_{n - m}\|}{m} + \max_{m < L} \|Q_n - Q_{n - m}\|.
	\]
\end{lemma}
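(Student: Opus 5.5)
The inequality is deterministic and follows from a case distinction on the gap $m := n-\ell$. As in the places where the lemma is used (there $\ell=\ell_{n,k}\le n$), I take $1\le \ell\le n$, so that $0\le m\le n-1$ and $Q_\ell = Q_{n-m}$. The idea is to note that $\|Q_n-Q_\ell\|$ is exactly the quantity $\|Q_n-Q_{n-m}\|$ for this one $m$. Whichever regime $m$ falls in, it is dominated by the corresponding term on the right-hand side. The other term is nonnegative and can simply be added.

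\emph{Case $m\ge L$.} Then $m$ is an admissible index in $\{L,\dots,n\}$, since $m\le n-1\le n$. Writing $\|Q_n-Q_\ell\| = m\cdot \frac{\|Q_n-Q_{n-m}\|}{m}$ gives
\[
\|Q_n-Q_\ell\| \le (n-\ell)\max_{L\le m'\le n}\frac{\|Q_n-Q_{n-m'}\|}{m'}.
\]
Adding the nonnegative term $\max_{m'<L}\|Q_n-Q_{n-m'}\|$ yields the claim. In this case $m\ge L\ge 1$ whenever $L\ge 1$, so there is no division by zero.

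\emph{Case $m<L$.} Then $m$ is an admissible index in the second maximum, so
\[
\|Q_n-Q_\ell\|\le \max_{m'<L}\|Q_n-Q_{n-m'}\|.
\]
Adding the nonnegative first term, which has the factor $n-\ell\ge 0$, again gives the claim. The degenerate subcase $m=0$ is trivial because the left-hand side vanishes.

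No step here is a real obstacle. The only point needing care is the range of $\ell$. The lemma is stated for all $\ell\ge1$, but it implicitly requires $\ell\le n$ so that $n-\ell$ is a valid lag. I would state this explicitly and remark that it holds in every application, since $\ell_{n,k}\le n$ by Lemma~\ref{lem:preliminary}(i) and Lemma~\ref{lem:preliminary2}(i). I would also note that taking the maxima over index sets including the boundary values ($m'=n$, or $m'=0$) is harmless, because those terms are nonnegative.
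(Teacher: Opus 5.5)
Your proof is correct and takes essentially the same route as the paper: a case split on the gap $n-\ell$ relative to $L$, bounding $\|Q_n-Q_\ell\|=\|Q_n-Q_{n-(n-\ell)}\|$ by the matching maximum and then adding the other, nonnegative, term. Your explicit remarks that $\ell\le n$ is needed (and holds in every application since $\ell_{n,k}\le n$), and that $L\ge 1$ avoids a division by zero, are sensible clarifications that the paper leaves implicit.
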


\begin{proof}
	Let \( d = n - \ell \), we consider two cases. If \( d < L \) then \(\|Q_n - Q_\ell\| = \|Q_n - Q_{n - d}\| \leq \max_{m < L} \|Q_n - Q_{n - m}\|.\)
	Otherwise if \( d \geq L \) then
	\begin{align*}
		\|Q_n - Q_\ell\| &= d.\,\frac{\|Q_n - Q_{n-d}\|}{d} \\		
		& \leq \left(n - \ell\right) \max_{L \leq m \leq n} \frac{\|Q_n - Q_{n-m}\|}{m}
	\end{align*}
	So gathering the two upper bounds for both cases we get:	
	\begin{align*}
		\|Q_n - Q_\ell\| &\leq (n - \ell) \max_{L \leq m \leq n} \frac{\|Q_n - Q_{n - m}\|}{m} + \max_{m < L} \|Q_n - Q_{n - m}\|,
	\end{align*}	
\end{proof}

	\end{appendix}

\end{document}